 \definecolor{BLACK}{gray}{0}
 \definecolor{WHITE}{gray}{1}
 \definecolor{RED}{rgb}{1,0,0}
 \definecolor{GREEN}{rgb}{0,1,0}
 \definecolor{BLUE}{rgb}{0,0,1}
 \definecolor{CYAN}{cmyk}{1,0,0,0}
 \definecolor{MAGENTA}{cmyk}{0,1,0,0}
 \definecolor{YELLOW}{cmyk}{0,0,1,0}
\theoremstyle{plain}
\newtheorem{thm}{\protect\theoremname}
  \theoremstyle{definition}
  \newtheorem{defn}[thm]{\protect\definitionname}
  \theoremstyle{plain}
  \newtheorem{cor}[thm]{\protect\corollaryname}
  \theoremstyle{plain}
  \newtheorem{lem}[thm]{\protect\lemmaname}
\newcommand{\bea}{\begin{eqnarray}}
\newcommand{\eea}{\end{eqnarray}}
\def\bi{\begin{itemize}}
\def\ei{\end{itemize}}
\def\bc{\begin{center}}
\def\ec{\end{center}}
\def\C{\hbox{$\mit I$\kern-.7em$\mit C$}}
\def\R{\hbox{$\mit I$\kern-.6em$\mit R$}}
\def\ket#1{|#1\rangle}
\def\tr{\mathrm{tr}}
\def\ket#1{\left| #1\right>}
\def\bra#1{\left< #1\right|}
\def\bk#1{\langle #1 \rangle}
\newcommand{\proj}[1]{\ket{#1}\bra{#1}}
\definecolor{myurlcolor}{rgb}{0,0,0.7}
\providecommand{\definitionname}{Definition}
\providecommand{\theoremname}{Theorem}
\providecommand{\definitionname}{Definition}
\providecommand{\theoremname}{Theorem}
\providecommand{\definitionname}{Definition}
\providecommand{\theoremname}{Theorem}
\providecommand{\definitionname}{Definition}
\providecommand{\theoremname}{Theorem}
\providecommand{\corollaryname}{Corollary}
\providecommand{\definitionname}{Definition}
\providecommand{\lemmaname}{Lemma}
\providecommand{\theoremname}{Theorem}
\providecommand{\corollaryname}{Corollary}
\providecommand{\definitionname}{Definition}
\providecommand{\lemmaname}{Lemma}
\providecommand{\theoremname}{Theorem}
\providecommand{\corollaryname}{Corollary}
\providecommand{\definitionname}{Definition}
\providecommand{\lemmaname}{Lemma}
\providecommand{\theoremname}{Theorem}
\providecommand{\corollaryname}{Corollary}
\providecommand{\definitionname}{Definition}
\providecommand{\lemmaname}{Lemma}
\providecommand{\theoremname}{Theorem}
\providecommand{\corollaryname}{Corollary}
\providecommand{\definitionname}{Definition}
\providecommand{\lemmaname}{Lemma}
\providecommand{\theoremname}{Theorem}
  \providecommand{\corollaryname}{Corollary}
  \providecommand{\definitionname}{Definition}
  \providecommand{\lemmaname}{Lemma}
\providecommand{\theoremname}{Theorem}
  \providecommand{\corollaryname}{Corollary}
  \providecommand{\definitionname}{Definition}
  \providecommand{\lemmaname}{Lemma}
\providecommand{\theoremname}{Theorem}
  \providecommand{\corollaryname}{Corollary}
  \providecommand{\definitionname}{Definition}
  \providecommand{\lemmaname}{Lemma}
\providecommand{\theoremname}{Theorem}
  \providecommand{\corollaryname}{Corollary}
  \providecommand{\definitionname}{Definition}
  \providecommand{\lemmaname}{Lemma}
\providecommand{\theoremname}{Theorem}
\begin{document}

\title{Genuine quantum coherence}

\author{Julio I. de Vicente}

\affiliation{Departamento de Matemáticas, Universidad Carlos III de Madrid, Avda.\
de la Universidad 30, E-28911, Leganés (Madrid), Spain}

\author{Alexander Streltsov}

\affiliation{Dahlem Center for Complex Quantum Systems, Freie Universität Berlin,
D-14195 Berlin, Germany}

\affiliation{ICFO -- Institut de Ciències Fotòniques, The Barcelona Institute
of Science and Technology, 08860 Castelldefels, Spain}
\begin{abstract}
Any quantum resource theory is based on free states and free operations,
i.e., states and operations which can be created and performed at
no cost. In the resource theory of coherence free states are diagonal
in some fixed basis, and free operations are those which cannot create
coherence for some particular experimental realization. Recently,
some problems of this approach have been discussed, and new sets of
operations have been proposed to resolve these problems. We propose
here the framework of genuine quantum coherence. This approach is
based on a simple principle: we demand that a genuinely incoherent
operation preserves all incoherent states. This framework captures
coherence under additional constrains such as energy preservation
and all genuinely incoherent operations are incoherent regardless
of their particular experimental realization. We also introduce the
full class of operations with this property, which we call fully incoherent.
We analyze in detail the mathematical structure of these classes and
also study possible state transformations. We show that deterministic
manipulation is severely limited, even in the asymptotic settings.
In particular, this framework does not have a unique golden unit,
i.e., there is no single state from which all other states can be
created deterministically with the free operations. This suggests
that any reasonably powerful resource theory of coherence must contain
free operations which can potentially create coherence in some experimental
realization.
\end{abstract}
\maketitle

\section{Introduction}

Quantum mechanics offers a radically different description of reality
that collides with the intuition behind that of classical physics.
At first this was only regarded from the foundational point of view.
However, in recent decades it has been realized that the fundamentally
different features of quantum theory can be exploited to realize revolutionary
applications \cite{Nielsen10}. Quantum information theory has taught
us that quantum technologies can outperform classical ones in a large
variety of tasks such as communication, computation or metrology.
This has led to identify and study nonclassical salient properties
of quantum theory, like entanglement \cite{Plenio2007,HorodeckiRMP09}
or nonlocality \cite{Brunner2014} which stem from the tensor product
structure. This is done in order to understand better from a theoretical
perspective the full potential of quantum resources and, also, to
seek for new paths for applications. Undoubtedly, the superposition
principle, which leads to coherence, is another characteristic trait
of quantum mechanics; however, a rigorous theoretical study of this
phenomenon on the analogy of the aforementioned resources has only
been initiated very recently \cite{Aberg2006,Gour2008,Gour2009,Levi14,Marvian2013,Marvian2014,Baumgratz2014,Winter2015}.
Nevertheless, quantum coherence is the basis of single-particle interferometry
\cite{Oi2003,Aberg2004,Oi2006} and it is believed to play a nontrivial
role in the outstanding efficiency of several biological processes
\cite{Lloyd2011,Li2012,Huelga13,Singh2014}. This grants coherence
the status of a resource and makes necessary to develop a solid framework
allowing to asses and quantify this phenomenon together with the rules
for its manipulation.

Resource theories have proven to be a very successful framework to
build a rigorous and systematic study of the possibilities and limitations
of distinct features of quantum information theory. Originally developed
in the case of entanglement theory \cite{Plenio2007,HorodeckiRMP09},
the conceptual elegance and applicability of this approach has led
to consider in the last years, resource theories for several quantum
features such as frame alignment \cite{Gour2008}, stabilizer computation
\cite{Veitch2014}, nonlocality \cite{DeVicente2014} or steering
\cite{Gallego2015}. In such theories one considers a set of free
states and of free operations. The latter constitutes the set of transformations
that the physical setting allows to implement. Free states must be
mapped to free states under all free operations and they are useless
in this physical setting (it is usually assumed that they can be prepared
at no cost). With this, non-free states can be regarded as resource
states: they allow to overcome the limitations imposed by state manipulation
under the set of free operations. Furthermore, free operations then
provide all possible protocols to manipulate the resource and induce
the most natural ordering among states since the resource cannot increase
under this set of transformations. This allows to rigorously construct
resource measures: these quantifiers must not increase under free
operations. For instance, in entanglement theory the set of free operations
is local (quantum) operations and classical communication (LOCC) while
free states are separable states (entangled states are then the resource
states) and the basic principle behind entanglement measures is that
they must not increase under LOCC \cite{Plenio2007,HorodeckiRMP09}.
Recent literature has set the first steps to build a resource theory
of coherence \cite{Aberg2006,Gour2008,Gour2009,Levi14,Marvian2013,Marvian2014,Baumgratz2014,Winter2015},
which has allowed to study the role of coherence in quantum theory
\cite{Chen2015,Cheng2015,Hu2015,Mondal2015,Deb2015,Du2015b,Bai2015,Bera2015,Liu2015,Xi2014,Zou2014,Prillwitz2014,Karpat2014,Cakmak2015,Du2015a,Hillery2015,Yadin2015a,Bu2015b,Matera2015,Chitambar2015,Chitambar2015a,Ma2015,Streltsov2016},
its dynamics under noisy evolution \cite{Bromley2015,Singh2015b,Mani2015,Singh2015,Chanda2015,Bu2015,Singh2015a,Zhang2015,Allegra2015,Garcia-Diaz2015,Silva2015,Peng2015},
and to obtain new coherence measures \cite{Girolami2014,Du2015,Streltsov2015b,Killoran2015,Girolami2015,Yuan2015,Yao2015,Pires2015,Qi2015,Xu2015,Yadin2015,Rana2016,Zhang2015b,Chandrashekar2016,Chen2016,Chitambar2016,Napoli2016,Piani2016}.

However, there is an ongoing intense debate on how this theory should
be exactly formulated with several alternatives being considered \cite{Aberg2006,Gour2009,Baumgratz2014,Marvian2014a,Marvian2015,Winter2015,Yadin2015a,Chitambar2016,Marvian2016}.
Notice that in the case of entanglement the physical setting clearly
identifies the set of allowed operations: the parties, who might be
spatially separated, can only act locally. However, while in the case
of coherence it is clear that free states should correspond to incoherent
states (see below for definitions), the physical setting does not
impose any clear restriction on what free operations should be. Thus,
any set of operations that map incoherent states to incoherent states
might qualify in principle as a good candidate. It is therefore fundamental
to identify reasonable sets of free operations from the physical and/or
mathematical perspective and to study the different features of such
resource theories. In this paper, we analyze in detail two such sets
and we thoroughly study the possibilities and limitations of the emergent
resource theories for state manipulation.

In the framework of coherence, the physical setting identifies a particular
set of basis states as classical. A state on $H\simeq\mathbb{C}^{d}$
is called incoherent if it is diagonal in the fixed aforementioned
basis $\{|i\rangle\}$ ($i=1,\ldots,d$) and otherwise coherent. In
the standard resource theory of coherence of Baumgratz \emph{et al.}
\cite{Baumgratz2014}, free operations are given by the so-called
incoherent operations. These correspond to those maps that admit an
incoherent Kraus decomposition:
\begin{equation}
\Lambda_{\mathrm{i}}[\rho]=\sum_{i}K_{i}\rho K_{i}^{\dagger},\label{incoherentoperations}
\end{equation}
where, besides the normalization condition $\sum_{i}K_{i}^{\dagger}K_{i}=\openone$,
the Kraus operators fulfill that the unnormalized states $K_{i}\rho K_{i}^{\dagger}$
remain incoherent for every $i$ if $\rho$ is. On an experimental
level, this definition means that a quantum operation \emph{can} be
implemented in an incoherent way. By performing the measurements given
by the aforementioned Kraus operators, no coherence can be created
from an incoherent state even if we allow for postselection. Thus,
this set of operations appears to be a very sensible choice and it
leads to a reasonably rich resource theory \cite{Winter2015}. However,
as mentioned above, no physical reason is known why these operations
should be regarded as free in this setting. This naturally leads to
consider other possibilities either because they seem physically justified
or because they have a convenient structure or properties in a given
setting \cite{Aberg2006,Gour2009,Marvian2014a,Marvian2015,Winter2015,Yadin2015a,Chitambar2016,Marvian2016}.
In particular, one can think of scenarios where the particular forms
of implementing quantum operations are restricted. Moreover, one can consider resource theories of speakable or unspeakable information \cite{Marvian2016}. The former are theories where the quality of the resource is independent of the physical encoding while in the latter it depends on the underlying degrees of freedom. It turns out that in the
resource theory of incoherent operations, coherence is a speakable
resource and it might be desirable to consider theories where coherence
is unspeakable \cite{Marvian2016}. In this paper we introduce and
study the sets of \emph{genuinely incoherent operations (GIO)} and
\emph{fully incoherent operations} \emph{(FIO)}. GIO leads to a resource
theory of unspeakable coherence and is derived from one simple condition,
namely that the operation preserves \emph{all} incoherent states.
We will show in this work that these operations are incoherent irrespectively
of the implementation (i.e.\ of the Kraus decomposition), and exhibit
several interesting properties which are not present for the set introduced
in \cite{Baumgratz2014}. On the other hand the set FIO is the most
general set of operations which are incoherent for all Kraus decompositions
and leads to a theory of speakable coherence.

This paper is structured as follows. In Sec.\ \ref{SecGI} we consider
GIO. We define and motivate this set of operations and we analyze
in detail its mathematical structure. This allows us to derive genuine
coherence measures and to study extensively the possibilities and
limitations of GIO for state manipulation. We show that deterministic
transformations are very constrained in this framework. However, we
show that stochastic transformations have a much richer structure.
We also consider the asymptotic setting, which plays a key role in
resource theories, and show that any general form of distillation
and dilution is impossible. Motivated by these limitations in Sec.\ \ref{SecFI}
we study the set of FIO. We characterize mathematically this kind
of operations, which allows us to prove that this set is a strict
superset of GIO. Although this is reflected in a strictly more powerful
capability for state manipulation, we provide several results suggesting
that FIO are still rather limited. Finally, in Sec.\ \ref{secrelation}
we provide a discussion on the relation of the operations presented
in this work to other alternative incoherent operations presented
in the literature.

\section{Genuinely incoherent operations}

\label{SecGI}

\subsection{Definition and motivation}

\label{secGIdef}

The underlying principle in every resource theory of coherence is
that incoherent states should not be a resource. Thus, lacking a clear
intuition on the operations the physical setting allows to implement,
any set of operations that maps incoherent states to incoherent states
is a good candidate to be regarded as the free operations from the
mathematical point of view. The largest class of maps with this property
has been studied in previous literature and is referred to as maximally
incoherent operations \cite{Aberg2006}. It is natural to consider
the opposite extreme case, i.e\ those operations for which all incoherent
states are fixed points. We define genuinely incoherent (GI) operations
to be quantum operations which have this property of preserving all
incoherent states:
\begin{equation}
\Lambda_{\mathrm{gi}}[\rho_{i}]=\rho_{i}\label{eq:GIO}
\end{equation}
for any incoherent state $\rho_{i}$. Thus, in the GI formalism, incoherent
states are not resourceful in an extreme way: if we are provided with
such a state, we are bound to it and no protocol is possible. In particular,
incoherent states cannot even be transformed to other incoherent states.
From this point of view, in this setting incoherent states are not
free states, i.e.\ they cannot be prepared at no cost with the allowed
set of operations. What grants coherent states the status of a resource
in this case is the fact that these are the only states for which
non-trivial protocols are in principle possible.

One reason to study the resource theory of genuine coherence is that
this is arguably the most contrived theory one can think of at the
level of allowed operations. In this sense, GI manipulation can be
regarded as a building block for protocols in other resource theories
in which the free operations have a richer structure. More importantly,
genuine coherence is an extreme form of a resource theory of \emph{unspeakable}
coherence. Another example of such a theory is the resource theory
of asymmetry \cite{Gour2009,Marvian2014a,Marvian2015}, while the
framework of Baumgratz \emph{et al.} \cite{Baumgratz2014} is a representative
of \emph{speakable} coherence. As discussed in \cite{Marvian2016},
resource theories of speakable coherence are those where the means
of encoding the information is irrelevant. On the other hand, unspeakable
information can only be encoded in certain degrees of freedom. In
this case, coherent states given by a particular superposition of
classical states need not be equally useful as the same superposition
of a different set of classical states. The resource theory of genuine
coherence is an extreme form of unspeakable coherence in the sense
that no coherent state is interchangeable with any other. This might be particularly meaningful from the physical point of view in scenarios where the classical states are constrained by e.g.\ energy preservation rules. Indeed, if the states $\ket{i}$ are the eigenstates of some nondegenerate Hamiltonian, GI operations correspond to energy-preserving operations as defined in \cite{Chiribella2015}. Another example
would be noiseless excitation transport. If the classical states (i.e.\ the
position of the excitation) can be freely mapped to each other, in
the absence of dissipation the process of transport is trivial.

Another interesting feature of GI operations, as we will see in the
next subsection, is that GI maps are incoherent independently of the
Kraus decomposition (cf.\ Eq.\ (\ref{incoherentoperations})). This
is not the case in the standard framework of Baumgratz \emph{et al}.
\cite{Baumgratz2014}. To see this, notice that a quantum operation
admits different Kraus representations as characterized in the following
well-known theorem (see e.\ g.\ \cite{Watrous2011,Holevo2012}).
\begin{thm}
\label{kraus} Two sets of Kraus operators $\{K_{j}\}$ and $\{L_{i}\}$
correspond to Kraus representations of the same map if and only if
there exists a partial isometry matrix $V$ such that
\begin{equation}
L_{i}=\sum_{j}V_{ij}K_{j}.\label{krauseq}
\end{equation}

\end{thm}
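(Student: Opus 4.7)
The plan is to deduce Theorem~\ref{kraus} from the classical fact that two rank-one decompositions of the same positive semidefinite operator are related by a partial isometry, and to carry the translation back and forth through the Choi--Jamiolkowski isomorphism.

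First, I would vectorize operators: to each $K\in\mathcal{B}(H)$ associate the vector $|K\rangle\!\rangle:=(K\otimes\one)|\Omega\rangle$ in $H\otimes H$, where $|\Omega\rangle=\sum_{k}\ket{k}\ket{k}$ is the unnormalized maximally entangled vector. This is a linear bijection satisfying the trace-inner-product identity $\langle\!\langle A|B\rangle\!\rangle=\tr(A^{\dagger}B)$. Associated to any channel $\Lambda$ is its Choi matrix $J(\Lambda):=(\Lambda\otimes\mathrm{id})[\proj{\Omega}]$, which uniquely determines $\Lambda$, and a direct computation shows that any Kraus representation $\Lambda[\rho]=\sum_{j}K_{j}\rho K_{j}^{\dagger}$ produces
\[
J(\Lambda)=\sum_{j}|K_{j}\rangle\!\rangle\langle\!\langle K_{j}|.
\]
Hence $\{K_{j}\}$ and $\{L_{i}\}$ realize the same channel if and only if these two sums of rank-one operators coincide.

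At this point I would invoke the standard ensemble-equivalence lemma (the unnormalized variant of the Hughston--Jozsa--Wootters / GHJW result): two rank-one decompositions of a single positive operator, after padding the smaller family with zero vectors so that the two families have equal size, are related by a unitary $V$, i.e.\ $|L_{i}\rangle\!\rangle=\sum_{j}V_{ij}|K_{j}\rangle\!\rangle$. Restricting to the rows and columns that correspond to the unpadded indices yields exactly the partial isometry in the statement. Inverting the vectorization bijection then gives $L_{i}=\sum_{j}V_{ij}K_{j}$.

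The converse implication is an almost immediate substitution: plugging $L_{i}=\sum_{j}V_{ij}K_{j}$ into $\sum_{i}L_{i}\rho L_{i}^{\dagger}$ and using the partial-isometry identity $\sum_{i}\overline{V_{ij}}V_{ik}=\delta_{jk}$ on the relevant indices collapses the double sum back to $\sum_{j}K_{j}\rho K_{j}^{\dagger}$; the trace-preserving condition is automatically consistent. The only non-routine step is the ensemble-equivalence lemma itself and the careful padding so that $V$ can first be taken square and then truncated to a partial isometry of the declared shape; the remainder of the argument is linearity and bookkeeping.
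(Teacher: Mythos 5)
Your argument is correct, and it is the standard textbook proof: pass to the Choi matrix $J(\Lambda)=\sum_j |K_j\rangle\!\rangle\langle\!\langle K_j|$ via vectorization, observe that two Kraus families define the same channel iff they give the same Choi matrix, and then invoke the equivalence of rank-one decompositions of a fixed positive semidefinite operator (pad, take a unitary, truncate). Note that the paper does not prove this theorem at all --- it quotes it as well known and cites Watrous and Holevo, and your proof is essentially the one found in those references --- so there is nothing in the paper to diverge from. One small precision point: in the converse direction the identity $\sum_i \overline{V_{ij}}V_{ik}=\delta_{jk}$ is the \emph{isometry} condition $V^{\dagger}V=\openone$, not the general partial-isometry condition $V^{\dagger}V=P$ for a projector $P$; as you implicitly acknowledge with ``on the relevant indices,'' the statement only holds if $V^{\dagger}V$ acts as the identity on the span of the indices where the $K_j$ are nonzero (otherwise $V=0$ would be a trivial counterexample), which is exactly what the padding-and-truncation construction in the forward direction delivers.
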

Thus, as an example for the above claim take the single-qubit operation
given by the following Kraus operators:
\begin{eqnarray}
K_{0}=\ket{0}\bra{+} & \,\,\,\,\,\,\mathrm{and}\,\,\,\,\,\, & K_{1}=\ket{1}\bra{-}
\end{eqnarray}
with $\ket{\pm}=(\ket{0}\pm\ket{1})/\sqrt{2}$. These operators are
incoherent as can be seen by noting that $K_{i}\ket{\psi}\sim\ket{i}$
holds true for any pure state $\ket{\psi}$. If we now apply Theorem
\ref{kraus} to the channel defined above with $V=H/\sqrt{2}$ with
the $2\times2$ Hadamard matrix $H$, we get the Kraus operators
\begin{align}
L_{\pm} & =\frac{1}{\sqrt{2}}\left(\ket{0}\bra{+}\pm\ket{1}\bra{-}\right).
\end{align}
Note that the Kraus operators $L_{\pm}$ are not incoherent, which
can be directly checked by applying them to the state $\ket{0}$:
$L_{\pm}\ket{0}=(\ket{0}\pm\ket{1})/2$. Thus, these Kraus operators
convert the incoherent state $\ket{0}$ into one of the maximally
coherent states $\ket{+}$ or $\ket{-}$.

The above example shows that a quantum operation which is incoherent
in one Kraus decomposition is not necessarily incoherent in another
Kraus decomposition. This observation is not surprising, a similar
effect appears also in entanglement theory. There, the set of LOCC
operations is also defined via a certain structure of Kraus operators
which is lost when other Kraus decompositions are considered. However,
this feature is clearly justified by the physical setting: as long
as a map has a Kraus representation of the LOCC form, there are physical
means for the spatially separated parties to implement the corresponding
protocol. On the other hand, the definition of incoherent operations
is rather abstract and we do not have a physical reason that guarantees
that a certain map is implementable. It is certainly admissible to
take the analogy of LOCC and consider maps which have one Kraus representation
that does not create coherence. However, it also seems reasonable
a priori to explore the alternative case in which maps must not create
coherence independently of the Kraus decomposition. This would be
relevant in resource theories of coherence where one is not granted
with the power to choose different particular experimental implementations
of a map.

\subsection{Mathematical characterization}

\label{secGIchar}

In order to understand the potential of a resource theory based on
GI operations, it will be useful to have a more detailed mathematical
description of these operations beyond its mere definition given in
Eq.\ (\ref{eq:GIO}). The framework of \emph{Schur operations} plays
a key role here. A quantum operation $\Lambda$ acting on a Hilbert
space of dimension $d$ is called a Schur operation if there exists
a $d\times d$ matrix $A$ such that
\begin{equation}
\Lambda[\rho]=A\odot\rho.
\end{equation}
Here, $\odot$ denotes the Schur or Hadamard product, i.\ e.\ entry-wise
product for two matrices of the same dimension:
\begin{equation}
(X\odot Y)_{ij}=X_{ij}Y_{ij}.
\end{equation}

The fact that $\Lambda$ is a quantum channel -- and thus a trace
preserving completely positive map -- adds additional constraints
on the matrix $A$ \cite{Watrous2011}:
\begin{itemize}
\item $A$ must be positive semidefinite (PSD),
\item the diagonal elements of $A$ must be $A_{ii}=1$.
\end{itemize}
The following theorem provides a simple characterization of GIO that
will be used throughout this paper.
\begin{thm}
\label{thm:Schur}The following statements are equivalent:
\begin{enumerate}
\item $\Lambda$ is a genuinely incoherent quantum operation, i.\ e.\ $\Lambda[\rho]=\rho$
for every incoherent state $\rho$.
\item Any Kraus representation of $\Lambda$ as in Eq.\ (\ref{incoherentoperations})
has all Kraus operators $\{K_{i}\}$ diagonal.
\item $\Lambda$ can be written as
\begin{equation}
\Lambda[\rho]=A\odot\rho\label{eq:Schur}
\end{equation}
with a PSD matrix $A$ such that $A_{ii}=1$.
\end{enumerate}
\end{thm}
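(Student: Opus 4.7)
The plan is to prove the equivalence cyclically via $2\Rightarrow 3\Rightarrow 1\Rightarrow 2$.

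For $2\Rightarrow 3$, suppose every Kraus operator in some representation is diagonal, say $K_i=\sum_j k_{ij}\ket{j}\bra{j}$. A direct computation yields $(K_i\rho K_i^\dagger)_{jk}=k_{ij}\overline{k_{ik}}\,\rho_{jk}$, so $\Lambda[\rho]_{jk}=A_{jk}\rho_{jk}$ with $A_{jk}=\sum_i k_{ij}\overline{k_{ik}}$. This matrix $A$ is the Gram matrix of the vectors $\mathbf{v}_j=(k_{ij})_i$, hence positive semidefinite. Trace preservation $\sum_i K_i^\dagger K_i=\openone$ forces $\sum_i|k_{ij}|^2=1$ for each $j$, so $A_{jj}=1$. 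This establishes the Schur form. Conversely, the same computation backwards handles the other direction of the correspondence.

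For $3\Rightarrow 1$, an incoherent state $\rho$ has $\rho_{jk}=0$ whenever $j\neq k$, so $(A\odot\rho)_{jk}=A_{jk}\rho_{jk}$ vanishes off-diagonally, while $(A\odot\rho)_{jj}=A_{jj}\rho_{jj}=\rho_{jj}$; hence $\Lambda[\rho]=\rho$.

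The heart of the proof is $1\Rightarrow 2$. Fix any Kraus representation $\{K_i\}$ of $\Lambda$ and let $v_{ij}=K_i\ket{j}$. Applying $\Lambda$ to the incoherent state $\ket{j}\bra{j}$ and using the GI condition gives
\begin{equation}
\sum_i v_{ij}v_{ij}^\dagger=\ket{j}\bra{j}.
\end{equation}
Each summand on the left is a positive semidefinite rank-one (or zero) operator, and the right-hand side is itself rank one. Since a sum of PSD operators can only have rank at least that of any summand's support within the sum's range, every $v_{ij}v_{ij}^\dagger$ must be supported on $\mathrm{span}\{\ket{j}\}$; equivalently $v_{ij}=c_{ij}\ket{j}$ for some $c_{ij}\in\mathbb{C}$. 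Therefore $K_i\ket{j}\propto\ket{j}$ for every basis vector $\ket{j}$, which is exactly the statement that $K_i$ is diagonal in the incoherent basis. Since the argument used only the preservation of pure incoherent states, it applies to \emph{every} Kraus decomposition.

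The main (and really only) obstacle is the linear-algebra step inside $1\Rightarrow 2$: one must recognize that a decomposition of a rank-one PSD operator into a sum of PSD rank-one operators forces each piece to be proportional to the whole. This is a standard fact but must be invoked explicitly; once it is in hand the rest of the theorem reduces to routine manipulations of the Schur product and the Gram matrix identity. Theorem~\ref{kraus} is not even required in this cycle, though it could alternatively be used to deduce condition~2 for all Kraus representations once it is known for one, by showing that the diagonal property is preserved under the partial-isometry change of basis.
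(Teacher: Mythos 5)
Your proof is correct and follows essentially the same route as the paper: the implication $1\Rightarrow 2$ is obtained by applying the map to $\ket{j}\bra{j}$ and forcing each rank-one PSD summand onto $\mathrm{span}\{\ket{j}\}$, $3\Rightarrow 1$ is the same direct check, and $2\Rightarrow 3$ is the Schur/Gram-matrix correspondence, which the paper delegates to Theorem 4.19 of \cite{Watrous2011} while you write it out explicitly. The only substantive difference is that you spell out the linear-algebra fact (the range of each PSD summand lies in the range of the sum) that the paper leaves implicit; this is a welcome clarification, not a change of method.
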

\begin{proof}
Let us start by showing that 1 implies 2. For any GI operation $\Lambda$
and any pure incoherent state $|j\rangle$ it must hold that
\begin{equation}
\Lambda\left[\ket{j}\bra{j}\right]=\sum_{i}K_{i}\ket{j}\bra{j}K_{i}^{\dagger}=\ket{j}\bra{j},
\end{equation}
where $\{K_{i}\}$ are Kraus operators of $\Lambda$. However, this
equality can only hold true if $\ket{j}$ is an eigenstate of every
Kraus operator:
\begin{equation}
K_{i}\ket{j}\propto\ket{j}.\label{eq:GIKraus}
\end{equation}
This shows that every Kraus operator is diagonal. The fact that $3\Rightarrow1$
is straightforward. Any operation defined as in Eq.~(\ref{eq:Schur})
is indeed genuinely incoherent, i.e., it preserves all incoherent
states. It remains to prove that 2 implies 3. This is a direct consequence
of Theorem 4.19 in \cite{Watrous2011}.
\end{proof}
Part 2 of this theorem clearly shows that the set of GI operations
is a strict subset of incoherent operations introduced by Baumgratz
\emph{et al}. \cite{Baumgratz2014}. Furthermore, this also immediately
proves our claim of the previous subsection that every Kraus decomposition
of a GI map does not create coherence. By definition, every GI operation
is \textit{unital}, i.e.\ it preserves the maximally mixed state:
$\Lambda(\openone/d)=\openone/d$. An important example for a GIO
is a convex combination of unitaries diagonal in the incoherent basis:
\begin{equation}
\Lambda_{\mathrm{gi}}[\rho]=\sum_{k}p_{k}U_{k}\rho U_{k}^{\dagger}\label{eq:unital-1}
\end{equation}
with probabilities $p_{k}$ and unitaries $U_{k}$ defined as $U_{k}=\sum_{l}e^{i\phi_{lk}}\ket{l}\bra{l}$.
It is now natural to ask if any GIO can be written in this form. The
characterization provided in Theorem \ref{thm:Schur} allows to study
in great detail the mathematical structure of the set of GIOs and,
in particular, to answer the above question.
\begin{thm}
\label{unitalGIOthm} For qubits and qutrits any GIO is of the form
(\ref{eq:unital-1}). This is no longer true for dimension 4 and above.
\end{thm}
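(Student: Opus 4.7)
The plan is to reformulate the question in terms of the complex \emph{elliptope} $\mathcal{E}_{d}=\{A\in M_{d}(\mathbb{C}):A\geq 0,\,A_{ii}=1\}$. By Theorem~\ref{thm:Schur}, every GIO on a $d$-dimensional system corresponds to a unique $A\in\mathcal{E}_{d}$ acting as $\rho\mapsto A\odot\rho$. A diagonal unitary $U_{k}=\sum_{l}e^{i\phi_{lk}}|l\rangle\langle l|$ conjugates $\rho$ by the Schur multiplier $u_{k}u_{k}^{*}$ with $u_{k}=(e^{i\phi_{1k}},\dots,e^{i\phi_{dk}})^{T}$, and these $u_{k}u_{k}^{*}$ are exactly the rank-one elements of $\mathcal{E}_{d}$. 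Thus an operation has the form (\ref{eq:unital-1}) if and only if its Schur multiplier lies in the convex hull of the rank-one elements of $\mathcal{E}_{d}$, and the theorem reduces to asking for which $d$ this convex hull fills $\mathcal{E}_{d}$.

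For $d=2,3$ I would show that every extreme point of $\mathcal{E}_{d}$ has rank one. Write $A=WW^{*}$ with $W$ a $d\times r$ matrix of full column rank whose rows $w_{i}^{*}$ are unit vectors in $\mathbb{C}^{r}$. A direct block-decomposition argument shows that if $A+H$ and $A-H$ are both PSD for a Hermitian $H$, then $H$ must be supported on the range of $A$, so $H=WBW^{*}$ for some Hermitian $B\in M_{r}(\mathbb{C})$. The unit-diagonal condition $H_{ii}=0$ becomes $w_{i}^{*}Bw_{i}=0$ for $i=1,\dots,d$, a linear system of $d$ real equations on the $r^{2}$-dimensional real space of Hermitian $r\times r$ matrices. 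If $A$ is extreme then the only solution is $B=0$, which forces $r^{2}\leq d$, hence $r=1$ for $d\in\{2,3\}$. Since rank-one correlation matrices are manifestly extreme, Krein--Milman together with compactness and Carath\'eodory's theorem expresses any $A\in\mathcal{E}_{d}$ as a convex combination of rank-one elements, yielding (\ref{eq:unital-1}).

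For $d\geq 4$ I would exhibit a rank-two extreme point of $\mathcal{E}_{4}$, which by extremality cannot be written as a convex combination of rank-one elements. A concrete candidate is the Gram matrix $A_{ij}=w_{i}^{*}w_{j}$ of the four unit vectors $w_{1}=(1,0)^{T}$, $w_{2}=(0,1)^{T}$, $w_{3}=(1,1)^{T}/\sqrt{2}$, $w_{4}=(1,i)^{T}/\sqrt{2}$ in $\mathbb{C}^{2}$. Here $r^{2}=d=4$, so extremality reduces to the injectivity of the map $B\mapsto(w_{i}^{*}Bw_{i})_{i=1}^{4}$ from the four-real-dimensional space of Hermitian $2\times 2$ matrices to $\mathbb{R}^{4}$. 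Writing $B_{11},B_{22}\in\mathbb{R}$ and $B_{12}\in\mathbb{C}$, the four equations $B_{11}=0$, $B_{22}=0$, $\tfrac{1}{2}(B_{11}+B_{22}+2\,\mathrm{Re}\,B_{12})=0$, $\tfrac{1}{2}(B_{11}+B_{22}-2\,\mathrm{Im}\,B_{12})=0$ force $B=0$. For $d>4$ the counterexample lifts by considering $A\oplus\openone_{d-4}\in\mathcal{E}_{d}$: any rank-one decomposition would, upon restriction to the upper-left $4\times 4$ block, produce a rank-one decomposition of $A$, contradicting its extremality in $\mathcal{E}_{4}$.

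The only nontrivial step I foresee is the extremality check for the explicit rank-two matrix in $d=4$. The dimension count $r^{2}\leq d$ already identifies $d=4$ as the borderline case, but actually constructing a rank-two extreme point and verifying that no Hermitian perturbation on the range of $A$ has vanishing diagonal is where the abstract argument meets concrete linear algebra; everything else---the Schur-multiplier reformulation, the perturbation analysis, and the embedding argument---is routine.
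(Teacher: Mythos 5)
Your proof is correct, and it reaches the result by a route that is packaged quite differently from the paper's, even though the decisive dimension count is ultimately the same. The paper (Appendix \ref{sec:MathematicalStructure}) first proves Lemma \ref{thm:extremal}, identifying extremality in the set of GI maps with extremality in the set of unital maps, and then invokes Watrous's criterion (the linear independence of the $|i|^{2}$ matrices $\mathcal{K}_{ij}$ of Eq.~(\ref{eq:K})): for diagonal Kraus operators these collapse to $D_{ij}\oplus D_{ij}$ with $D_{ij}$ diagonal $d\times d$, so $n$ linearly independent Kraus operators force $n^{2}\leq d$ for extremality, and the $d\geq4$ counterexample is an explicit pair $K_{1},K_{2}$ whose associated vectors $\boldsymbol{t},\boldsymbol{u},\boldsymbol{v},\boldsymbol{w}$ are checked to be independent via a determinant. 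You instead work intrinsically with the elliptope of Schur multipliers furnished by Theorem \ref{thm:Schur} and prove the extreme-point structure from scratch via the perturbation argument $H=WBW^{*}$, $w_{i}^{*}Bw_{i}=0$; since the rank $r$ of $A$ equals the number of linearly independent diagonal Kraus operators, your bound $r^{2}\leq d$ is exactly the paper's $n^{2}\leq d$, and your injectivity check for the Gram matrix of $(1,0),(0,1),(1,1)/\sqrt{2},(1,i)/\sqrt{2}$ is the same computation as the paper's determinant, just organized as the adjoint map on Hermitian $2\times2$ matrices. What your version buys is self-containedness (no appeal to the extremality criterion for unital channels, hence no need for Lemma \ref{thm:extremal}), a cleaner counterexample, and a tidier extension to $d>4$ by the direct-sum-and-restrict argument rather than re-verifying independence for padded Kraus operators; what the paper's version buys is that the auxiliary Lemma \ref{thm:extremal} and the link to unital-channel extremality are of independent interest and are stated as standalone structural results. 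The only steps worth writing out carefully in your approach are the ones you flag: that $A\pm H\geq0$ forces $H$ to vanish on $\ker A$ (which follows since $v^{*}(A\pm H)v=\pm v^{*}Hv\geq0$ gives $v^{*}Hv=0$ and then $(A+H)v=0$ by positive semidefiniteness), and the extremality of your rank-two Gram matrix, which you verify correctly.
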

For the proof of the theorem we refer to Appendix \ref{sec:MathematicalStructure}.

In the context of resource theories, one can also be interested in
tasks that, although impossible deterministically, might be implemented
with a certain non-zero probability of success. Interistingly, Theorem
\ref{thm:Schur} can also be generalized to the stochastic scenario.
In this case we need to consider trace non-increasing genuinely incoherent
operations, i.\ e.\ transformations of the form
\begin{equation}
\Lambda_{\mathrm{sgi}}[\rho]=\sum_{i}K_{i}\rho K_{i}^{\dagger},
\end{equation}
where the Kraus operators $K_{i}$ are all diagonal in the incoherent
basis but do not need to form a complete set, i.e., $\sum_{i}K_{i}^{\dagger}K_{i}\leq\openone$.
This means that the process is not deterministic, but occurs with
probability given by $p=\mathrm{Tr}[\sum_{i}K_{i}\rho K_{i}^{\dagger}]$.
We will call such a map \textbf{\emph{s}}\emph{tochastic }\textbf{\emph{g}}\emph{enuinely
}\textbf{\emph{i}}\emph{ncoherent} \emph{(SGI)} operation. It is important
to note that any SGI operation can be completed to a genuinely incoherent
operation by another SGI operation. Thus, a transformation among two
states can be implemented with some non-zero probability of success
if and only if there exists an SGI operation connecting them (up to
normalization). The following theorem generalizes Theorem \ref{thm:Schur}
to the stochastic scenario.
\begin{thm}
A quantum operation $\Lambda$ is SGI if and only if it can be written
as
\[
\Lambda[\rho]=A\odot\rho
\]
with a PSD matrix $A$ such that $0\leq A_{ii}\leq1$. \end{thm}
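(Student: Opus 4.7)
The plan is to mirror the proof of Theorem \ref{thm:Schur} but to keep careful track of the diagonal entries of $A$, since the trace non-increasing condition on SGI maps is precisely what relaxes $A_{ii}=1$ to $0\le A_{ii}\le 1$. I would prove the two directions separately, both by explicit computation with diagonal Kraus operators.

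For the forward direction, suppose $\Lambda$ is SGI with diagonal Kraus operators $K_{i}=\sum_{m}k_{i}^{(m)}\ket{m}\bra{m}$ satisfying $\sum_{i}K_{i}^{\dagger}K_{i}\le\openone$. A direct computation of the matrix elements gives
\begin{equation}
(K_{i}\rho K_{i}^{\dagger})_{mn}=k_{i}^{(m)}\overline{k_{i}^{(n)}}\,\rho_{mn},
\end{equation}
so $\Lambda[\rho]=A\odot\rho$ with $A_{mn}=\sum_{i}k_{i}^{(m)}\overline{k_{i}^{(n)}}$. Viewing each $i$ as contributing a vector $\mathbf{k}_{i}\in\mathbb{C}^{d}$, the matrix $A=\sum_{i}\mathbf{k}_{i}\mathbf{k}_{i}^{\dagger}$ is manifestly PSD. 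Moreover $\sum_{i}K_{i}^{\dagger}K_{i}$ is diagonal with $m$-th entry $\sum_{i}|k_{i}^{(m)}|^{2}=A_{mm}$, so the completeness inequality translates exactly into $0\le A_{mm}\le 1$ for every $m$.

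For the converse, assume $\Lambda[\rho]=A\odot\rho$ with $A$ PSD and $0\le A_{ii}\le 1$. Diagonalize $A=\sum_{i}\lambda_{i}\mathbf{e}_{i}\mathbf{e}_{i}^{\dagger}$ with $\lambda_{i}\ge 0$ and set $\mathbf{v}_{i}=\sqrt{\lambda_{i}}\mathbf{e}_{i}$, so that $A=\sum_{i}\mathbf{v}_{i}\mathbf{v}_{i}^{\dagger}$. Define diagonal Kraus operators $K_{i}=\sum_{m}v_{i}^{(m)}\ket{m}\bra{m}$. Reversing the matrix-element calculation above shows $\sum_{i}K_{i}\rho K_{i}^{\dagger}=A\odot\rho=\Lambda[\rho]$, which in particular gives a legitimate Kraus decomposition and hence complete positivity of $\Lambda$. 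Finally $\sum_{i}K_{i}^{\dagger}K_{i}=\sum_{m}A_{mm}\ket{m}\bra{m}\le\openone$ by the hypothesis on the diagonal of $A$, so $\Lambda$ is SGI.

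There is no real obstacle here; the argument is essentially bookkeeping once one spots that the diagonal of $A$ is precisely $\sum_{i}K_{i}^{\dagger}K_{i}$ in the incoherent basis. The only point requiring a little care is invoking the spectral decomposition (rather than appealing directly to Theorem 4.19 of \cite{Watrous2011} as in the deterministic case) in order to extract Kraus vectors for a PSD $A$ whose diagonal entries are not constrained to be $1$.
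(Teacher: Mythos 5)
Your proof is correct and follows essentially the same route as the paper: the paper delegates the equivalence between diagonal-Kraus maps and Schur multipliers with PSD $A$ to Proposition 4.17 and Theorem 4.19 of Watrous's notes and then checks the trace condition on the diagonal of $A$, whereas you simply prove that cited equivalence inline (the Gram-vector computation in one direction, the spectral decomposition of $A$ in the other). The key observation is identical in both treatments, namely that $\sum_{i}K_{i}^{\dagger}K_{i}=\mathrm{diag}(A_{11},\ldots,A_{dd})$, so the trace non-increasing condition is exactly $0\leq A_{ii}\leq1$.
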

\begin{proof}
Proposition 4.17 and theorem 4.19 in \cite{Watrous2011} establish
the equivalence of Schur maps with a PSD matrix $A$ and maps with
diagonal Kraus operators independently of whether the maps are trace-preserving
or not. SGI maps correspond to the case of trace non-increasing maps.
Thus, it only remains to check that this condition is fulfilled if
and only if $0\leq A_{ii}\leq1$ $\forall i$. First of all it must
hold that $A_{ii}\geq0$ $\forall i$ in order for the matrix to be
PSD. Then, on the one hand, $A_{ii}\leq1$ $\forall i$ is clearly
sufficient for the Schur map to be trace non-increasing. On the other
hand, looking at the action of the map on the states $\{|i\rangle\langle i|\}$
the bound is also found to be necessary.
\end{proof}
The power of the above theorem lies in the fact that it gives a simple
characterization of all SGI operations, which will be very useful
when we study stochastic state transformations in Sec.\ \ref{GImanipulationsec}.

\subsection{\label{coherenceranksec}Coherence rank and coherence set}

In entanglement theory \cite{HorodeckiRMP09}, local unitaries are
invertible local operations, and states related by local unitaries
have the same amount of entanglement. Thus, for most problems concerning
bipartite pure-state entanglement, it is sufficient to consider the
Schmidt coefficients of the corresponding states.

In direct analogy, we notice that diagonal unitaries are invertible
GI operations. Hence, for any measure of genuine coherence, states
related by diagonal unitaries are equally coherent. Thus, without
loss of generality, we can restrict our considerations to pure states
\begin{equation}
|\psi\rangle=\sum_{i}\psi_{i}|i\rangle
\end{equation}
such that $\psi_{i}\geq0$. Obviously, a pure state is incoherent
if and only if $\psi_{i}=1$ for some $i$.

In analogy to the Schmidt rank in entanglement theory~\cite{HorodeckiRMP09},
one can define the \emph{coherence rank} of a pure state $r(\ket{\psi})$
as the number of basis elements for which $\psi_{i}\neq0$~\cite{Killoran2015}.
The coherence rank, like its analogous in entanglement theory, provides
useful information about the coherence content of a state and constrains
the possible transformations among resource states. For instance,
the coherence rank cannot increase under incoherent operations \cite{Winter2015}.
As we will see later, one particularity of genuine coherence is the following. It is not only relevant the coherence rank but also for which basis elements a state has zero components. We encode this information in
the \emph{coherence set}.
\begin{defn}
\label{def:coherence-set}The \emph{coherence set }$R(\psi)$ of $|\psi\rangle=\sum_{i=1}^{d}\psi_{i}|i\rangle$
denotes the subset of $\{1,2,\ldots,d\}$ for which $\psi_{i}\neq0$.
\end{defn}
The coherence set captures one of the crucial differences between
the formalism of GI operations and that of incoherent operations of
\cite{Baumgratz2014}. In the latter, incoherent states are exchangeable.
However, diagonal unitaries do not allow to permute basis elements
and by its very definition an incoherent state cannot be transformed
by GI operations into a different incoherent state. Thus, the relevance
of the coherence set arises from the fact that we are dealing with
a resource theory of unspeakable coherence. Unless otherwise stated,
in the following when we start with a state $|\psi\rangle=\sum_{i}\psi_{i}|i\rangle$
it should be assumed that all sums go over the elements of $R(\psi)$.
Finally, we will always use the notation $\rho_{\psi}$ for the density
matrix corresponding to the pure state $|\psi\rangle$ (i.\ e.\ $\rho_{\psi}=|\psi\rangle\langle\psi|$).

\subsection{Quantifying genuine coherence}

\label{secmeasures}

Having introduced the framework of genuine coherence, we will provide
methods to quantify the amount of genuine coherence in a given state.
For this, we will follow established notions for entanglement and
coherence quantifiers \cite{Vedral1997,Vedral1998,HorodeckiRMP09,Baumgratz2014}.

A measure of genuine coherence $G$ should have at least the following
two properties.
\begin{itemize}
\item[(\textbf{G1})] Nonnegativity: $G$ is nonnegative, and zero if and only if the state
$\rho$ is incoherent.
\item[(\textbf{G2})] Monotonicity: $G$ does not increase under GI operations, $G(\Lambda_{\mathrm{gi}}[\rho])\leq G(\rho)$.
\end{itemize}
It is instrumental to compare the above conditions to the corresponding
conditions in entanglement theory \cite{Vedral1997,Vedral1998,HorodeckiRMP09}.
There, the condition corresponding to G2 implies that an entanglement
measure does not increase under LOCC. This condition and nonnegativity
are regarded as the most fundamental conditions for an entanglement
measure \cite{HorodeckiRMP09}.

The following two conditions will be regarded as desirable but less
fundamental.
\begin{itemize}
\item[(\textbf{G2'})] Strong monotonicity: $G$ does not increase on average under the
action of GI operations for any set of Kraus operators $\{K_{i}\}$,
i.e., $\sum_{i}q_{i}G(\sigma_{i})\leq G(\rho)$ with probabilities
$ $$q_{i}=\mathrm{Tr}[K_{i}\rho K_{i}^{\dagger}]$ and states $\sigma_{i}=K_{i}\rho K_{i}^{\dagger}/q_{i}$.
\item[(\textbf{G3})] Convexity: $G$ is a convex function of the state, $G(\sum_{i}p_{i}\rho_{i})\leq\sum_{i}p_{i}G(\rho_{i})$.
\end{itemize}
The entanglement equivalent of G2' states that the entanglement measure
does not increase on average under selective LOCC. This condition
as well as convexity are not mandatory for a good entanglement measure
\cite{HorodeckiRMP09}. Following the notion from entanglement theory,
we will consider conditions G1 and G2 to be more fundamental than
G2' and G3. A measure which fulfills conditions G1, G2, and G2' will
be called \emph{genuine coherence monotone}. Additionally, the corresponding
measure (monotone) will be called \emph{convex} if condition G3 is
fulfilled as well. Note that G2' and G3 in combination imply G2.

Since the set of GI operations is a subset of general incoherent operations,
any coherence monotone in the sense of Baumgratz \emph{et al}. is
also a genuine coherence monotone. Examples for such monotones are
the relative entropy of coherence, the $l_{1}$-norm of coherence,
and the geometric coherence \cite{Baumgratz2014,Streltsov2015b}.
However, it is possible that some quantities which do not give rise
to a good coherence measure in the framework of Baumgratz \emph{et
al}. are still good measures of genuine coherence. This is indeed
the case for the Wigner-Yanase skew information \cite{Wigner1963}:
$S_{H}(\rho)=-\frac{1}{2}\mathrm{Tr}([H,\sqrt{\rho}]^{2})$ with the
commutator $[X,Y]=XY-YX$, and $H$ is some nondegenerate Hermitian
operator diagonal in the incoherent basis. As is shown in Appendix
\ref{WYapp}, $S_{H}$ is a convex measure of genuine coherence, it
fulfills conditions G1, G2, and G3. It remains open if it also fulfills
the condition G2'.

Alternatively, a very general measure of genuine coherence can be
defined as follows:
\begin{equation}
G_{D}(\rho)=\min_{\sigma\in\mathcal{I}}D(\rho,\sigma),\label{eq:Gd}
\end{equation}
where $\mathcal{I}$ is the set of incoherent states and $D$ is an
arbitrary distance which does not increase under unital operations
$\Lambda_{\mathrm{u}}$, i.e., $D(\Lambda_{\mathrm{u}}[\rho],\Lambda_{\mathrm{u}}[\sigma])\leq D(\rho,\sigma)$
for any operation which preserves the maximally mixed state $\Lambda_{\mathrm{u}}(\openone/d)=\openone/d$.
The following theorem shows that $G_{D}$ fulfills the corresponding
conditions.
\begin{thm}
\label{thm:Distance}$G_{D}$ is a measure of genuine coherence, it
satisfies conditions G1 and G2. If $D$ is jointly convex, $G_{D}$
also satisfies G3. \end{thm}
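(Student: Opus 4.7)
The plan is to verify properties G1, G2, and (under the extra hypothesis) G3 directly from the definition of $G_D$, leveraging the two structural facts about GI maps established earlier: they are unital, and they fix every incoherent state pointwise.

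For G1, nonnegativity of $G_D$ is inherited from the fact that $D$ is a distance. Since the set $\mathcal{I}$ is closed, the infimum is attained. If $\rho\in\mathcal{I}$, taking $\sigma=\rho$ gives $G_D(\rho)=0$; conversely, $G_D(\rho)=0$ forces $D(\rho,\sigma)=0$ for some $\sigma\in\mathcal{I}$, hence $\rho=\sigma$ is incoherent. The only mild subtlety is to note that the argument implicitly assumes $D$ is a bona fide distance (separating points); this is standard for any sensible contractive distance like the trace distance or the Bures distance.

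For G2, the key step is to exploit both characterising properties of a GI operation at once. Let $\sigma^{*}$ be an optimizer in the definition of $G_D(\rho)$, so $G_D(\rho)=D(\rho,\sigma^{*})$. Because $\Lambda_{\mathrm{gi}}$ fixes every incoherent state, $\Lambda_{\mathrm{gi}}[\sigma^{*}]=\sigma^{*}\in\mathcal{I}$, so $\sigma^{*}$ is a feasible point for the minimization defining $G_D(\Lambda_{\mathrm{gi}}[\rho])$. Since by Theorem~\ref{thm:Schur} every GIO is unital (the diagonal of $A$ is all ones, so $\Lambda_{\mathrm{gi}}[\openone/d]=\openone/d$), contractivity of $D$ under unital maps yields
\begin{equation}
G_D(\Lambda_{\mathrm{gi}}[\rho]) \leq D(\Lambda_{\mathrm{gi}}[\rho],\sigma^{*}) = D(\Lambda_{\mathrm{gi}}[\rho],\Lambda_{\mathrm{gi}}[\sigma^{*}]) \leq D(\rho,\sigma^{*}) = G_D(\rho).
\end{equation}

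For G3, assume $D$ is jointly convex. Given $\rho=\sum_i p_i\rho_i$, pick optimizers $\sigma_i^{*}\in\mathcal{I}$ with $G_D(\rho_i)=D(\rho_i,\sigma_i^{*})$. The convex combination $\sum_i p_i\sigma_i^{*}$ is again incoherent, because incoherent states form a convex set, so it is feasible in the definition of $G_D(\rho)$. Joint convexity then gives
\begin{equation}
G_D\!\left(\sum_i p_i\rho_i\right) \leq D\!\left(\sum_i p_i\rho_i,\sum_i p_i\sigma_i^{*}\right) \leq \sum_i p_i D(\rho_i,\sigma_i^{*}) = \sum_i p_i G_D(\rho_i).
\end{equation}
There is really no hard step in this proof: the whole argument is essentially an observation that GI maps are precisely the unital maps that also fix every point of $\mathcal{I}$, which is the minimal structural property needed to run the standard distance-based resource monotone argument.
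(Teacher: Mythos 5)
Your proof is correct and follows essentially the same route as the paper's: in each case one takes the optimal incoherent state, uses that a GI map sends it back into $\mathcal{I}$ (you use the slightly stronger fixed-point property $\Lambda_{\mathrm{gi}}[\sigma^{*}]=\sigma^{*}$, the paper only that $\Lambda_{\mathrm{gi}}[\tau]$ is incoherent) together with contractivity under unital maps for G2, and joint convexity plus convexity of $\mathcal{I}$ for G3. Your explicit remarks about attainment of the minimum and about $D$ separating points are harmless additions that the paper leaves implicit.
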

\begin{proof}
The proof that $G_{D}$ satisfies condition G1 follows from the fact
that any distance is nonnegative and zero if and only if $\rho=\sigma$.
For proving G2, let $\tau$ be the closest incoherent state to $\rho$,
i.e., $G_{D}(\rho)=D(\rho,\tau)$. The fact that any genuinely incoherent
operation is unital together with the requirement that $D$ does not
increase under unital maps implies:
\begin{align}
G_{D}(\rho) & =D(\rho,\tau)\geq D\left(\Lambda_{\mathrm{gi}}[\rho],\Lambda_{\mathrm{gi}}[\tau]\right)\geq G_{D}\left(\Lambda_{\mathrm{gi}}[\rho]\right),
\end{align}
where in the last inequality we used the fact that $\Lambda_{\mathrm{gi}}[\tau]$
is incoherent.

We will now show that $G_{D}$ is also convex if the distance $D$
is jointly convex, i.e., if it satisfies
\begin{equation}
D\left(\sum_{i}p_{i}\rho_{i},\sum_{j}p_{j}\sigma_{j}\right)\leq\sum_{i}p_{i}D(\rho_{i},\sigma_{i}).
\end{equation}
For this, let $\tau_{i}$ be the closest incoherent state to $\rho_{i}$:
$G_{D}(\rho_{i})=D(\rho_{i},\tau_{i})$. Then we have
\begin{align}
\sum_{i}p_{i}G_{D}(\rho_{i}) & =\sum_{i}p_{i}D(\rho_{i},\tau_{i})\geq D\left(\sum_{i}p_{i}\rho_{i},\sum_{j}p_{j}\tau_{i}\right)\nonumber \\
 & \geq G_{D}\left(\sum_{i}p_{i}\rho_{i}\right),
\end{align}
which is the desired statement.
\end{proof}
An example for such a distance is the quantum relative entropy $S(\rho||\sigma)=\mathrm{Tr}[\rho\log_{2}\rho]-\mathrm{Tr}[\rho\log_{2}\sigma]$,
and the corresponding measure is known as the relative entropy of
coherence \cite{Baumgratz2014}. As mentioned above, this measure
also satisfies strong monotonicity G2' \cite{Baumgratz2014}.

Remarkably, Theorem \ref{thm:Distance} also holds for all distances
based on Schatten $p$-norms
\begin{equation}
D(\rho,\sigma)=||\rho-\sigma||_{p}
\end{equation}
with the Schatten $p$-norm $\left\Vert M\right\Vert _{p}=(\mathrm{Tr}[(M^{\dagger}M)^{p/2}])^{1/p}$
and $p\geq1$. This follows from the fact that Schatten $p$-norms
do not increase under unital operations \cite{Perez-Garcia2006}.
This result is surprising since Schatten $p$-norms are generally
problematic in quantum information theory. In particular, the attempt
to quantify entanglement via these norms leads to quantities which
can increase under local operations for $p>1$ \cite{Ozaw2000,Streltsov2015a}.
Similar problems arise for other types of quantum correlations such
as quantum discord \cite{Piani2012b,Paula2013}.

For $p=1$ and $p=2$ the corresponding distances are also known as
trace distance and Hilbert-Schmidt distance. In the case of the Hilbert-Schmidt
distance the coherence measure can be evaluated explicitly \cite{Baumgratz2014}:
\begin{equation}
C_{\mathrm{HS}}(\rho)=\left\Vert \rho-\Delta[\rho]\right\Vert _{2},
\end{equation}
where $\Delta[\rho]=\sum_{i}\bk{i|\rho|i}\ket{i}\bra{i}$ denotes
complete dephasing in the incoherent basis. However, for general Schatten
norms $\Delta[\rho]$ is not the closest incoherent state to $\rho$
\cite{Baumgratz2014}.

While the distance-based measure of coherence defined in Eq.~(\ref{eq:Gd})
does not admit a closed expression for a general distance $D$, we
prove now that the following simple quantity is also a valid measure
of coherence:
\begin{equation}
\tilde{G}_{D}(\rho)=D(\rho,\Delta[\rho]).
\end{equation}
In particular, $\tilde{G}_{D}$ satisfies conditions G1 and G2 if
the distance $D$ is contractive under unital operations. To see
this, notice that the distance $D(\rho,\Delta[\rho])$ is nonnegative
and zero if and only if $\rho=\Delta[\rho]$. Hence, G1 is fulfilled.
For condition G2, recall that all Kraus operators of a GI operation
are diagonal in the incoherent basis, and thus any GI operation commutes
with the dephasing operation $\Delta$:
\begin{equation}
\Lambda_{\mathrm{gi}}\left[\Delta\left[\rho\right]\right]=\Delta\left[\Lambda_{\mathrm{gi}}\left[\rho\right]\right].
\end{equation}
It follows that
\begin{align}
\tilde{G}_{D}\left(\Lambda_{\mathrm{gi}}[\rho]\right) & =D\left(\Lambda_{\mathrm{gi}}\left[\rho\right],\Delta\left[\Lambda_{\mathrm{gi}}\left[\rho\right]\right]\right)\\
 & =D\left(\Lambda_{\mathrm{gi}}\left[\rho\right],\Lambda_{\mathrm{gi}}\left[\Delta\left[\rho\right]\right]\right)\nonumber \\
 & \leq D\left(\rho,\Delta[\rho]\right)=\tilde{G}_{D}(\rho),\nonumber
\end{align}
where the inequality follows from the fact that $\Lambda_{\mathrm{gi}}$
is unital. Additionally, $\tilde{G}_{D}$ is convex if the distance
$D$ is jointly convex, which is true for all distances based on Schatten
$p$-norms \cite{Baumgratz2014}. The above claim can be proven directly
via the following calculation:
\begin{align}
\sum_{i}p_{i}\tilde{G}_{D}(\rho_{i}) & =\sum_{i}p_{i}D(\rho_{i},\Delta[\rho_{i}])\geq D\left(\sum_{i}p_{i}\rho_{i},\sum_{j}p_{j}\Delta[\rho_{j}]\right)\nonumber \\
 & =D\left(\sum_{i}p_{i}\rho_{i},\Delta\left[\sum_{j}p_{j}\rho_{j}\right]\right)=\tilde{G}_{D}\left(\sum_{i}p_{i}\rho_{i}\right).
\end{align}
It is worth noticing that it remains unclear if the measures $\tilde{G}_{D}$
are also genuine coherence monotones, i.e., if they satisfy G2' \footnote{Note that quantifiers based on Schatten $p$-norms as defined in Eq.~(\ref{eq:Gd})
do not give rise to coherence monotones in the sense of Baumgratz
\emph{et al}. for $p>1$ \cite{Rana2016}. However, this does not
exclude the possibility of genuine coherence monotones based on Schatten
norms.}.

\subsection{State manipulation under GI operations}

\label{GImanipulationsec}

In any resource theory the free operations play a key role regarding
the ordering of states relative to their usefulness. If $\rho$ can
be transformed to $\sigma$, then $\rho$ cannot be less useful than
$\sigma$. This is because any task that can be achieved by $\sigma$
can also be implemented by $\rho$ since the latter can be transformed
at no cost to the former but not necessarily the other way around.
Thus, as we have seen in the previous section, any measure of genuine
coherence should be non-increasing under GI operations (property G2).
Therefore, in order to understand the power of the resource theory
of genuine coherence it is important to clarify the possibilities
and limitations of GI operations for state transformation. In this
section we carry out a thorough analysis considering both pure and
mixed-state conversions, deterministic and stochastic manipulation
and single and many-copy scenarios. As we will see, it turns out that
state manipulation under GI operations is rather contrived.

\subsubsection{Single-state transformations}

In the standard resource theory of coherence the state
\begin{equation}
|+_{d}\rangle=\frac{1}{\sqrt{d}}\sum_{i=1}^{d}|i\rangle\label{maxcoh}
\end{equation}
represents the golden unit: it can be transformed into any other state
on $\mathbb{C}^{d}$ via incoherent operations \cite{Baumgratz2014}
and, therefore, it can be considered as the maximally coherent state.
We will see now that this is no longer the case for genuine coherence.
As we will show in the following theorem, the situation for state
manipulation under GI operations is much more drastic.
\begin{thm}
\label{noconv} A pure state $\ket{\psi}$ can be deterministically
transformed into another pure state $\ket{\phi}$ via GI operations
if and only if $\ket{\phi}=U\ket{\psi}$ with a genuinely incoherent
unitary $U$.\end{thm}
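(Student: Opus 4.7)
The "if" direction is immediate: any diagonal unitary $U$ is itself a GI operation (by part~2 of Theorem~\ref{thm:Schur} its single Kraus operator is diagonal), so $\rho_{\psi}\mapsto U\rho_{\psi}U^{\dagger}=\rho_{\phi}$ is realized by a GI map.

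For the converse, the plan is to exploit part~2 of Theorem~\ref{thm:Schur} to write any GI map achieving the conversion as $\Lambda_{\mathrm{gi}}[\rho]=\sum_{i}K_{i}\rho K_{i}^{\dagger}$ with diagonal Kraus operators $K_{i}=\sum_{j}d_{ij}\ket{j}\bra{j}$. Because the output $\rho_{\phi}$ has rank one while each term $K_{i}\rho_{\psi}K_{i}^{\dagger}$ is positive semidefinite, every term must be proportional to $\rho_{\phi}$, so $K_{i}\ket{\psi}=c_{i}\ket{\phi}$ for some $c_{i}\in\mathbb{C}$. Expanding both sides in the incoherent basis yields the coefficient equations $d_{ij}\psi_{j}=c_{i}\phi_{j}$ for all $i,j$, which encode all remaining constraints.

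From these equations I would first deduce that the coherence sets coincide, $R(\psi)=R(\phi)$: if $\phi_{j}=0$ but $\psi_{j}\neq 0$ then every $d_{ij}=0$, violating the normalization $\sum_{i}|d_{ij}|^{2}=1$; and at least one $c_{i}$ must be nonzero (otherwise $\Lambda_{\mathrm{gi}}[\rho_{\psi}]=0$), so $\psi_{j}=0$ with $\phi_{j}\neq 0$ is likewise excluded. On this common coherence set one obtains $d_{ij}=c_{i}\phi_{j}/\psi_{j}$, and the normalization condition forces $\sum_{i}|c_{i}|^{2}=|\psi_{j}|^{2}/|\phi_{j}|^{2}$ to be independent of $j$; combined with $\sum_{j}|\psi_{j}|^{2}=\sum_{j}|\phi_{j}|^{2}=1$ this pins the ratio to one and gives $|\psi_{j}|=|\phi_{j}|$ for every $j\in R(\psi)$. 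Only a phase on each basis element can still differ, so writing $\psi_{j}=|\psi_{j}|e^{i\beta_{j}}$ and $\phi_{j}=|\phi_{j}|e^{i\gamma_{j}}$, the diagonal unitary $U=\sum_{j}e^{i(\gamma_{j}-\beta_{j})}\ket{j}\bra{j}$ (with arbitrary unimodular phases assigned outside $R(\psi)$) satisfies $\ket{\phi}=U\ket{\psi}$, which finishes the proof. I do not expect a real obstacle here; the only care needed is a tidy zero-entry case analysis and the right ordering of the normalization arguments so that the constant ratio of moduli is forced to be one.
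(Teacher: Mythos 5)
Your proof is correct, but it follows a different route from the paper's. The paper proves the converse in one line by invoking part~3 of Theorem~\ref{thm:Schur}: since $\Lambda[\rho]=A\odot\rho$ with $A_{ii}=1$, every GI operation leaves the diagonal of the density matrix invariant, so $\bk{i|\rho_\psi|i}=\bk{i|\rho_\phi|i}$ forces $|\psi_i|=|\phi_i|$ for all $i$ and the two pure states can then differ only by a diagonal unitary. You instead work from part~2 (all Kraus operators diagonal), use positivity and the rank-one output to conclude $K_i\ket{\psi}\propto\ket{\phi}$ for every $i$, and then solve the coefficient equations $d_{ij}\psi_j=c_i\phi_j$ together with the completeness relation $\sum_i|d_{ij}|^2=1$ to pin down $R(\psi)=R(\phi)$ and $|\psi_j|=|\phi_j|$. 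Your normalization bookkeeping is sound: the ratio $\sum_i|c_i|^2=|\psi_j|^2/|\phi_j|^2$ is indeed $j$-independent and equals one after summing over $j$. What the paper's argument buys is brevity and the stronger intermediate fact (diagonal preservation for \emph{arbitrary}, not just pure, inputs), which it reuses in Corollary~\ref{puretomixedGI}; what yours buys is an explicit description of every Kraus operator realizing the conversion, namely $d_{ij}=c_i\phi_j/\psi_j$ on the common coherence set, which makes the rigidity of the transformation concrete without ever passing to the Schur-product form.
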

\begin{proof}
From Theorem \ref{thm:Schur} it follows that for any GI operation
$\Lambda$ the states $\rho$ and $\Lambda[\rho]$ have the same diagonal
elements, i.e., $ $$\bk{i|\rho|i}=\bk{i|\Lambda[\rho]|i}$. For pure
states $\rho_{\psi}$ and $\rho_{\phi}=\Lambda[\rho_{\psi}]$ this
means that $|\bk{i|\psi}|^{2}=|\bk{i|\phi}|^{2}$, and thus the states
are the same up to a unitary $U$ which is diagonal in the incoherent
basis.
\end{proof}
This theorem shows that deterministic GI transformations among pure
states are trivial. We can only use invertible operations to transform
states within the classes of equally coherent states. This resembles
the case of multipartite entangled states where almost no pure state
can be transformed to any other state outside its respective equivalence
class \cite{DeVicente2013}.

We have therefore seen that the framework of genuine coherence does
not have a golden unit, i.e., there is no unique state from which
all other states can be prepared via GI operations. However, it is
still possible that for every mixed state $\rho$ there exists some
pure state $\ket{\psi}$ from which $\rho$ can potentially be created
via GI operations. In the following theorem we will show that this
is indeed the case.
\begin{thm}
\label{convmix} For every mixed state $\rho$ there exists a pure
state $\rho_{\psi}$ and a GI operation $\Lambda$ such that
\begin{equation}
\rho=\Lambda[\rho_{\psi}].
\end{equation}
\end{thm}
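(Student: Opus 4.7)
The strategy is to invoke the Schur characterisation of Theorem~\ref{thm:Schur}: it suffices to exhibit a pure state $|\psi\rangle=\sum_i\psi_i|i\rangle$ and a PSD matrix $A$ with $A_{ii}=1$ such that $A\odot|\psi\rangle\langle\psi|=\rho$, because then $\Lambda[\sigma]=A\odot\sigma$ is a GI operation sending $\rho_\psi$ to $\rho$. Matching the diagonals of the target equation forces $|\psi_i|^2=\rho_{ii}$, so I would take $\psi_i=\sqrt{\rho_{ii}}$; this automatically gives $\|\psi\|^2=\mathrm{tr}\,\rho=1$.

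Assume first that every $\rho_{ii}>0$. The only consistent off-diagonal choice is $A_{ij}=\rho_{ij}/\sqrt{\rho_{ii}\rho_{jj}}$. With this definition one has $A_{ij}\psi_i\psi_j=\rho_{ij}$ for all $i,j$, so the Schur-product identity holds by inspection. Moreover, writing $D=\mathrm{diag}(\sqrt{\rho_{ii}})$, which is invertible under our assumption, one has $A=D^{-1}\rho D^{-1}$; hence $A$ is PSD as a congruence of the PSD matrix $\rho$, and its diagonal entries are $1$ by construction.

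For the general case, let $S=\{i:\rho_{ii}>0\}$. Positive semidefiniteness of $\rho$ implies $\rho_{ij}=0$ whenever $i\notin S$ or $j\notin S$, so $\rho$ is supported on the principal $S$-block. I would apply the previous construction to the principal submatrix $\rho|_S$, obtaining $|\psi\rangle=\sum_{i\in S}\sqrt{\rho_{ii}}|i\rangle$ and a PSD unit-diagonal matrix $A|_S$ with $A|_S\odot|\psi\rangle\langle\psi|=\rho|_S$. I then extend $A$ to the full index set by setting $A_{ii}=1$ for $i\notin S$ and $A_{ij}=0$ for the remaining off-diagonal entries involving $S^c$. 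The extended $A$ is block-diagonal with a PSD block on $S$ and the identity on $S^c$, hence PSD with unit diagonal, and the Schur product $A\odot|\psi\rangle\langle\psi|$ reproduces $\rho$ on the $S$-block while vanishing elsewhere (matching $\rho$, which also vanishes off $S$).

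The only non-routine step is the PSD check for $A$, and this reduces to two standard facts: a principal submatrix of a PSD matrix is PSD, and congruence by an invertible diagonal matrix preserves positivity. Everything else is diagonal normalisation and bookkeeping to handle the possibility that $\rho$ has zeros on its diagonal.
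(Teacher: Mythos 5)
Your construction is correct and coincides with the paper's own proof: the same pure state $|\psi\rangle=\sum_i\sqrt{\rho_{ii}}\,|i\rangle$ and the same Schur multiplier $A_{ij}=\rho_{ij}/\sqrt{\rho_{ii}\rho_{jj}}$. The only cosmetic differences are that you certify positivity of $A$ via the congruence $A=D^{-1}\rho D^{-1}$ rather than via the Schur product theorem (the paper writes $A=\rho\odot\rho_{\tilde{\psi}}$ with $\rho_{\tilde{\psi}}$ a rank-one PSD matrix), and that you spell out the reduction to the support of $\rho$, which the paper dispatches with the remark that one may assume $\rho_{ii}\neq 0$.
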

\begin{proof}
Let $\rho=\sum_{ij}\rho_{ij}|i\rangle\langle j|$ be an arbitrary
mixed state. Since $\rho$ is PSD, we can assume that $\rho_{ii}\neq0$
for all $i$. We will now provide a pure state $\ket{\psi}$ and a
GI operation $\Lambda$ such that $\Lambda(\rho_{\psi})=\rho$. As
we will prove in the following, the desired state and GI operation
are given as
\begin{align}
|\psi\rangle & =\sum_{i}\sqrt{\rho_{ii}}|i\rangle,\\
\Lambda[X] & =\rho\odot\rho_{\tilde{\psi}}\odot X
\end{align}
with the matrix $\rho_{\tilde{\psi}}=\sum_{ij}(\sqrt{\rho_{ii}\rho_{jj}})^{-1}|i\rangle\langle j|$
\footnote{Note that $\rho_{\tilde{\psi}}$ is not normalized, i.e., $\mathrm{Tr}[\rho_{\tilde{\psi}}]\neq1$
in general}.

For proving this, we first note that the matrix $A=\rho\odot\rho_{\tilde{\psi}}$
is PSD, since it is the Schur product of two PSD matrices \cite{Horn1991}.
Notice moreover that $A_{ii}=1$ $\forall i$; hence, using Theorem
\ref{thm:Schur}, $\Lambda[X]=A\odot X$ is a GI operation. Since
$\rho_{\tilde{\psi}}$ is the Schur inverse of $\rho_{\psi}$ we finally
obtain that
\begin{equation}
\Lambda[\rho_{\psi}]=\rho\odot\rho_{\tilde{\psi}}\odot\rho_{\psi}=\rho,
\end{equation}
which is the desired result.
\end{proof}
This theorem shows that in the framework of genuine quantum coherence
the set of all pure states can be regarded as a resource: all mixed
states can be obtained from some pure states via GI operations. Thus,
although there is no maximally genuinely coherent state, there is
a maximal genuinely coherent set in the terminology of \cite{DeVicente2013}.
Moreover, noticing that transformations under GI operations require
that the diagonal entries of the density matrices are preserved, the
theorem further implies the following corollary, which characterizes
all conversions from pure states to mixed states.
\begin{cor}
\label{puretomixedGI} A pure state $|\psi\rangle$ can be deterministically
transformed by GI operations into the mixed state $\rho$ if and only
if $\bk{i|\rho_{\psi}|i}=\bk{i|\rho|i}$.
\end{cor}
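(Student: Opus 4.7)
The plan is to prove both directions by leveraging the structural results already at our disposal, namely Theorem \ref{thm:Schur} for necessity and Theorem \ref{convmix} for sufficiency.

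For necessity, suppose that some GI operation $\Lambda$ satisfies $\Lambda[\rho_{\psi}]=\rho$. By Theorem \ref{thm:Schur}, $\Lambda$ acts as a Schur map $\Lambda[X]=A\odot X$ with $A_{ii}=1$ for all $i$. Reading off the diagonal entries gives $\langle i|\rho|i\rangle=A_{ii}\langle i|\rho_{\psi}|i\rangle=\langle i|\rho_{\psi}|i\rangle$, which is the claimed condition. This part is essentially immediate; no obstacle is expected.

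For sufficiency, assume $\langle i|\rho_{\psi}|i\rangle=\langle i|\rho|i\rangle=:\rho_{ii}$. Writing $|\psi\rangle=\sum_{i}\psi_{i}|i\rangle$, the hypothesis reads $|\psi_{i}|^{2}=\rho_{ii}$. Since diagonal unitaries are invertible GI operations, I may replace $|\psi\rangle$ by $U|\psi\rangle$ with an appropriately chosen diagonal $U$ so that $\psi_{i}=\sqrt{\rho_{ii}}\geq 0$; the question of whether $\rho$ can be reached from $|\psi\rangle$ by a GI operation is unchanged by this pre-composition with $U$. The resulting state is precisely the pure state appearing in the proof of Theorem \ref{convmix}, and that theorem then exhibits an explicit GI operation (the Schur map against $\rho\odot\rho_{\tilde{\psi}}$) mapping $\rho_{\psi}$ to $\rho$.

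The only genuinely subtle point, which I expect to be the main (though still minor) obstacle, is the treatment of indices $i$ with $\rho_{ii}=0$: Theorem \ref{convmix} is stated under the assumption that all diagonal entries of $\rho$ are nonzero. However, for any PSD matrix, a vanishing diagonal entry forces the entire corresponding row and column to vanish, and the hypothesis of the corollary simultaneously forces $\psi_{i}=0$. Both $\rho$ and $\rho_{\psi}$ are therefore supported on the subspace spanned by $\{|i\rangle : \rho_{ii}\neq 0\}$, and one may apply Theorem \ref{convmix} on that subspace and extend the Schur matrix by $1$'s on the remaining diagonal entries and $0$'s off-diagonal. The extended matrix is still PSD with unit diagonal, so by Theorem \ref{thm:Schur} it defines a GI operation on the full space, and it realizes the desired transformation.
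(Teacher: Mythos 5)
Your proof is correct and follows essentially the same route as the paper, which obtains the corollary from Theorem \ref{convmix} together with the observation that GI (Schur) maps preserve diagonal entries. Your explicit handling of indices with vanishing diagonal entries (restricting to the common support and extending the Schur matrix block-diagonally) is a welcome bit of extra care that the paper leaves implicit.
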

At this point we also note that mixed states cannot be deterministically
transformed to a pure state. Indeed, let a mixed state have spectral
decomposition $\rho=\sum_{i}\lambda_{i}\rho_{\psi_{i}}$ with $0<\lambda_{i}<1$.
Then, if there existed a GI map $\Lambda$ such that $\Lambda(\rho)=\rho_{\phi}$
for some pure state $\rho_{\phi}$, we would need that $\Lambda(\rho_{\psi_{i}})=\rho_{\phi}$
$\forall i$, which is forbidden by Theorem \ref{noconv} (unless
$|\psi_{i}\rangle=U|\phi\rangle$ for all $i$ for some genuinely
incoherent unitary, which would imply that $\rho$ is pure).

The impossibility of deterministic GI conversions among pure states
calls for the analysis of probabilistic transformations. As explained
above this amounts to the use of SGI operations. In the following
theorem we evaluate the optimal probability for pure state conversion
via SGI operations. In this theorem we will also explicitly use Definition
\ref{def:coherence-set} for the coherence set $R$.
\begin{thm}
\label{thm:sgitrans} A probabilistic transformation by GI operations
from $|\psi\rangle$ to $|\phi\rangle$ is possible if and only if
$R(\phi)\subseteq R(\psi)$. The optimal probability of conversion
is
\begin{equation}
P(\rho_{\psi}\to\rho_{\phi})=\min_{i\in R(\phi)}\frac{\bk{i|\rho_{\psi}|i}}{\bk{i|\rho_{\phi}|i}}.
\end{equation}
\end{thm}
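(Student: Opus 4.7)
The plan is to exploit the characterization of SGI operations as Schur maps $\Lambda[\rho]=A\odot\rho$ with a PSD matrix $A$ satisfying $0\le A_{ii}\le 1$, proven in the previous theorem. Working with the nonnegative representatives $|\psi\rangle=\sum_i\psi_i|i\rangle$ and $|\phi\rangle=\sum_i\phi_i|i\rangle$, the requirement that $\Lambda$ converts $\rho_\psi$ into $\rho_\phi$ with success probability $p$ is equivalent to the entrywise identity
\begin{equation}
A_{ij}\psi_i\psi_j=p\,\phi_i\phi_j\qquad\forall\, i,j,
\end{equation}
since both $\rho_\psi$ and $\rho_\phi$ are rank-one. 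Every remaining step then reduces to inspecting this relation and exhibiting a witnessing $A$.

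For necessity, I would set $i=j$: if some index $i$ lies in $R(\phi)\setminus R(\psi)$, the left-hand side vanishes while $p\,\phi_i^2>0$, which is impossible. This forces $R(\phi)\subseteq R(\psi)$, and the diagonal identity then reads $A_{ii}=p\,\phi_i^2/\psi_i^2$ for each $i\in R(\phi)$. The Schur-channel constraint $A_{ii}\le 1$ translates directly into $p\le\langle i|\rho_\psi|i\rangle/\langle i|\rho_\phi|i\rangle$, and minimizing over $i\in R(\phi)$ yields the claimed upper bound on $P(\rho_\psi\to\rho_\phi)$.

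For achievability at $p^\star=\min_{i\in R(\phi)}\langle i|\rho_\psi|i\rangle/\langle i|\rho_\phi|i\rangle$, I would propose the rank-one ansatz $A=p^\star\, vv^{T}$ with $v_i=\phi_i/\psi_i$ for $i\in R(\phi)$ (well defined since $R(\phi)\subseteq R(\psi)$) and $v_i=0$ otherwise. Such an $A$ is manifestly PSD, its diagonal equals $p^\star\phi_i^2/\psi_i^2\le 1$ on $R(\phi)$ by the very definition of $p^\star$ and vanishes elsewhere, and a direct entrywise check gives $A\odot\rho_\psi=p^\star\rho_\phi$; the trace of this unnormalized output is $p^\star$, so the associated Schur map is an SGI operation achieving the conversion with probability $p^\star$.

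The main obstacle, though modest, is keeping track of the three index regions (inside $R(\phi)$, in $R(\psi)\setminus R(\phi)$, and outside $R(\psi)$) when verifying both directions: indices outside $R(\phi)$ contribute nothing to the right-hand side, so the witness $A$ must have them supported trivially, while for indices in $R(\psi)\setminus R(\phi)$ the PSD property would generically couple them to the rest. The rank-one choice handles this for free because $v$ is supported precisely on $R(\phi)$, collapsing the entire construction to the scalar optimization on the diagonal.
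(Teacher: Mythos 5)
Your proposal is correct and follows essentially the same route as the paper: both reduce everything to the Schur characterization $A\odot\rho_\psi=p\,\rho_\phi$, read off the upper bound from the diagonal constraint $A_{ii}\le 1$, and exhibit the same witness matrix (your $p^\star vv^{T}$ is exactly the paper's $c^{-1}\rho_\phi\odot\rho_{\tilde\psi}$, with PSD-ness seen via rank one rather than via the Schur product theorem). The only cosmetic difference is that the paper derives the necessity of $R(\phi)\subseteq R(\psi)$ from $K_i\ket{\psi}\propto\ket{\phi}$ in the Kraus picture, while you read it off the diagonal of the same entrywise identity.
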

\begin{proof}
Without loss of generality we can write
\begin{eqnarray}
\ket{\psi}=\sum_{i\in R(\psi)}\psi_{i}\ket{i}, & \quad & \ket{\phi}=\sum_{j\in R(\phi)}\phi_{j}\ket{j}.
\end{eqnarray}
We will first show that the condition $R(\phi)\subseteq R(\psi)$
is necessary for a probabilistic transformation. Let $\Lambda(\cdot)=\sum_{i=1}^{n}K_{i}\cdot K_{i}^{\dag}$
be an SGI map such that $\Lambda(\rho_{\psi})\propto\rho_{\phi}$.
Then, it must be that $K_{i}|\psi\rangle\propto|\phi\rangle$ $\forall i$.
Hence, since the Kraus operators are diagonal, if for some index $k$
we have $\psi_{k}=0$ then $\phi_{k}=0$ must be true as well. This
proves that $R(\phi)\subseteq R(\psi)$ is a necessary condition for
probabilistic transformation.

We will now show that for $R(\phi)\subseteq R(\psi)$ there exists
a protocol implementing the transformation with the aforementioned
probability. For this, we additionally define the matrix
\begin{equation}
\rho_{\tilde{\psi}}=\sum_{i,j\in R(\psi)}(\psi_{i}\psi_{j})^{-1}\ket{i}\bra{j}
\end{equation}
and the number $c=\max_{i}(\phi_{i}/\psi_{i})^{2}$. Using again the
Schur product theorem, we have that the matrix $A=c^{-1}\rho_{\phi}\odot\rho_{\tilde{\text{\ensuremath{\psi}}}}$
is PSD with $A_{ii}\leq1$. Hence, there exists an SGI map $\Lambda$
such that $\Lambda(X)=A\odot X$. Moreover, $\Lambda(\rho_{\psi})=c^{-1}\rho_{\phi}$
and $\tr\Lambda(\rho_{\psi})=c^{-1}$. Thus, $|\psi\rangle$ can be
transformed to $|\phi\rangle$ with probability $c^{-1}$ (notice
that $1<c<\infty$).

In the final step, we show that there cannot exist a protocol with
larger probability of success. We do this by contradiction. Suppose
that there exists an SGI map $\Lambda$ with Schur representation
given by the PSD matrix $A$ such that $A\odot\rho_{\psi}=k\rho_{\phi}$
with $k>c^{-1}$. Let $i$ be the index for which $c=(\phi_{i}/\psi_{i})^{2}$.
Since $A_{ii}\psi_{i}^{2}=k\phi_{i}^{2}$, we would have that $A_{ii}=kc>1$,
which is in contradiction with the fact that $\Lambda$ is a trace
non-increasing map.
\end{proof}
Notice that the state $|+_{d}\rangle$ is not maximally genuinely
coherent even under the stochastic point of view. Indeed, let $|\chi\rangle$
be the state for which $\{\chi_{i}^{2}\}$ give rise to $\{1/2,(2(d-1))^{-1},\ldots,(2(d-1))^{-1}\}$.
Then, for $d>2$, $P(\rho_{\chi}\to\rho_{+})>P(\rho_{+}\to\rho_{\chi})$.
Given the impossibility to relate pure states by deterministic GI
operations, it would be tempting to order the set of pure coherent
states by $|\psi\rangle>|\phi\rangle$ if $P(\rho_{\psi}\to\rho_{\phi})>P(\rho_{\phi}\to\rho_{\psi})$.
Unfortunately, it turns out that such an order would be not well defined.
To see that, consider the state $|\psi\rangle$ with squared components
$\{1/4,5/8,1/8\}$. For $d=3$, we have that $P(\rho_{\chi}\to\rho_{+})>P(\rho_{+}\to\rho_{\chi})$
and $P(\rho_{+}\to\rho_{\psi})>P(\rho_{\psi}\to\rho_{+})$ but $P(\rho_{\psi}\to\rho_{\chi})>P(\rho_{\chi}\to\rho_{\psi})$.
On the other hand, by arguing as in \cite{Vidal2000}, we have that,
fixing a target state, the function $f_{\rho_{\phi}}(\rho_{\psi})=P(\rho_{\psi}\to\rho_{\phi})$
gives a computable genuine coherence monotone for all pure states.
Furthermore, $f_{\rho_{\phi}}(\rho_{\psi})=1$ iff $\ket{\phi}$ and
$\ket{\psi}$ are related via diagonal unitaries. Hence, by changing
the target state we obtain different monotones, each of them being
maximal for a different state in the maximal genuinely coherent set.

Finally, one may wonder whether mixed states can be transformed by
GI operations with some non-zero probability into a pure coherent
state. A simple example is given by $\rho=p|\psi\rangle\langle\psi|+(1-p)|\phi\rangle\langle\phi|$
where the two pure states $\ket{\psi}$ and $\ket{\phi}$ are respectively
supported on the orthogonal subspaces $W=\mathrm{span}\{|i\rangle\}_{i=1}^{n}$
and $W^{\perp}=\mathrm{span}\{|i\rangle\}_{i=n+1}^{d}$. If we denote
by $P_{X}$ the projector onto the subspace $X$, then the GI map
with Kraus operators $K_{1}=P_{W}$ and $K_{2}=P_{W^{\perp}}$ transforms
$\rho$ into $|\psi\rangle$ with probability $p$ and into $|\phi\rangle$
with probability $1-p$. The next theorem shows that this is essentially
the only possibility.
\begin{thm}
Let $W$ denote a subspace spanned by a subset of two elements of
the incoherent basis. Then, a non-pure coherent state $\rho$ can
be transformed by GI operations with non-zero probability to some
pure coherent state if and only if $P_{W}\rho P_{W}$ is a pure coherent
state for some choice of $W$. \end{thm}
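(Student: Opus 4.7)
The plan is to prove each direction separately, with the forward implication a direct construction and the reverse a short structural argument based on Theorem~\ref{thm:Schur}. For the \emph{if} direction, suppose $P_{W}\rho P_{W}$ is proportional to a pure coherent state for some $W=\mathrm{span}\{\ket{a},\ket{b}\}$. Since $P_{W}$ is diagonal in the incoherent basis, the SGI branch with Kraus operator $K_{1}=P_{W}$ (completed, for instance, by $K_{2}=P_{W^{\perp}}$, which is also diagonal) maps $\rho$ to $P_{W}\rho P_{W}$ with probability $\tr(P_{W}\rho P_{W})>0$, which by hypothesis is the desired pure coherent state.

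For the \emph{only if} direction, assume there is an SGI map $\Lambda[\rho]=\sum_{i}K_{i}\rho K_{i}^{\dagger}\propto\proj{\phi}$ with $\ket{\phi}$ coherent and, by Theorem~\ref{thm:Schur}, all $K_{i}$ diagonal. Since a rank-one PSD matrix can only be written as a sum of PSD summands all proportional to it, there exists an $i$ such that $K_{i}\rho K_{i}^{\dagger}=c\,\proj{\phi}$ for some $c>0$. Denote this Kraus operator by $K=\sum_{j}k_{j}\ket{j}\bra{j}$ and set $J=\{j:k_{j}\neq 0\}$. Matching matrix elements yields $k_{j}\bar{k}_{l}\rho_{jl}=c\,\phi_{j}\bar{\phi}_{l}$ for $j,l\in J$; in particular $\rho_{jj}=0$ for any $j\in J\setminus R(\phi)$, and positivity of $\rho$ then kills the whole $j$-th row and column.

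Now pick two distinct indices $a,b\in R(\phi)$, which exist because $\ket{\phi}$ is coherent. The matching identities give $\rho_{aa}\rho_{bb}=c^{2}|\phi_{a}|^{2}|\phi_{b}|^{2}/(|k_{a}|^{2}|k_{b}|^{2})=|\rho_{ab}|^{2}$, so the $2\times 2$ principal submatrix of $\rho$ on $W=\mathrm{span}\{\ket{a},\ket{b}\}$ has vanishing determinant, strictly positive trace, and nonzero off-diagonal $\rho_{ab}$. Hence $P_{W}\rho P_{W}$ is proportional to a rank-one matrix with nontrivial coherence, i.e.\ to a pure coherent state, concluding the argument.

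The step I expect to be the main obstacle is the reduction to a single diagonal Kraus operator: while the extremality of pure states in the PSD cone makes the argument standard, it is the place where diagonality (from Theorem~\ref{thm:Schur}) and the rank-one structure of the target must be combined to pin down one $K_{i}$. It is also the step that explains why a two-dimensional $W$ suffices, since once the matching $k_{j}\bar{k}_{l}\rho_{jl}=c\,\phi_{j}\bar{\phi}_{l}$ is in hand, any pair of indices in $R(\phi)$ already witnesses the required rank-one coherent block.
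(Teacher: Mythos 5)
Your proof is correct, and it takes a mildly but genuinely different route from the paper's in the ``only if'' direction (the ``if'' direction is identical: the branch $K_{1}=P_{W}$, $K_{2}=P_{W^{\perp}}$). The paper argues by contradiction at the level of the Schur representation of the whole SGI map: assuming $P_{W}\rho P_{W}$ is never a pure coherent state, it picks $i,j$ with $\rho_{ii}\rho_{jj}>|\rho_{ij}|^{2}$ and shows the Schur matrix $A$ of any map achieving $A\odot\rho=k\rho_{\psi}$ would satisfy $|A_{ij}|^{2}>A_{ii}A_{jj}$, violating positive semidefiniteness of $A$. You instead argue directly: you use the extremality of rank-one operators in the PSD cone to isolate a single diagonal Kraus operator $K$ with $K\rho K^{\dagger}=c\,\proj{\phi}$, and then read off from the matching identities that the $2\times2$ block of $\rho$ on any pair $a,b\in R(\phi)$ has zero determinant and nonzero off-diagonal, hence is a pure coherent state up to normalization. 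The two arguments rest on the same $2\times2$ determinant algebra, but yours has two advantages: it is constructive (it exhibits the witnessing subspace $W$ rather than refuting its absence), and by choosing the indices inside $R(\phi)$ from the outset you automatically guarantee that the relevant diagonal entries of $\rho$ and of the Kraus operator are nonzero -- a point the paper's write-up implicitly assumes when it divides by $|\rho_{ij}|^{2}$ and concludes a \emph{strict} inequality, which would degenerate to $0>0$ if the chosen pair fell outside the support of the target state. The only cosmetic omission is that you use $a,b\in J$ without noting $R(\phi)\subseteq J$, but this is immediate from $|k_{j}|^{2}\rho_{jj}=c|\phi_{j}|^{2}$.
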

\begin{proof}
The ``if'' part of the theorem is immediate since a map with a unique
Kraus operator given by $P_{W}$ is clearly an SGI operation. To prove
the ``only if'' part we will show that if $P_{W}\rho P_{W}$ is
not pure for any possible choice of $W$, then $\rho$ cannot be transformed
by GI operations with non-zero probability into a pure coherent state.
We will proceed by assuming the opposite and arriving at a contradiction.
Suppose that there exists an SGI map $\Lambda$ with Schur representation
given by the PSD matrix $A$ such that $A\odot\rho=k\rho_{\psi}$
with $0<k<1$ and $\rho_{\psi}=\sum_{i,j}\psi_{i}\psi_{j}|i\rangle\langle j|$.
By our premise, the projection of $\rho$ on every 2-dimensional subspace
spanned by two elements of the incoherent basis must be positive definite
(not PSD). This, together with the fact that $\rho$ is coherent implies
that there must exist $i,j$ such that $\rho_{ii},\rho_{jj},\rho_{ij}\neq0$
and $\rho_{ii}\rho_{jj}>|\rho_{ij}|^{2}$. The existence of the SGI
map imposes the following three equations
\begin{align*}
A_{ii}\rho_{ii} & =k\psi_{i}^{2},\\
A_{jj}\rho_{jj} & =k\psi_{j}^{2},\\
A_{ij}\rho_{ij} & =k\psi_{i}\psi_{j},
\end{align*}
which altogether yield
\[
|A_{ij}|^{2}=\frac{A_{ii}A_{jj}\rho_{ii}\rho_{jj}}{|\rho_{ij}|^{2}}>A_{ii}A_{jj}.
\]
However, this implies that $A$ cannot be PSD and, hence, a contradiction.
\end{proof}
Hence, most mixed states cannot be stochastically transformed to any
pure state. Thus, if, as discussed above, we regard pure states as
the most resourceful states over mixed states, it turns out that most
less resourceful states cannot be transformed, even with small probability,
to a resource state in the one-copy regime.

\subsubsection{Multiple-state and multiple-copy transformations}

So far we have just discussed possible transformations acting on a
single copy of a state. However, in quantum information theory it
is standard to find that multiple-state transformations broaden the
possibilities for resource manipulation. An important example of this
are activation phenomena. This means that the transformation $\rho\otimes\sigma\to\rho'\otimes junk$
(or, more generally, $\rho\otimes\sigma\to\tau$ with $\tr_{junk}\tau=\rho'$)
is possible even though it is impossible to implement the conversion
$\rho\to\rho'$. In this case the state $\sigma$ is called an activator. In the particular case when the activator can be returned, i.e.\ $\rho\otimes\sigma\to\rho'\otimes\sigma$, the process is known as catalysis.

Another example, and probably the most paradigmatic one, is distillation.
In these protocols one aims at transforming many copies of a less
useful state into less copies of a maximally useful state in the asymptotic
limit of infinitely many available copies. For instance, in entanglement
theory this target state that acts as a golden standard to measure
the usefulness of the resource is the maximally entangled two-qubit
state $|\Phi^{+}\rangle=(|00\rangle+|11\rangle)/\sqrt{2}$.

In general, we say that a state $\sigma$ can be distilled from the
state $\rho$ at rate $0<R\leq1$ if $\rho^{\otimes n}\to\tau$ and
$\tr_{junk}\tau$ is $\varepsilon$-close to $\sigma^{\otimes nR}$
and $\varepsilon\to0$ as $n\to\infty$. As a measure of closeness
we will use the trace norm; that is, it must hold that $\Vert\tr_{junk}\tau-\sigma^{\otimes nR}\Vert\leq\varepsilon$
where $||M||=\tr\sqrt{M^{\dagger}M}$. The optimal rate at which distillation
is possible, i.\ e.\ the supremum of $R$ over all protocols fulfilling
the aforementioned conditions, is a very relevant figure of merit
known as distillable resource and plays a key role for the quantification
of usefulness in resource theories. The reversed protocol, which is
known as dilution, is also an interesting object of study. In this
case one seeks for the optimal rate at which less copies of maximally
useful state can be converted into more copies of a less useful state.
This leads to another figure of merit: the resource-cost. In more
detail, the cost of $\rho$ is the infimum of the rate $R$ over all
protocols with $0<R\leq1$ such that $\sigma^{\otimes nR}\otimes junk$
(where $\sigma$ is a golden unit maximally resourceful state) transforms
$\varepsilon$-close to $\rho^{\otimes n}$ and $\varepsilon\to0$
as $n\to\infty$. The distillable entanglement and entanglement-cost
have been widely studied in entanglement theory \cite{HorodeckiRMP09}
and allow to establish the phenomenon of irreversibility. More recently,
the distillable coherence and coherence-cost have been characterized
and irreversibility has also been identified in this setting \cite{Winter2015}.

In order to discuss multiple-state and multiple-copy manipulation
under GI operations, it should be made clear what the set of allowed
maps is in this setting. If we are allowed to act jointly on $n$
different states each of them acting on the Hilbert space $H\simeq\mathbb{C}^{d}$,
we define the incoherent basis in the total Hilbert space $H^{\otimes n}$
as $\{|i_{1}i_{2}\cdots i_{n}\rangle\}$ ($i_{j}=1,\ldots,d$ $\forall j$),
where $\{|i_{j}\rangle\}$ is the incoherent basis in each Hilbert
space \cite{Bromley2015,Streltsov2015b}. This can be further justified
by the no superactivation postulate (cf.\ Ref.\ \cite{Chitambar2016}).
Thus, joint GI operations should preserve incoherent states in this
basis and they will be characterized by having Kraus operators diagonal
in the joint incoherent basis. By the same reasons as in Section \ref{secGIchar},
these GI maps will also admit a Schur representation in the joint
incoherent basis.

We are now in the position to state our results on multiple-state
and multiple-copy manipulation under joint GI operations. It turns
out that these protocols are out of reach: activation and any non-trivial
form of distillation and dilution are impossible. This claim is a
consequence of the following lemma.
\begin{lem}
\label{noactivation} For every two states $\rho,\sigma\in\mathcal{L}(\mathbb{C}^{d})$
and every GI map $\Lambda$ acting on $\mathcal{L}(\mathbb{C}^{d}\otimes\mathbb{C}^{d})$
such that $\Lambda(\rho\otimes\sigma)=\tau$, there exists another
GI map $\tilde{\Lambda}$ acting on $\mathcal{L}(\mathbb{C}^{d})$
such that $\tilde{\Lambda}(\rho)=\tau_{1}:=\tr_{2}(\tau)$. \end{lem}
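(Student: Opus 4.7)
The plan is to exploit the Schur representation of joint GI maps given by Theorem \ref{thm:Schur}, extended to the bipartite incoherent basis $\{\ket{ij}\}$ as discussed above. Since $\Lambda$ is GI on $\mathcal{L}(\mathbb{C}^d\otimes\mathbb{C}^d)$, there exists a PSD matrix $A$ on $\mathbb{C}^{d^2}$ with $A_{(ij),(ij)}=1$ such that $\Lambda[X]=A\odot X$. The idea is to compute $\tr_2[\Lambda(\rho\otimes\sigma)]$ explicitly and recognize it as a single-system Schur map acting on $\rho$ with a PSD matrix whose diagonal entries are all $1$.

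A direct calculation using $(\rho\otimes\sigma)_{(ij),(kl)}=\rho_{ik}\sigma_{jl}$ yields
\[
\bigl[\tr_2(A\odot(\rho\otimes\sigma))\bigr]_{ik}=\rho_{ik}\sum_{j}\sigma_{jj}\,A_{(ij),(kj)},
\]
which naturally suggests defining the candidate single-system matrix $\tilde{A}_{ik}:=\sum_j\sigma_{jj}A_{(ij),(kj)}$ and the map $\tilde{\Lambda}[X]:=\tilde{A}\odot X$. By construction $\tilde{\Lambda}(\rho)=\tau_1$, so the remaining task is to check that $\tilde{\Lambda}$ itself satisfies the two conditions of Theorem \ref{thm:Schur}.

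Normalization is immediate: $\tilde{A}_{ii}=\sum_j\sigma_{jj}A_{(ij),(ij)}=\sum_j\sigma_{jj}=1$, using $A_{(ij),(ij)}=1$ and $\tr\sigma=1$. For positive semidefiniteness I would observe that, for each fixed $j$, the matrix $B^{(j)}$ with entries $B^{(j)}_{ik}=A_{(ij),(kj)}$ is obtained from $A$ by restricting both the row and column indices to the subset $\{(1,j),\ldots,(d,j)\}$ and is therefore a principal submatrix of the PSD matrix $A$. Hence each $B^{(j)}$ is PSD; since the weights $\sigma_{jj}\geq 0$, the matrix $\tilde{A}=\sum_j \sigma_{jj} B^{(j)}$ is a nonnegative combination of PSD matrices and is therefore itself PSD. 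Theorem \ref{thm:Schur} then identifies $\tilde{\Lambda}$ as a GI map on $\mathcal{L}(\mathbb{C}^d)$, completing the argument.

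The only real subtlety is the observation that the diagonal slices $B^{(j)}$ of $A$ picked out by fixing the second index are genuine principal submatrices of $A$; once this is recognized, the rest of the argument is elementary index bookkeeping. I therefore expect no substantive obstacle beyond correctly tracking the partial trace, and the extension from two systems to any finite number of systems would follow by the same calculation applied blockwise.
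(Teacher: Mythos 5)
Your proposal is correct, and its overall strategy coincides with the paper's: both use the Schur representation of the joint GI map from Theorem \ref{thm:Schur}, carry out the same partial-trace computation to arrive at $\tilde{A}_{ik}=\sum_j\sigma_{jj}A_{(ij),(kj)}$, and verify the unit diagonal identically. The only place you diverge is the justification that $\tilde{A}$ is positive semidefinite. The paper writes $\tilde{A}=\tr_2(X\odot A)$ with $X=\proj{v}\otimes\sigma$ and $\ket{v}=\sum_i\ket{i}$, then invokes the Schur product theorem together with the fact that the partial trace of a PSD operator is PSD. You instead observe that each slice $B^{(j)}$ is a principal submatrix of $A$, hence PSD, and that $\tilde{A}=\sum_j\sigma_{jj}B^{(j)}$ is a nonnegative combination of PSD matrices. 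Your route is slightly more elementary — it needs neither the Schur product theorem nor the PSD-preservation of the partial trace — while the paper's packaging via $X\odot A$ is more compact and reuses machinery already deployed elsewhere in the text. Both arguments are sound, and both extend blockwise to more than two subsystems as you note.
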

\begin{proof}
By assumption together with Theorem \ref{thm:Schur}, there exists
a PSD matrix A with diagonal entries equal to 1,
\begin{equation}
A=\sum_{ijkl}A_{ik,jl}|ik\rangle\langle jl|\quad(A_{ik,ik}=1\,\forall i,k),
\end{equation}
which induces the GI operation $\Lambda$ such that
\begin{equation}
\tau=\Lambda(\rho\otimes\sigma)=A\odot(\rho\otimes\sigma)=\sum_{ijkl}\rho_{ij}\sigma_{kl}A_{ik,jl}|ik\rangle\langle jl|.
\end{equation}
The state $\tau_{1}$ is obtained by taking the partial trace over
the second subsystem:
\begin{equation}
\tau_{1}=\mathrm{tr}_{2}(\tau)=\sum_{ij}\left(\sum_{k}\sigma_{kk}A_{ik,jk}\right)\rho_{ij}|i\rangle\langle j|=\tilde{A}\odot\rho
\end{equation}
with the matrix $\tilde{A}$ with entries $\tilde{A}_{ij}=\sum_{k}\sigma_{kk}A_{ik,jk}$.
The proof is complete if we can show that the operator $\tilde{A}$
is PSD and that $\tilde{A}_{ii}=1$ holds for all $i$, since in this
case by Theorem \ref{thm:Schur} there must exist a GI operation $\tilde{\Lambda}$
such that $\tilde{\Lambda}(\rho)=\tilde{A}\odot\rho$. It is straightforward
to verify that $\tilde{A}_{ii}=1$ holds true for all $i$. To see
that $\tilde{A}$ is PSD, notice that $\tilde{A}=\tr_{2}(X\odot A)$,
where the operator $X$ can be written as
\begin{equation}
X=\sum_{ijkl}\sigma_{kl}|ik\rangle\langle jl|=\proj{v}\otimes\sigma
\end{equation}
with the (unnormalized) vector $\ket{v}=\sum_{i}\ket{i}$. Thus, $X$
is PSD (and so is $A$ by assumption). Hence, by the Schur product
theorem, $\tilde{A}$ is the partial trace of a PSD matrix and it
must me PSD too.
\end{proof}
The above lemma shows that there cannot be any activation phenomena
in the resource theory of genuine coherence and it will be very useful
in our study of coherence distillation and dilution with GI operations.
In order to analyze the possibility of distillation one first needs
to discuss what the target state is going to be. However, this is
not at all clear under GI operations since, as we pointed out in the
previous subsection, there is no unique state which would allow to
create all other states via GI operations. In the following, we will
show that in general it is not possible to distill the state $\sigma$
from $\rho$ via GI operations if $\sigma$ has more coherence than
$\rho$. Here, we measure the coherence by the relative entropy of
coherence \cite{Baumgratz2014}
\begin{equation}
C_{r}(\rho)=\min_{\sigma\in\mathcal{I}}S(\rho||\sigma)
\end{equation}
with the quantum relative entropy $S(\rho||\sigma)=\mathrm{tr}(\rho\log_{2}\rho)-\mathrm{tr}(\rho\log_{2}\sigma)$
and $\mathcal{I}$ denotes the set of all incoherent states. The relative
entropy of coherence is known to be equal to the distillable coherence
\cite{Winter2015}, and $C_{r}$ is also a faithful genuine coherence
monotone (cf.\ Sec.\ \ref{secmeasures}). We are now in the position
to prove the following theorem.
\begin{thm}
Given two states $\rho$ and $\sigma$ with
\begin{equation}
C_{r}(\rho)<C_{r}(\sigma),
\end{equation}
it is not possible to distill $\sigma$ from $\rho$ at any rate $R>0$
via GI operations.\end{thm}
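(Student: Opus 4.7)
The plan is to argue by contradiction: if distillation of $\sigma$ from $\rho$ at some rate $R>0$ were possible, I would extract from it a single-copy GI conversion $\rho\to\sigma$, which then contradicts the monotonicity of $C_r$ under GI. Suppose there exist joint GI maps $\Lambda_n$ on $\mathcal{L}((\mathbb{C}^d)^{\otimes n})$ with $\Lambda_n[\rho^{\otimes n}]=\tau_n$ satisfying $\|\tr_{junk}\tau_n-\sigma^{\otimes nR}\|_{1}\leq\varepsilon_n\to 0$. Singling out one of the useful output subsystems and tracing out the remaining $n-1$ subsystems, contractivity of the trace norm under partial trace yields a state $\xi_n$ with $\|\xi_n-\sigma\|_{1}\leq\varepsilon_n$, since the one-copy marginal of $\sigma^{\otimes nR}$ is $\sigma$.

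The second ingredient is a direct generalization of Lemma \ref{noactivation} to asymmetric dimensions: given a joint GI map $\Lambda$ on $\mathcal{L}(\mathbb{C}^{d_1}\otimes\mathbb{C}^{d_2})$ (with respect to the product incoherent basis) and any state $\omega$ on $\mathbb{C}^{d_2}$, the induced map $\eta\mapsto\tr_{2}\Lambda[\eta\otimes\omega]$ is GI on $\mathcal{L}(\mathbb{C}^{d_1})$. The proof of Lemma \ref{noactivation} transfers verbatim: if $A$ is the PSD Schur-representation matrix of $\Lambda$ with $A_{ik,ik}=1$, the induced map has Schur matrix $\tilde{A}_{ij}=\sum_k\omega_{kk}A_{ik,jk}$, which is PSD with unit diagonal by the same Schur product theorem and partial-trace argument used there (nothing in the proof required $d_1=d_2$). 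Applying this with $d_1=d$, $d_2=d^{n-1}$, and $\omega=\rho^{\otimes(n-1)}$, and identifying $\rho^{\otimes n}=\rho\otimes\rho^{\otimes(n-1)}$, produces GI maps $\tilde{\Lambda}_n$ on $\mathcal{L}(\mathbb{C}^d)$ with $\tilde{\Lambda}_n[\rho]=\xi_n$, hence $\|\tilde{\Lambda}_n[\rho]-\sigma\|_{1}\to 0$.

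To conclude, by Theorem \ref{thm:Schur} the set of GI maps on $\mathcal{L}(\mathbb{C}^d)$ is in bijection with the compact convex set of $d\times d$ PSD matrices with unit diagonal, so a subsequence $\tilde{\Lambda}_{n_k}$ converges to some GI map $\tilde{\Lambda}^{*}$. Continuity of $\Lambda\mapsto\Lambda[\rho]$ combined with $\tilde{\Lambda}_{n_k}[\rho]\to\sigma$ gives $\tilde{\Lambda}^{*}[\rho]=\sigma$. Since GI operations are a subset of the incoherent operations of \cite{Baumgratz2014} and $C_r$ is a genuine coherence monotone (cf.\ Sec.\ \ref{secmeasures}), I conclude $C_r(\sigma)=C_r(\tilde{\Lambda}^{*}[\rho])\leq C_r(\rho)$, contradicting $C_r(\rho)<C_r(\sigma)$. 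The only nontrivial step is the generalization of Lemma \ref{noactivation} to asymmetric dimensions, but its proof is identical; everything else is a standard compactness-and-monotonicity argument.
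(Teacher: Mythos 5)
Your proof is correct, and it follows the paper's strategy (contradiction, reduction to one output copy via contractivity of the trace norm under partial trace, and Lemma \ref{noactivation} to collapse the joint GI map into a single-copy GI map $\tilde{\Lambda}_n$ with $\tilde{\Lambda}_n[\rho]$ close to $\sigma$) up to the final step, where you genuinely diverge. The paper finishes by invoking the asymptotic continuity of $C_r$ (Lemma~12 of \cite{Winter2015}) to convert $\Vert\tilde{\Lambda}_n[\rho]-\sigma\Vert\leq\varepsilon$ into $|C_r(\tilde{\Lambda}_n[\rho])-C_r(\sigma)|\leq\varepsilon\log_2 d+2h(\varepsilon/2)$ and then contradicts monotonicity. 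You instead exploit Theorem \ref{thm:Schur}: the GI maps on $\mathcal{L}(\mathbb{C}^d)$ are parametrized by the compact set of $d\times d$ PSD matrices with unit diagonal (compact since $|A_{ij}|\leq\sqrt{A_{ii}A_{jj}}=1$ and the defining conditions are closed), so a subsequence of the $\tilde{\Lambda}_n$ converges to a GI map $\tilde{\Lambda}^{*}$ with $\tilde{\Lambda}^{*}[\rho]=\sigma$ exactly, and plain monotonicity of $C_r$ finishes the job. Your route is more elementary --- no continuity property of the measure is needed, only monotonicity --- and it actually proves more: the set of states reachable from $\rho$ by GI operations is closed, so distillability of $\sigma$ from $\rho$ at any positive rate forces the exact single-copy conversion $\rho\to\sigma$, which by Theorem \ref{thm:Schur} already requires $\bk{i|\rho|i}=\bk{i|\sigma|i}$ for all $i$; this rules out distillation in cases the entropic criterion does not cover. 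The paper's argument, in exchange, is the standard resource-theoretic one and would survive in settings where the free maps do not form a compact, finitely parametrized family. Your observation that Lemma \ref{noactivation} must be read with asymmetric dimensions ($d$ and $d^{n-1}$) is well taken: the paper uses this silently, and, as you note, the proof carries over verbatim.
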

\begin{proof}
We will prove the statement by contradiction, assuming that distillation
is possible for some state $\rho$ with $C_{r}(\rho)<C_{r}(\sigma)$.
In particular, this would imply that for $n$ large enough it is possible
to approximate one copy of the state $\sigma$. To be more precise,
for any $\varepsilon>0$ there exists an integer $n$ and a GI operation
$\Lambda$ such that
\begin{equation}
\left\Vert \tr_{n-1}\left(\Lambda\left[\rho^{\otimes n}\right]\right)-\sigma\right\Vert \leq\varepsilon,
\end{equation}
where the partial trace is taken over some subset of $n-1$ copies.

In the next step we use Lemma \ref{noactivation} to note that the
map $\tr_{n-1}\left(\Lambda\left[\rho^{\otimes n}\right]\right)$
can always be written as a GI operation $\tilde{\Lambda}$ acting
on just one copy of $\rho$:
\begin{equation}
\tilde{\Lambda}\left[\rho\right]=\tr_{n-1}\left(\Lambda\left[\rho^{\otimes n}\right]\right).
\end{equation}
Combining the aforementioned arguments, we conclude that for any $\varepsilon>0$
there exists a GI operation $\tilde{\Lambda}$ such that
\begin{equation}
\left\Vert \tilde{\Lambda}\left[\rho\right]-\sigma\right\Vert \leq\varepsilon.
\end{equation}

In the final step we will use the asymptotic continuity of the relative
entropy of coherence (see Lemma~12 in \cite{Winter2015}). It implies
that
\begin{equation}
\left|C_{r}\left(\tilde{\Lambda}\left[\rho\right]\right)-C_{r}\left(\sigma\right)\right|\leq\varepsilon\log_{2}d+2h\left(\frac{\varepsilon}{2}\right)
\end{equation}
with the binary entropy $h(x)=-x\log_{2}x-(1-x)\log_{2}(1-x)$ and
$d$ the fixed dimension of the Hilbert space. Using the fact that
the bound on the right-hand side of this inequality is continuous
for $\varepsilon\in(0,1)$ and that it goes to zero as $\varepsilon\to0$,
we can say that for any $\delta>0$ there exists some GI operation
$\tilde{\Lambda}$ such that
\begin{equation}
\left|C_{r}\left(\tilde{\Lambda}\left[\rho\right]\right)-C_{r}\left(\sigma\right)\right|<\delta.\label{eq:continuity}
\end{equation}
On the other hand, the assumption $C_{r}(\rho)<C_{r}(\sigma)$ implies
that there exists some $\delta>0$ such that
\begin{equation}
C_{r}\left(\sigma\right)-C_{r}\left(\rho\right)\geq\delta.
\end{equation}
Recalling that the relative entropy of coherence is a genuine coherence
monotone, i.e., $C_{r}(\tilde{\Lambda}[\rho])\leq C_{r}(\rho)$, we
arrive at the following result:
\begin{equation}
C_{r}\left(\sigma\right)-C_{r}\left(\tilde{\Lambda}[\rho]\right)\geq\delta
\end{equation}
for some $\delta>0$ and any GI operation $\tilde{\Lambda}$. This
is a contradiction to Eq.~(\ref{eq:continuity}), and the proof of
the theorem is complete.
\end{proof}
From the above theorem it follows that it is not possible to distill
the state $\ket{+_{2}}$ from any non-equivalent single-qubit state
$\rho$, since $C_{r}(\rho)<1=C_{r}(\ket{+_{2}})$.

In the final part of this section we address the impossibility of
dilution. Interestingly, it turns out that diluting less copies of
a state into more copies of another is generically impossible independently
of which state is picked as a golden unit.
\begin{thm}
Given any two coherent states $\rho$ and $\sigma$ of the same dimensionality
it is not possible to dilute $\sigma$ to $\rho$ at any rate $R<1$
via GI operations.\end{thm}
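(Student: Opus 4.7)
The plan is a proof by contradiction, combining Lemma~\ref{noactivation} with the asymptotic continuity of the relative entropy of coherence. Suppose that $\sigma$ can be diluted to $\rho$ at some rate $R<1$. Writing $m_n=\lfloor nR\rfloor$ and $k_n=n-m_n$, the definition of dilution gives, for every $\varepsilon>0$ and all sufficiently large $n$, an incoherent junk state $\omega_n$ on $(\mathbb{C}^d)^{\otimes k_n}$ and a GI operation $\Lambda_n$ on $(\mathbb{C}^d)^{\otimes n}$ such that $\|\Lambda_n(\sigma^{\otimes m_n}\otimes\omega_n)-\rho^{\otimes n}\|_1\le\varepsilon$. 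Because $R<1$, the complementary block size satisfies $k_n\to\infty$ as $n\to\infty$; this divergence is what makes the argument work.

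The key step is to view $(\mathbb{C}^d)^{\otimes n}=(\mathbb{C}^d)^{\otimes m_n}\otimes(\mathbb{C}^d)^{\otimes k_n}$ as a bipartition and invoke Lemma~\ref{noactivation}. Although the lemma is stated for $\mathbb{C}^d\otimes\mathbb{C}^d$, its proof relies only on the Schur representation of Theorem~\ref{thm:Schur} and the Schur product theorem, neither of which constrains the dimensions of the two factors; the extension to arbitrary bipartitions, as well as the symmetric statement obtained by tracing out the first rather than the second factor, is immediate. Applying it to $\Lambda_n$ and tracing out the first $m_n$ output copies therefore produces a GI operation $\tilde\Lambda_n$ on $(\mathbb{C}^d)^{\otimes k_n}$ with
\begin{equation}
\tilde\Lambda_n(\omega_n)=\mathrm{tr}_{1\ldots m_n}\!\bigl(\Lambda_n(\sigma^{\otimes m_n}\otimes\omega_n)\bigr).
\end{equation}
Since $\omega_n$ is incoherent, the defining property of GI operations forces $\tilde\Lambda_n(\omega_n)=\omega_n$; since the partial trace is trace-norm contractive, the right-hand side is $\varepsilon$-close to $\mathrm{tr}_{1\ldots m_n}(\rho^{\otimes n})=\rho^{\otimes k_n}$. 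Hence $\|\omega_n-\rho^{\otimes k_n}\|_1\le\varepsilon$.

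The contradiction now follows by applying asymptotic continuity of $C_r$ (Lemma~12 of \cite{Winter2015}) to the pair $\omega_n,\rho^{\otimes k_n}$ on the $d^{k_n}$-dimensional space, together with the additivity $C_r(\rho^{\otimes k_n})=k_n C_r(\rho)$ and $C_r(\omega_n)=0$, giving
\begin{equation}
k_n C_r(\rho)\le \varepsilon\, k_n\log_2 d+2h(\varepsilon/2).
\end{equation}
Dividing by $k_n$ and first choosing any $\varepsilon<C_r(\rho)/(2\log_2 d)$, then taking $n$ large enough that $(2/k_n)h(\varepsilon/2)<C_r(\rho)/2$, contradicts $C_r(\rho)>0$. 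The main potential obstacle is the extension of Lemma~\ref{noactivation} to the non-qudit bipartition used above and to the opposite partial trace, but both are essentially cosmetic consequences of the dimension-independence of the Schur representation and of the symmetry between the two tensor factors in the proof of the lemma.
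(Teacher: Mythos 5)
Your proof is correct and follows essentially the same route as the paper's: assume dilution at rate $R<1$ is possible, apply Lemma~\ref{noactivation} (extended to the asymmetric bipartition and to tracing out the first factor, exactly as the paper also implicitly does) to replace the traced-out output by a GI map acting on the incoherent junk alone, and conclude that an incoherent state would have to be $\varepsilon$-close to $\rho^{\otimes k_n}$. The only difference is the final step: the paper simply observes that any coherent state is bounded away from the incoherent set in trace norm, whereas you quantify the same impossibility via asymptotic continuity and additivity of $C_r$ --- slightly heavier machinery (and the remark that $k_n\to\infty$ is not actually needed, since $k_n\geq 1$ already suffices once $\varepsilon$ is chosen small), but equally valid.
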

\begin{proof}
If dilution at rate $R<1$ (i.\ e.\ leaving aside one-copy deterministic
transformations when possible) was possible, this would require that
$\forall\varepsilon>0$ there existed integers $m<n$ such that
\begin{equation}
||\Lambda(\sigma^{\otimes m}\otimes junk)-\rho^{\otimes n}||=||\tau-\rho^{\otimes n}||\leq\varepsilon.
\end{equation}
Notice that the presence of some junk incoherent part is indispensable
as GI operations cannot increase the dimensionality. Since the trace
distance cannot increase by quantum operations, by tracing out the
$m$ particles in the system the above equation requires in particular
that
\begin{equation}
||\tr_{system}\tau-\rho^{\otimes(n-m)}||\leq\varepsilon.
\end{equation}
Now, Lemma \ref{noactivation} implies that $\tr_{system}\tau=\tilde{\Lambda}(junk)$
for some GI map $\tilde{\Lambda}$. Hence, it must hold that
\begin{equation}
||\tilde{\Lambda}(junk)-\rho^{\otimes(n-m)}||\leq\varepsilon.
\end{equation}
However, the junk part is incoherent and, therefore, so must be $\tilde{\Lambda}(junk)$.
Any coherent state is bounded away from the set of incoherent states
and, thus, the above inequality cannot hold $\forall\varepsilon>0$
for any coherent state $\rho$.
\end{proof}
Thus, dilution with rate $R<1$ from a more useful qudit state into
a less useful qudit state is impossible by GI operations independently
of the measure of coherence used.

It is known that quantum resource theories where the free operations are maximal (resource-non-generating maps in the asymptotic limit) are asymptotically reversible and the optimal rate is given by the regularized relative entropy \cite{Brandao2015}. GI operations are more contrived and represent the opposite extreme: non-trivial forms of distillation and dilution are impossible.

\subsection{Relation to entanglement theory of maximally correlated states}

Recently, several authors conjectured \cite{Winter2015,Streltsov2015b,Chitambar2015}
that the resource theory of coherence as introduced by Baumgratz \emph{et
al}. is equivalent to the resource theory of entanglement, if the
latter is restricted to the set of maximally correlated states. Given
a state $\rho=\sum_{i,j}\rho_{ij}\ket{i}\bra{j}$, we can always associate
with it a bipartite maximally correlated state $\rho_{\mathrm{mc}}=\sum_{i,j}\rho_{ij}\ket{ii}\bra{jj}$.
This connection has led to several important results, e.g., the distillable
coherence, the coherence cost, and the coherence of assistance of
$\rho$ are all equal to the corresponding entanglement equivalent
of $\rho_{\mathrm{mc}}$ \cite{Winter2015,Chitambar2015}. Moreover,
it has been conjectured \cite{Chitambar2015} that for any two states
$\rho$ and $\sigma$ related via some incoherent operation $\Lambda_{\mathrm{i}}$
such that $\sigma=\Lambda_{\mathrm{i}}[\rho]$, the maximally correlated
states $\rho_{\mathrm{mc}}$ and $\sigma_{\mathrm{mc}}$ are related
via some LOCC operation: $\sigma_{\mathrm{mc}}=\Lambda_{\mathrm{LOCC}}[\rho_{\mathrm{mc}}]$.
As we will see in the following theorem, this conjecture is true if
genuinely incoherent operations are considered.
\begin{thm}
Given two states $\rho$ and $\sigma$ related via a genuinely incoherent
operation such that $\sigma=\Lambda_{\mathrm{gi}}[\rho]$, there always
exists an LOCC operation relating the corresponding maximally correlated
states: $\sigma_{\mathrm{mc}}=\Lambda_{\mathrm{LOCC}}[\rho_{\mathrm{mc}}]$.\end{thm}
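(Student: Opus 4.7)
The plan is to invoke Theorem \ref{thm:Schur}, which guarantees that every Kraus representation $\{K_i\}$ of a GI operation $\Lambda_{\mathrm{gi}}$ consists of operators diagonal in the incoherent basis. Writing $K_i = \sum_m k_{im}\ket{m}\bra{m}$, I would propose as candidate the local quantum channel on Alice's side
\begin{equation}
\Lambda_{\mathrm{LOCC}}[X] = \sum_i (K_i \otimes \openone)\, X\, (K_i^\dagger \otimes \openone),
\end{equation}
which trivially belongs to LOCC (Alice acts alone, Bob does nothing). The completeness relation $\sum_i K_i^\dagger K_i = \openone$ inherited from $\Lambda_{\mathrm{gi}}$ immediately gives $\sum_i (K_i\otimes\openone)^\dagger(K_i\otimes\openone) = \openone$, so this is a valid trace-preserving map.

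To verify the identity $\sigma_{\mathrm{mc}} = \Lambda_{\mathrm{LOCC}}[\rho_{\mathrm{mc}}]$, I would compare the two sides coefficient by coefficient in the basis $\{\ket{mm}\bra{nn}\}$. Because $K_i$ is diagonal we have $(K_i\otimes\openone)\ket{mm} = k_{im}\ket{mm}$, so the $(mm,nn)$-entry of $\Lambda_{\mathrm{LOCC}}[\rho_{\mathrm{mc}}]$ is $(\sum_i k_{im}k_{in}^*)\rho_{mn}$. On the other side, the same diagonality forces $\sigma_{mn} = \sum_i k_{im} \rho_{mn} k_{in}^* = (\sum_i k_{im}k_{in}^*)\rho_{mn}$, and by definition of the maximally correlated embedding $\sigma_{\mathrm{mc}} = \sum_{m,n}\sigma_{mn}\ket{mm}\bra{nn}$. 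The two expressions then coincide termwise.

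There is no substantial obstacle in this argument; the result reduces to the elementary observation that the diagonal Kraus structure intertwines with the coherent-copy embedding $\ket{m}\mapsto\ket{mm}$, so acting with $K_i$ on a single side of $\rho_{\mathrm{mc}}$ reproduces exactly the same coefficient $\sum_i k_{im}k_{in}^*$ that governs the action of $\Lambda_{\mathrm{gi}}$ on $\rho$. The conceptual point worth emphasizing is that the construction relies crucially on part 2 of Theorem \ref{thm:Schur}, namely that \emph{every} Kraus representation of a GI map is diagonal. This feature is not shared by the incoherent operations of \cite{Baumgratz2014}, which is precisely why the analogous conjecture in the general incoherent setting is non-trivial and, to our knowledge, remains open.
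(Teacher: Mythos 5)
Your proposal is correct and follows essentially the same route as the paper: both exploit the fact (Theorem \ref{thm:Schur}) that all Kraus operators of a GI map are diagonal, take the LOCC map to be the very same GI operation applied locally on Alice's side, and verify termwise that the coefficient $\sum_i k_{im}k_{in}^{*}$ acting on $\rho_{mn}$ is reproduced on the $\ket{mm}\bra{nn}$ entries of the maximally correlated state. No gaps; the argument matches the paper's proof up to notation.
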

\begin{proof}
We will prove this statement by showing that $\Lambda_{\mathrm{LOCC}}$
can be chosen as $\Lambda_{\mathrm{gi}}^{A}$, i.e., the genuinely
incoherent operation acting on Alice's subsystem. For this, we first
write $\sigma$ explicitly:
\begin{align}
\sigma & =\Lambda_{\mathrm{gi}}[\rho]=\sum_{i}K_{i}\rho K_{i}^{\dagger}\\
 & =\sum_{i,m,l}\rho_{ml}K_{i}\ket{m}\bra{l}K_{i}^{\dagger}=\sum_{i,m,l}\rho_{ml}a_{im}a_{il}^{*}\ket{m}\bra{l}\nonumber
\end{align}
with $\rho=\sum_{m,l}\rho_{ml}\ket{m}\bra{l}$ and $K_{i}\ket{m}=a_{im}\ket{m}$.

We will now apply the same operation $\Lambda_{\mathrm{gi}}$ on Alice's
subsystem of the maximally correlated state $\rho_{\mathrm{mc}}=\sum_{m,l}\rho_{ml}\ket{mm}\bra{ll}$:
\begin{align}
\Lambda_{\mathrm{gi}}^{A}[\rho_{\mathrm{mc}}] & =\sum_{i,m,l}\rho_{ml}K_{i}\ket{m}\bra{l}K_{i}^{\dagger}\otimes\ket{m}\bra{l}\nonumber \\
 & =\sum_{i,m,l}\rho_{ml}a_{im}a_{il}^{*}\ket{mm}\bra{ll}=\sigma_{\mathrm{mc}}.
\end{align}
This completes the proof of the theorem.
\end{proof}
This theorem lifts the relation between coherence and entanglement
to a new level. For an arbitrary coherence-based task involving genuinely
incoherent operations, we can immediately make statements about the
corresponding entanglement-based task involving LOCC operations.

\section{\label{SecFI}Fully incoherent operations}

\subsection{General concept}

As discussed in Sec.\ \ref{secGIdef}, one of the reasons to consider
GI operations was to rule out any form of hidden coherence in the
free operations of a resource theory of coherence. However, we have
seen in Sec.\ \ref{GImanipulationsec} that state manipulation under
GI operations might be too limited. This leads to think whether one
could consider a larger set of allowed operations still having the
property that every Kraus representation is incoherent but that could
allow for a richer structure for state manipulation.

We recall that the incoherent operations considered in the resource
theory of coherence introduced by Baumgratz \emph{et al.} in \cite{Baumgratz2014}
are given by Kraus operators $\{K_{i}\}$ such that for every incoherent
state $\rho$, $K_{i}\rho K_{i}^{\dag}$ is (up to normalization)
an incoherent state as well $\forall i$. It will be relevant in the
following to notice that such $\{K_{i}\}$ are characterized by having
at most one non-zero entry in every column \cite{Du2015b}. As we
used already in Sec.\ \ref{secGIdef}, the fact that the Kraus operators
of a different Kraus representation of some incoherent operation might
not be incoherent can be easily seen using Theorem \ref{kraus}. On
the contrary, with this theorem it is straightforward to check that
GI maps are incoherent (in fact, even diagonal) in every Kraus representation.
As discussed above, it comes as a natural question whether GI maps
constitute the most general class of operations having this property.
Interestingly, as we will show in the following, the answer is no:
there exist operations which are not GI but still incoherent in every
Kraus representation. A simple example for such an operation is the
erasing map, which puts every input state onto the state $|0\rangle\langle0|$.
This operation is incoherent in every Kraus representation because
it must hold that $K_{i}\rho K_{i}^{\dag}\propto|0\rangle\langle0|$
for every Kraus operator. On the other hand, the operation is clearly
not GI because any incoherent state that is not $|0\rangle\langle0|$
does not remain invariant under this map. We will call this class
of operations which are incoherent in every Kraus representation \textbf{\emph{f}}\emph{ully
}\textbf{\emph{i}}\emph{ncoherent (FI).}

One might wonder then what are the properties of the class of FI maps
and which differences it has with the class of GI maps. In particular,
we want to compare these sets of operations in the task of state transformation
to see whether FI induces a richer structure. For this, we will first
provide a full characterization of FI operations in the following
theorem.
\begin{thm}
\label{thm:FI} A quantum operation is FI if and only if all Kraus
operators are incoherent and have the same form.
\end{thm}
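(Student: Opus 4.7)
The plan is to prove both directions using Theorem \ref{kraus} to parametrize the Kraus freedom, together with the standard characterization of incoherent Kraus operators (those with at most one nonzero entry per column, cf.\ \cite{Du2015b}).

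For the ``if'' direction, suppose all Kraus operators $\{K_i\}$ are incoherent and share a common column-to-row assignment, meaning there is a (partial) function $f$ such that each $K_i = \sum_j c^{(i)}_j \ket{f(j)}\bra{j}$, with $c^{(i)}_j = 0$ whenever $f(j)$ is undefined. Any alternative representation $L_k = \sum_i V_{ki} K_i$ from Theorem \ref{kraus} then takes the form $L_k = \sum_j \bigl(\sum_i V_{ki} c^{(i)}_j\bigr) \ket{f(j)}\bra{j}$, which again has at most one nonzero entry per column and hence is incoherent. So the map is FI.

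For the ``only if'' direction, I would first use that FI in particular implies the existence of some incoherent Kraus decomposition $\{K_i\}$. The crux is to show that any two $K_a, K_b$ must agree on the row they send a given column to, whenever both have a nonzero entry there. Assuming the contrary, there are indices $a, b$ and a column $j$ with $K_a \ket{j} \propto \ket{r_a}$, $K_b \ket{j} \propto \ket{r_b}$ and $r_a \neq r_b$. Applying the unitary $V$ that is the $2 \times 2$ Hadamard on the $(a,b)$ block and the identity elsewhere, Theorem \ref{kraus} produces a valid alternative decomposition containing $L_a = (K_a + K_b)/\sqrt{2}$. But $L_a \ket{j}$ has nonzero components along both $\ket{r_a}$ and $\ket{r_b}$, so $L_a$ is not incoherent, contradicting FI. Hence there is a common partial function $f$ on which all $K_i$ agree, which is the ``same form'' condition.

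The main (minor) obstacle is pinning down what ``same form'' should mean: incoherent Kraus operators in an FI map need not share supports on columns (e.g.\ the erasing map $\{K_i\}=\{\ket{0}\bra{i}\}$ has each operator supported on a different column), so the correct notion is compatibility with a single column-to-row function $f$, with the individual $K_i$'s allowed to be zero on arbitrary subsets of the columns. Once that is made precise, the Hadamard-mixing argument above forces the condition and closes the proof.
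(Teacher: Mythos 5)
Your proposal is correct and follows essentially the same route as the paper: the ``if'' direction observes that linear combinations via Theorem \ref{kraus} preserve the common column-to-row pattern, and the ``only if'' direction uses the same $2\times2$ Hadamard block mixing of two Kraus operators that disagree on a column to produce a non-incoherent operator. Your clarification that ``same form'' must mean compatibility with a single column-to-row partial function (rather than identical supports) matches the paper's own parenthetical definition and is a worthwhile precision.
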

\noindent Before we prove the theorem some remarks are in place. The
requirement that all Kraus operators have the same form means that
their nonzero entries are all at the same position (i.\ e.\ whenever
there is a non-zero entry in a given column, it must occur at the
same row for every Kraus operator). As an example, according to the
theorem any single-qubit quantum operation defined by the Kraus operators
\begin{equation}
K_{1}=\left(\begin{array}{cc}
a & b\\
0 & 0
\end{array}\right),\,\,\,K_{2}=\left(\begin{array}{cc}
c & d\\
0 & 0
\end{array}\right)\label{eq:FIexample}
\end{equation}
is fully incoherent, since both Kraus operators are incoherent and
have the same form. The completeness condition $\sum_{i}K_{i}^{\dagger}K_{i}=\openone$
puts constraints on the complex parameters: $|a|^{2}+|c|^{2}=|b|^{2}+|d|^{2}=1$
and $a^{*}b+c^{*}d=0$. Note that -- according to the theorem -- this
map is FI, but it is not GI since the Kraus operators are not diagonal.
Indeed, this is exactly the erasing map which maps every state onto
$\ket{0}\bra{0}$. We will now provide the proof of Theorem \ref{thm:FI}.
\begin{proof}
That maps with this property are FI is immediate. If the Kraus operators
in one representation are all incoherent and have a particular given
form, then, by Theorem \ref{kraus}, so does every Kraus representation
since Eq.\ (\ref{krauseq}) preserves this structure. Hence, every
Kraus representation is incoherent.

Thus, to complete the proof we only need to see that any map which
has one (incoherent) Kraus representation in which not all operators
are of the same form cannot be FI, i.\ e.\ it must then admit another
Kraus representation which is not incoherent. For such, we can take
without loss of generality that
\begin{equation}
col_{1}(K_{1})=\left(\begin{array}{c}
\ast\\
0\\
0\\
\vdots\\
0
\end{array}\right),\quad col_{1}(K_{2})=\left(\begin{array}{c}
0\\
\ast\\
0\\
\vdots\\
0
\end{array}\right),
\end{equation}
where $col_{i}(A)$ denotes the $i$-th column of the matrix $A$
and $\ast$ an arbitrary non-zero number. Moreover, we define two
unitary matrices $V$ and $U$:
\begin{equation}
V=\left(\begin{array}{cc}
U & 0\\
0 & \openone
\end{array}\right),\quad U=\frac{1}{\sqrt{2}}\left(\begin{array}{rr}
1 & 1\\
1 & -1
\end{array}\right).
\end{equation}
Since $V$ is unitary the set of Kraus operators $\{L_{i}\}$ constructed
using Eq.\ (\ref{krauseq}) is another Kraus representation of the
map given by $\{K_{i}\}$. However, we find that
\begin{equation}
col_{1}(L_{1})=\left(\begin{array}{c}
\ast\\
\ast\\
0\\
\vdots\\
0
\end{array}\right),
\end{equation}
and the representation given by $\{L_{i}\}$ is therefore not incoherent.
\end{proof}
From Theorem \ref{thm:FI} it is clear that GI maps are contained
in the class of FI maps since every Kraus operator in every Kraus
representation is diagonal, hence fulfilling the condition of Theorem
\ref{thm:FI}. As also noted above Theorem~\ref{thm:FI}, there exist
operations which are FI but not GI, and the erasing map $\Lambda[\rho]=\proj{0}$
is one example. Another instance of FI maps, taking for example maps
on $\mathcal{L}(\mathbb{C}^{3})$, are those for which the Kraus operators
are given by
\begin{equation}
K_{i}=\left(\begin{array}{ccc}
a_{i} & 0 & 0\\
0 & 0 & 0\\
0 & b_{i} & c_{i}
\end{array}\right).
\end{equation}

A property fulfilled by FI maps is that these operations allow to
prepare any pure incoherent state from an arbitrary input state. For
that one would use the corresponding erasure map with the property
$\Lambda[\rho]=\proj{i}$ for every state $\rho$. It can be easily
seen that any such map is always FI for any choice of incoherent state
$\proj{i}$. This means that, contrary to the case of GI operations,
in this setting pure incoherent states are indeed free states, i.e.\ they
can be prepared with the free operations.

Another feature of FI maps is that any incoherent unitary transformation
can be implemented. Thus, elements of the incoherent basis can be
permuted and the coherence set is no longer meaningful. Hence, the
coherence rank $r$ takes the relevant role instead, similarly to
state manipulation under incoherent operations. Thus, in this case
coherence is regarded as a speakable resource.

On the other hand, a striking feature of FI maps is that they constitute
a non-convex set. More precisely, given two FI operations $\Lambda_{1}$
and $\Lambda_{2}$, their convex combination
\begin{equation}
\Lambda[\rho]=p\Lambda_{1}[\rho]+(1-p)\Lambda_{2}[\rho]\label{eq:nonconvex}
\end{equation}
is not always fully incoherent. This can be demonstrated with the
following single-qubit operations:
\begin{align}
\Lambda_{1}[\rho] & =\sigma_{x}\rho\sigma_{x},\\
\Lambda_{2}[\rho] & =\sigma_{z}\rho\sigma_{z}
\end{align}
with the Pauli matrices
\begin{equation}
\sigma_{x}=\left(\begin{array}{cc}
0 & 1\\
1 & 0
\end{array}\right),\,\,\,\sigma_{z}=\left(\begin{array}{cc}
1 & 0\\
0 & -1
\end{array}\right).
\end{equation}
While both operations $\Lambda_{1}$ and $\Lambda_{2}$ are fully
incoherent, their convex combination $\Lambda$ in Eq.~(\ref{eq:nonconvex})
is not fully incoherent for $0<p<1$. This can be seen by noting that
the Kraus operators of $\Lambda$ are given by $K_{1}=\sqrt{p}\sigma_{x}$
and $K_{2}=\sqrt{1-p}\sigma_{z}$. Since these Kraus operators do
not have the same form for $0<p<1$, by Theorem \ref{thm:FI} the
operation cannot be fully incoherent.

This non-convexity is, thus, a consequence of the fact that FI maps
are characterized by a property of the set of implemented Kraus operators
and not of each individual operator as it is the case for incoherent
or GI operations. In practice, this means that if one considers state
manipulation under FI maps, it turns out that two particular operations
might be free but not to implement each of them with a given probability.
In particular, this implies that, although pure incoherent states
can be prepared with the free operations as pointed out above, this
result does not need to extend to mixed incoherent states despite
the fact that they constitute free states as well. This is because
it is not allowed to mix different FI fixed-output maps. Actually,
it can be checked that it is not the case that every state can be
transformed to any mixed incoherent state by some FI map. The interested
reader is referred to Appendix \ref{sec:appendixB} for a particular
example. This construction relies on an observation on the structure
of FI maps used in Theorem \ref{thm:FIconv} below.

\subsection{\label{sub:Permutations}Permutations as basic FI operations}

As mentioned above, any genuinely incoherent operation is also fully
incoherent. Here, we will consider another important class of FI operations,
which we call \emph{permutations}. A permutation $P_{ij}$ with $i\neq j$
is a unitary which interchanges the states $\ket{i}$ and $\ket{j}$,
and preserves all states $\ket{k}$ for $k\neq i,j$:
\begin{align}
 & P_{ij}\ket{i}=\ket{j},\\
 & P_{ij}\ket{j}=\ket{i},\\
 & P_{ij}\ket{k}=\ket{k}\,\,\mathrm{for\,all}\,\,k\neq i,j.
\end{align}
The above definition involves the permutation of only two states $\ket{i}$
and $\ket{j}$. In the following, we will also consider more general
permutations with more than two elements. We will denote an arbitrary
general permutation by $P$. Any such permutation can be decomposed
as a product of permutations of only two states. Notice that any such
$P$ corresponds to an FI unitary transformation.

\subsection{State manipulation under FI operations}

\subsubsection{Pure state deterministic transformations}

Given the limitations of GI operations, in this section we study the
potential for deterministic state manipulation if the allowed set
of operations is given by FI maps. Interestingly, we will see in the
following that -- contrary to the case of GI maps -- transformations
among pure states are possible. However, these are rather limited
as shown in the following theorem.
\begin{thm}
\label{thm:FIconv} A deterministic FI transformation from $|\psi\rangle$
to $|\phi\rangle$ is possible only if $r(\phi)\leq r(\psi)$. Moreover,
for $r(\phi)=r(\psi)$ a transformation is possible if and only if
$\ket{\phi}=U\ket{\psi}$ with a fully incoherent unitary $U$.\end{thm}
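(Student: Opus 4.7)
The plan is to establish the two implications separately. For the rank inequality, by Theorem~\ref{thm:FI} every FI map has all Kraus operators incoherent, so FI operations form a subclass of the incoherent operations of \cite{Baumgratz2014}. Since the coherence rank is known not to increase under incoherent operations \cite{Winter2015}, one obtains $r(\phi)\le r(\psi)$. The ``if'' direction of the second assertion is trivial, since any fully incoherent unitary is an FI operation.

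For the rank-preserving case I would analyze the rigid structure imposed by the ``same form'' condition of Theorem~\ref{thm:FI}: all Kraus operators share a common row function $f:\{1,\ldots,d\}\to\{1,\ldots,d\}$ with $K_i\ket{j}=a_{ij}\ket{f(j)}$, and $f$ is defined everywhere because trace preservation gives $\sum_i|a_{ij}|^2=1$ for every $j$. Since $\rho_{\phi}$ is pure and $\sum_i K_i\rho_{\psi}K_i^{\dagger}=\rho_{\phi}$ is a sum of rank-one PSD matrices, each summand must be proportional to $\rho_{\phi}$, so $K_i\ket{\psi}=c_i\ket{\phi}$ with $\sum_i|c_i|^2=1$. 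Expanding and grouping by image gives $c_i\phi_k=\sum_{j:\,f(j)=k}\psi_j a_{ij}$. A non-vanishing coefficient on the left requires a non-vanishing term on the right, so $R(\phi)\subseteq f(R(\psi))$; combined with $r(\phi)=r(\psi)$, this forces $f$ restricted to $R(\psi)$ to be injective with image exactly $R(\phi)$. The sums then collapse to single terms, leaving $\psi_j a_{ij}=c_i\phi_{f(j)}$ for $j\in R(\psi)$, and inserting this into $\sum_i|a_{ij}|^2=1$ yields $|\psi_j|=|\phi_{f(j)}|$.

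Finally, I would assemble the desired unitary. Extend $f|_{R(\psi)}$ to a permutation $P$ of $\{1,\ldots,d\}$ by picking any bijection from $\{1,\ldots,d\}\setminus R(\psi)$ onto $\{1,\ldots,d\}\setminus R(\phi)$, and define a diagonal unitary $D$ with $D_{jj}=\phi_{f(j)}/\psi_j$ for $j\in R(\psi)$ (of unit modulus by the identity just derived) and $D_{jj}=1$ otherwise. Then $U=PD$ is a fully incoherent unitary satisfying $U\ket{\psi}=\ket{\phi}$. The main obstacle is extracting the common row function $f$ from the abstract ``same form'' statement of Theorem~\ref{thm:FI} and arguing that rank equality forces its injectivity on $R(\psi)$; everything downstream is an algebraic verification.
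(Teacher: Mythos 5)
Your proof is correct, and in the rank-preserving case it is in fact more careful than the paper's own argument. The paper proves the second part by asserting that $r(\phi)=r(\psi)$ forces the FI map to contain a Kraus operator with one nonzero entry in each row \emph{and} column, so that the channel factors as $\Lambda[\rho]=P\Lambda_{\mathrm{gi}}[\rho]P^{\dagger}$ with a permutation $P$ and a genuinely incoherent map $\Lambda_{\mathrm{gi}}$, and then invokes Theorem \ref{noconv} to conclude that $\Lambda_{\mathrm{gi}}$ acts unitarily on $\rho_{\psi}$. That monomial structure is only guaranteed when $r(\psi)=d$: if $r(\psi)<d$, a deterministic rank-preserving transformation can be implemented by Kraus operators that are rank-deficient outside the coherence set (for instance, qutrit operators with a vanishing third row, of the kind used later in the proof of Theorem \ref{thm:QutritConvertibility}, map $\psi_{1}\ket{0}+\psi_{2}\ket{1}$ to a pure rank-two state), so the paper's intermediate claim does not literally hold in that regime even though the theorem's conclusion does. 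Your route --- extracting the common row function $f$ from the ``same form'' condition of Theorem \ref{thm:FI}, using purity of the output to force $K_{i}\ket{\psi}=c_{i}\ket{\phi}$, and then using $r(\phi)=r(\psi)$ to make $f$ injective on $R(\psi)$ so that the normalization $\sum_{i}|a_{ij}|^{2}=1$ yields $|\psi_{j}|=|\phi_{f(j)}|$ --- closes exactly this gap and produces the unitary $U=PD$ directly, at the modest cost of not reusing Theorem \ref{noconv}. The first part (rank monotonicity, since FI operations are incoherent) and the trivial ``if'' direction coincide with the paper's treatment.
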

\begin{proof}
The first part of the theorem is true due to the more general result
stating that the coherence rank cannot increase under incoherent operations
as we already mentioned in Sec.\ \ref{coherenceranksec}. Since FI
operations are a subclass of general incoherent operations, the coherence
rank also cannot increase under FI operations.

We will now prove the second part of the theorem, assuming that $\ket{\psi}$
and $\ket{\phi}$ are two states with the same coherence rank, and
that
\begin{equation}
\rho_{\phi}=\Lambda[\rho_{\psi}]
\end{equation}
with some FI operation $\Lambda$. Since both states have the same
coherence rank, the FI operation $\Lambda$ must contain at least
one Kraus operator $K$ which has one nonzero element in each row
and column. Moreover, all remaining Kraus operators must have the
same shape as $K$, i.e., their nonzero elements must be at the same
positions as the nonzero elements of $K$. It is now crucial to note
that any such map must be a composition of a GI operation $\Lambda_{\mathrm{gi}}$
and a permutation $P$:
\begin{equation}
\Lambda[\rho]=P\Lambda_{\mathrm{gi}}[\rho]P^{\dagger}.
\end{equation}
Now, since we assume that $\Lambda[\rho_{\psi}]$ is pure, it must
be that $\Lambda_{\mathrm{gi}}[\rho_{\psi}]$ is also pure, since
the permutation $P$ is a unitary operation. By Theorem \ref{noconv},
the fact that $\Lambda_{\mathrm{gi}}[\rho_{\psi}]$ is pure implies
that $\Lambda_{\mathrm{gi}}[\rho_{\psi}]=U_{\mathrm{gi}}\rho_{\psi}U_{\mathrm{gi}}^{\dagger}$
with some diagonal unitary $U_{\mathrm{gi}}$. Combining these results
we arrive at the following expression:
\begin{equation}
\rho_{\phi}=\Lambda[\rho_{\psi}]=P\Lambda_{\mathrm{gi}}[\rho_{\psi}]P^{\dagger}=PU_{\mathrm{gi}}\rho_{\psi}U_{\mathrm{gi}}^{\dagger}P^{\dagger}.
\end{equation}
The proof of the theorem is complete by noting that $PU_{\mathrm{gi}}$
is a fully incoherent unitary.
\end{proof}
If we apply now the above theorem to study single-qubit FI operations
from an arbitrary single-qubit state $|\psi\rangle$, we see that
the only possible output for transformations to a pure state is either
an incoherent state or any state which is equivalent to $|\psi\rangle$
under FI unitaries. This shows that FI operations are strictly less
powerful than general incoherent operations in their ability to convert
pure quantum states into each other. In particular, general incoherent
operations can convert the state $\ket{+_{2}}$ into any other single-qubit
state \cite{Baumgratz2014}. This is not possible via FI operations,
as follows from the above discussion by taking $\ket{\psi}=\ket{+_{2}}$.

Having seen that FI operations are less powerful than general incoherent
operations, we will now show that FI operations are still more powerful
than GI operations. For this, we can use a fixed-output map to see
that by FI operations it is possible to transform the state $\ket{+_{2}}$
into the state $\ket{0}$. By Theorem \ref{noconv} this process is
not possible with GI operations. However, this is a transformation
to a non-resource state. One might wonder whether FI operations allow
for nontrivial transformations to a pure coherent state. In the following,
we provide such an example. Consider a FI map with Kraus operators
given by
\begin{equation}
K_{1}=\left(\begin{array}{ccc}
a_{1} & 0 & c_{1}\\
0 & b_{1} & 0\\
0 & 0 & 0
\end{array}\right),\quad K_{2}=\left(\begin{array}{ccc}
a_{2} & 0 & c_{2}\\
0 & b_{2} & 0\\
0 & 0 & 0
\end{array}\right).
\end{equation}
The normalization condition $\sum_{i}K_{i}^{\dagger}K_{i}=\openone$
imposes that
\begin{align}
a_{1}c_{1}^{*}+a_{2}c_{2}^{*} & =0,\nonumber \\
|x_{1}|^{2}+|x_{2}|^{2} & =1,\quad x=a,b,c.\label{eq:normalization}
\end{align}
In order for the (normalized) state
\begin{equation}
|\psi\rangle=\left(\begin{array}{c}
\psi_{1}\\
\psi_{2}\\
\psi_{3}
\end{array}\right)\in\mathbb{R}^{3}
\end{equation}
to be convertible with this map into another pure state, the only
requirement is that $K_{1}|\psi\rangle=kK_{2}|\psi\rangle$ for some
constant $k\in\mathbb{C}$. As can be verified by inspection, this
condition is equivalent to
\begin{equation}
\psi_{3}=\frac{a_{2}b_{1}-a_{1}b_{2}}{b_{2}c_{1}-b_{1}c_{2}}\psi_{1}.\label{eq:condition}
\end{equation}

Now, a proper choice of the free parameters can be made so that the
conditions (\ref{eq:normalization}) and (\ref{eq:condition}) are
met. For instance, we can take $b_{1}=b_{2}=1/\sqrt{2}$, $a_{1}=c_{2}=\sqrt{3}/2$
and $a_{2}=-c_{1}=1/2$, which leads to
\begin{equation}
\psi_{3}=\frac{(\sqrt{3}-1)^{2}}{2}\psi_{1}.\label{eq:condition-2}
\end{equation}
With a real choice of $\psi_{1}$ small enough the above equation
can be fulfilled with a properly normalized state. Indeed, if we further
choose $\psi_{1}=1/2$, condition (\ref{eq:condition-2}) implies
that $\psi_{3}=1-\sqrt{3}/2$. The remaining parameter $\psi_{2}$
is restricted by the normalization of the state $\ket{\psi}$, and
can be chosen as $\psi_{2}=(1-\psi_{1}^{2}-\psi_{3}^{2})^{1/2}=(\sqrt{3}-1)^{1/2}$.
The pure final state $\Lambda[\rho_{\psi}]=\rho_{\phi}$ is then given
by
\begin{equation}
\ket{\phi}=\frac{\sqrt{6}-\sqrt{2}}{2}\ket{0}+(\sqrt{3}-1)^{1/2}\ket{1}.
\end{equation}
This example shows that FI operations allow for nontrivial transformations
between pure states which are not possible with genuinely incoherent
operations. Still, as we will see in the next theorem with the particular
example of the state $|+_{3}\rangle$, it seems that FI transformations
among pure coherent states are nevertheless very constrained.
\begin{thm}
\label{thm:QutritConvertibility}Via deterministic FI operations,
the state $\ket{+_{3}}$ can be transformed into one of the following
pure states:
\begin{equation}
\Lambda[\ket{+_{3}}\bra{+_{3}}]=\begin{cases}
\proj{0}\\
\proj{\psi}\\
\proj{+_{3}}
\end{cases}\label{eq:Qutrits}
\end{equation}
with $\ket{\psi}=\sqrt{2/3}|0\rangle+\sqrt{1/3}|1\rangle$. Moreover,
up to FI unitaries, this is the full set of pure states which can
be obtained from $\ket{+_{3}}$ via deterministic FI operations. \end{thm}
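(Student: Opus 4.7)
The plan is to combine the structural characterization Theorem~\ref{thm:FI} of FI maps with the rigidity result Theorem~\ref{thm:FIconv}, and then to work out by hand the intermediate case where the coherence rank of the output is strictly smaller than $r(\ket{+_{3}})=3$. By Theorem~\ref{thm:FI}, any FI operation $\Lambda$ has Kraus operators $\{K_{\alpha}\}$ all of the same ``shape'': there exists a function $f\colon\{0,1,2\}\to\{0,1,2\}\cup\{\ast\}$ such that in each column $j$ every $K_{\alpha}$ has at most one nonzero entry, and any such entry sits in row $f(j)$ (with $f(j)=\ast$ meaning column $j$ is zero in every $K_{\alpha}$). The completeness condition $\sum_{\alpha}K_{\alpha}^{\dagger}K_{\alpha}=\openone$ forces $(\sum_{\alpha}K_{\alpha}^{\dagger}K_{\alpha})_{jj}=1$ for each $j$, which rules out all-zero columns, so $f$ is in fact a function $\{0,1,2\}\to\{0,1,2\}$.

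I would next split the analysis according to $|\mathrm{Im}(f)|$, which equals the coherence rank of $\Lambda[\rho_{+_{3}}]$. When $|\mathrm{Im}(f)|=3$, $f$ is a bijection, so $r(\Lambda[\rho_{+_{3}}])=3$ and Theorem~\ref{thm:FIconv} directly yields $\Lambda[\rho_{+_{3}}]=U\rho_{+_{3}}U^{\dagger}$ for some FI unitary $U$, giving $\ket{+_{3}}$ up to FI unitaries. When $|\mathrm{Im}(f)|=1$, every $K_{\alpha}\ket{+_{3}}$ is supported on the same basis vector $\ket{i_{0}}$, so the output is the pure incoherent state $\ket{i_{0}}$, FI-unitarily equivalent to $\ket{0}$ via a permutation (Sec.~\ref{sub:Permutations}).

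The heart of the argument is the case $|\mathrm{Im}(f)|=2$. By pigeonhole the preimage partition has type $\{2,1\}$, so after relabelling via a permutation I may assume $f(0)=f(1)=0$ and $f(2)=1$, giving $K_{\alpha}=a_{0}^{(\alpha)}\ket{0}\bra{0}+a_{1}^{(\alpha)}\ket{0}\bra{1}+a_{2}^{(\alpha)}\ket{1}\bra{2}$. Completeness reduces to the four scalar identities $\sum_{\alpha}|a_{k}^{(\alpha)}|^{2}=1$ for $k=0,1,2$ and $\sum_{\alpha}\overline{a_{0}^{(\alpha)}}a_{1}^{(\alpha)}=0$. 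Purity of the output means that the vectors $K_{\alpha}\ket{+_{3}}=\tfrac{1}{\sqrt{3}}[(a_{0}^{(\alpha)}+a_{1}^{(\alpha)})\ket{0}+a_{2}^{(\alpha)}\ket{1}]$ are all proportional to a common vector, which forces $a_{0}^{(\alpha)}+a_{1}^{(\alpha)}=\lambda\,a_{2}^{(\alpha)}$ for some $\alpha$-independent $\lambda\in\mathbb{C}$. Summing $|a_{0}^{(\alpha)}+a_{1}^{(\alpha)}|^{2}=|\lambda|^{2}|a_{2}^{(\alpha)}|^{2}$ over $\alpha$ and using the four completeness identities yields $|\lambda|^{2}=2$, so the output is proportional to $\lambda\ket{0}+\ket{1}$ with $|\lambda|=\sqrt{2}$; after normalization and absorption of the phase of $\lambda$ by a diagonal (hence FI) unitary, this is exactly $\ket{\psi}=\sqrt{2/3}\ket{0}+\sqrt{1/3}\ket{1}$.

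Achievability is then finished by exhibiting one FI map realizing each of the three targets: the identity for $\ket{+_{3}}$, the erasure map $\{\ket{0}\bra{j}\}_{j=0,1,2}$ for $\ket{0}$, and the two-Kraus FI map $K_{1}=\ket{0}\bra{0}+\tfrac{1}{\sqrt{2}}\ket{1}\bra{2}$, $K_{2}=\ket{0}\bra{1}+\tfrac{1}{\sqrt{2}}\ket{1}\bra{2}$ for $\ket{\psi}$, where a direct check shows $K_{1}^{\dagger}K_{1}+K_{2}^{\dagger}K_{2}=\openone$ and $K_{1}\ket{+_{3}}=K_{2}\ket{+_{3}}=\tfrac{1}{\sqrt{3}}\ket{0}+\tfrac{1}{\sqrt{6}}\ket{1}\propto\ket{\psi}$. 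I expect the main obstacle to be the careful bookkeeping around the ``same form'' condition: each individual $K_{\alpha}$ may have some columns identically zero, but whenever two Kraus operators both have a nonzero entry in a column they must place it in the same row, and it is precisely this observation that lets me encode the family $\{K_{\alpha}\}$ by the single function $f$ and reduce the problem to the three clean cases above.
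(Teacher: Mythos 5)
Your proof is correct and follows essentially the same route as the paper's: a case split on the coherence rank of the output (encoded here via the image of your shape function $f$), reduction to a canonical Kraus form using the permutation invariance of $\ket{+_{3}}$, and the same completeness-plus-proportionality computation that forces $|\lambda|^{2}=2$ in the rank-two case. Your explicit two-Kraus map achieving $\ket{\psi}$ is a slightly cleaner (real, phase-free) variant of the one given in the paper, which needs an extra diagonal unitary to remove a residual phase.
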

\begin{proof}
We will prove the statement by studying states with a fixed coherence
rank that can be obtained from the state $\ket{+_{3}}$ via FI operations.
We start with states of coherence rank 1, i.e., incoherent states
$\ket{i}$. The state $\ket{0}$ can be obtained from the state $\ket{+_{3}}$
via the erasing operation which maps every state onto $\ket{0}$.
As mentioned earlier, this operation is fully incoherent. Any incoherent
state $\ket{i}$ can be obtained from $\ket{+_{3}}$ by first performing
the erasing operation, and then applying a fully incoherent unitary
which transforms $\ket{0}$ to $\ket{i}$.

Before we study states of coherence rank 2, we first consider the
case of coherence rank 3 for simplicity. Since the initial state $\ket{+_{3}}$
has coherence rank 3, we can apply Theorem \ref{thm:FIconv}, stating
that FI transformation between states with the same coherence rank
must be necessarily unitary, i.e., the final state must be $\ket{+_{3}}$
or FI unitary equivalent of it.

We now come to the most difficult part of the proof, where we will
show that for final states of coherence rank 2, the only possible
pure state is given by
\begin{equation}
\ket{\psi}=\sqrt{2/3}|0\rangle+\sqrt{1/3}|1\rangle\label{eq:psi}
\end{equation}
and FI unitary equivalents of it. In particular, we will show that
no other pure state of coherence rank 2 can be obtained from $\ket{+_{3}}$
via FI operations.

According to Theorem \ref{thm:FI}, the only FI maps that are able
to produce states of coherence rank 2 must have Kraus operators of
the form
\begin{equation}
K_{i}=P\left(\begin{array}{ccc}
a_{i} & 0 & c_{i}\\
0 & b_{i} & 0\\
0 & 0 & 0
\end{array}\right)P',
\end{equation}
where $P$ and $P'$ are arbitrary (but fixed $\forall i$) permutations.
Since permutations are FI unitaries and the state $\ket{+_{3}}$ is
invariant under any permutation it is thus sufficient to study Kraus
operators of the form
\begin{equation}
K_{i}=\left(\begin{array}{ccc}
a_{i} & 0 & c_{i}\\
0 & b_{i} & 0\\
0 & 0 & 0
\end{array}\right).
\end{equation}
The completeness condition $\sum_{i}K_{i}^{\dagger}K_{i}=\openone$
imposes that
\begin{align}
\sum_{i}a_{i}c_{i}^{\ast} & =0,\nonumber \\
\sum_{i}|z_{i}|^{2} & =1,\quad z=a,b,c.
\end{align}
Each outcome of such FI maps leads to the following unnormalized state
\begin{equation}
K_{i}|+_{3}\rangle\propto\left(\begin{array}{c}
a_{i}+c_{i}\\
b_{i}\\
0
\end{array}\right).
\end{equation}
Assume now that a deterministic transformation to the state $|\phi\rangle$
is possible. This means that $K_{i}|+_{3}\rangle\propto|\phi\rangle$
$\forall i$. In particular, this implies that $|\phi\rangle\propto\sum_{i}K_{i}|+_{3}\rangle$
and, thus
\begin{equation}
|\phi\rangle\propto\left(\begin{array}{c}
\sum_{i}x_{i}\\
\sum_{i}b_{i}\\
0
\end{array}\right)\quad(x_{i}=a_{i}+c_{i}),
\end{equation}
with the above normalization conditions imposing that $\sum_{i}|x_{i}|^{2}=2$
and $\sum_{i}|b_{i}|^{2}=1$. Moreover, the fact that $K_{i}|+_{3}\rangle\propto K_{j}|+_{3}\rangle$
$\forall i\neq j$ demands that $x_{i}=k_{i}x$ and $b_{i}=k_{i}b$
for some complex numbers $\{k_{i}\}$, $x$ and $b$. Thus,
\begin{equation}
|\phi\rangle\propto\sum_{i}k_{i}\left(\begin{array}{c}
x\\
b\\
0
\end{array}\right)\propto\left(\begin{array}{c}
x\\
b\\
0
\end{array}\right).
\end{equation}
Furthermore, the normalization conditions then lead to $|x|^{2}=2|b|^{2}$.
This means that it must hold that
\begin{equation}
|\phi\rangle\propto\left(\begin{array}{c}
\sqrt{2}e^{i\alpha}\\
1\\
0
\end{array}\right).
\end{equation}
After normalization this state is FI unitary equivalent to the state
$\ket{\psi}$ given in Eq.~(\ref{eq:psi}). These arguments prove
that apart from $\ket{\psi}$ and FI unitary equivalents no other
pure state with coherence rank 2 can be obtained from $\ket{+_{3}}$
via FI operations.

To see that the transformation to $\ket{\psi}$ is indeed possible,
we choose an FI operation given by the following two Kraus operators:
\begin{align}
K_{1} & =\left(\begin{array}{ccc}
i/\sqrt{2} & 0 & 1/\sqrt{2}\\
0 & 1/\sqrt{2} & 0\\
0 & 0 & 0
\end{array}\right),\nonumber \\
K_{2} & =\left(\begin{array}{ccc}
1/\sqrt{2} & 0 & i/\sqrt{2}\\
0 & 1/\sqrt{2} & 0\\
0 & 0 & 0
\end{array}\right).
\end{align}
This map acting on $|+_{3}\rangle$ leads to the state $\sqrt{2/3}e^{i\pi/4}|0\rangle+\sqrt{1/3}|1\rangle$,
which can be then transformed to the aforementioned state with an
FI unitary.
\end{proof}
Theorem \ref{thm:QutritConvertibility} provides severe constraints
on the pure state conversion via FI operations. In particular, it
shows that via FI operations the state $\ket{+_{3}}$ can only be
converted into three different pure states, and their FI unitary equivalents.
Being FI operations a particular class of general incoherent operations,
it seems natural to compare their power. In fact, the latter class
of transformations shows a much richer structure. The question whether
two pure states can be converted into each other via incoherent operations
has been recently addressed in \cite{Du2015b,Winter2015}. The answer
is closely related to the corresponding problem in entanglement theory
which was solved by Nielsen in \cite{Nielsen1999} relying on the
theory of majorization. For two density matrices $\rho$ and $\sigma$
with corresponding eigenvalues $\{p_{i}\}_{i=1}^{d}$ and $\{q_{j}\}_{j=1}^{d}$,
the majorization relation $\rho\succ\sigma$ means that the inequality
\begin{equation}
\sum_{i=1}^{t}p_{i}\geq\sum_{j=1}^{t}q_{j}
\end{equation}
holds true for all $t\leq d$.

It has been shown in \cite{Winter2015} that given two pure states
$\ket{\psi}$ and $\ket{\phi}$ with the same coherence rank, there
exists an incoherent operation transforming $\ket{\psi}$ into $\ket{\phi}$
if and only if $\Delta[\rho_{\phi}]\succ\Delta[\rho_{\psi}]$. Here,
$\Delta[\rho]=\sum_{i}\bk{i|\rho|i}\proj{i}$ denotes full dephasing
in the incoherent basis. The earlier reference \cite{Du2015b} states
this same result without the assumption that both states have the
same coherence rank. However, as pointed out recently by Winter and
Yang \cite{Winter2015} and by Chitambar and Gour \cite{Chitambar2016},
the part of the proof showing the necessity of the majorization condition
for a transformation to be possible has serious flaws. Notwithstanding,
its sufficiency is clearly established. With this we can compare the
power of incoherent and FI transformations. Theorem \ref{thm:FIconv}
already shows a limitation of the latter. While for any pair of incoherent-unitary
inequivalent states of the same coherence rank, the transformation
$|\psi\rangle\to|\phi\rangle$ is possible by incoherent operations
whenever $\Delta[\rho_{\phi}]\succ\Delta[\rho_{\psi}]$ is fulfilled,
this is not possible with FI operations. Moreover, $\Delta[\rho_{\phi}]\succ\Delta[\rho_{+_{3}}]$
for any qutrit-state $|\phi\rangle$ and, therefore, $|+_{3}\rangle$
can be transformed to any qutrit-state by incoherent operations. However,
we have seen in Theorem \ref{thm:QutritConvertibility} that the only
coherence-rank-two state obtainable from this state by deterministic
FI manipulation is $\sqrt{2/3}|0\rangle+\sqrt{1/3}|1\rangle$ (and
its FI unitary equivalents). Thus, this shows that even under the
constraint that the coherence rank decreases the majorization condition
is far from being a sufficient condition for FI transformations.

\subsubsection{Pure state stochastic transformations}

One may ask for the potential of FI operations for stochastic manipulation
of states. Being GI operations a subset of FI operations, the corresponding
optimal protocols of Theorem \ref{thm:sgitrans} provide lower bounds
on the maximal probability of conversion under SFI operations. In
general, these protocols are suboptimal as, for example, from the
previous section we know that $|+_{3}\rangle\to\sqrt{2/3}|0\rangle+\sqrt{1/3}|1\rangle$
can be realized in this case with probability one. Notice, however,
that these bounds can be strengthened since the FI setting allows
to apply a permutation to the input state before implementing the
protocol. With this observation and the insight of Theorem \ref{thm:sgitrans}
we can characterize the optimal probability for SFI transformations
among states with the same coherence rank and establish non-trivial
lower bounds for the case of coherence-rank-decreasing operations
(obviously, a transformation to a state with a higher coherence rank
cannot be accomplished with nonzero probability).
\begin{thm}
The optimal probability of transforming a state into another by FI
operations fulfills
\begin{equation}
P(\rho_{\psi}\to\rho_{\phi})\geq\max_{P}\min_{i}\frac{\bk{i|P\rho_{\psi}P^{\dag}|i}}{\bk{i|\rho_{\phi}|i}},
\end{equation}
where $P$ is any permutation (and the minimization goes over all
$i$ such that $\bk{i|\rho_{\phi}|i}\neq0$ if $r(\psi)>r(\phi)$).
Moreover, the inequality becomes an equality if $r(\psi)=r(\phi)$. \end{thm}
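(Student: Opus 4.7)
The proof plan splits into achievability of the bound (in all rank regimes) and tightness (when $r(\psi)=r(\phi)$); both steps reduce SFI manipulation to a permutation composed with an SGI map and invoke Theorem~\ref{thm:sgitrans}.

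For achievability I would fix any permutation $P$ with underlying index permutation $\pi$ satisfying $R(\phi)\subseteq\pi(R(\psi))$, which is possible precisely when $r(\phi)\leq r(\psi)$, and consider the map $\Lambda(\rho)=\Lambda_{\mathrm{sgi}}(P\rho P^\dagger)$, where $\Lambda_{\mathrm{sgi}}$ is the optimal SGI protocol of Theorem~\ref{thm:sgitrans} taking $P\rho_\psi P^\dagger$ to $\rho_\phi$. The Kraus operators of $\Lambda$ are $D_iP$ with $D_i$ diagonal, so they are individually incoherent and share the common nonzero-entry pattern of $P$; by the trace non-increasing analogue of Theorem~\ref{thm:FI} this is a valid SFI map, whose success probability equals $\min_{i\in R(\phi)}\bk{i|P\rho_\psi P^\dagger|i}/\bk{i|\rho_\phi|i}$. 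Maximizing over admissible $P$ yields the lower bound.

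For the matching upper bound when $r(\psi)=r(\phi)=r$, I would take any SFI map $\Lambda$ realizing $\Lambda(\rho_\psi)=c\,\rho_\phi$ and show that its action on $\rho_\psi$ factors through a permutation. By the trace non-increasing extension of Theorem~\ref{thm:FI}, the Kraus operators $\{K_i\}$ share a common support pattern with at most one nonzero entry per column. Since every nonzero $K_i\ket\psi$ is proportional to $\ket\phi$ and hence has exactly $r$ nonzero components, the restriction $f$ of this pattern to $R(\psi)$ must be a bijection onto $R(\phi)$: any column-collapse would drop the row-support of $K_i\ket\psi$ strictly below $r$. Extending $f$ to a full permutation $\pi$ of $\{1,\ldots,d\}$ and letting $P$ be its matrix, one has $K_i\Pi_{R(\psi)}=P\tilde D_i$ with diagonal $\tilde D_i$ satisfying $\sum_i\tilde D_i^\dagger\tilde D_i\leq\openone$ (obtained by projecting the completeness relation $\sum_iK_i^\dagger K_i\leq\openone$ onto $R(\psi)$). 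Hence $\tilde\Lambda_{\mathrm{sgi}}(\rho)=\sum_i\tilde D_i\rho\tilde D_i^\dagger$ is a legitimate SGI map, and the identity $\Lambda(\rho_\psi)=P\tilde\Lambda_{\mathrm{sgi}}(\rho_\psi)P^\dagger$ recasts the transformation condition as $\tilde\Lambda_{\mathrm{sgi}}(\rho_\psi)=c\,P^\dagger\rho_\phi P$. Theorem~\ref{thm:sgitrans} then caps $c$ by $\min_j\bk{j|\rho_\psi|j}/\bk{j|P^\dagger\rho_\phi P|j}$, which after the substitution $i=\pi(j)$ coincides with $\min_i\bk{i|P\rho_\psi P^\dagger|i}/\bk{i|\rho_\phi|i}$; since this $P$ is admissible on the right-hand side, the upper bound matches the lower one.

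The main obstacle lies in the combinatorial forcing of the second step. The FI/SFI definition a priori allows Kraus operators whose support pattern collapses several columns onto a single row (as in the erasing map appearing in Theorem~\ref{thm:FIconv}), and ruling such implementations out in the equal-rank regime requires simultaneously exploiting the purity of $\ket\phi$ and the coherence-rank preservation to force the support pattern to restrict to a bijection on $R(\psi)$. One also needs the trace non-increasing analogue of Theorem~\ref{thm:FI}, which is straightforward but not explicitly stated earlier. It is precisely this forcing that breaks down when $r(\psi)>r(\phi)$, which is why only the inequality is claimed there.
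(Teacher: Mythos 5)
Your proposal is correct and follows essentially the same route as the paper: the lower bound comes from prepending a permutation to the optimal SGI protocol of Theorem~\ref{thm:sgitrans}, and the equal-rank upper bound comes from forcing every Kraus operator of the SFI map into the form (diagonal)$\times$(permutation) so that the map factors as $\Lambda_{\mathrm{sgi}}(P\rho_{\psi}P^{\dag})$. Your treatment of the upper bound is in fact slightly more careful than the paper's, which loosely asserts that some Kraus operator is ``full-rank'' where your restriction to $R(\psi)$ and extension of the induced bijection to a full permutation is the precise statement needed when $r(\psi)<d$.
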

\begin{proof}
As discussed above, the bound is obvious from Theorem \ref{thm:sgitrans}
as we can use the protocol of the GI setting optimized over all FI
unitary equivalents of the input state. To see that this gives the
actual optimal probability when the input and output state have the
same coherence rank, notice that in this case at least one Kraus operator
of the corresponding SFI map must be full-rank. Since all Kraus operators
must be of the same form, this means that they all must take the form
$K_{i}=D_{i}P$ with some permutation $P$ and diagonal matrices $D_{i}$.
Thus, the most general SFI maps to achieve such transformation are
given by $\Lambda_{\mathrm{sgi}}(P\rho_{\psi}P^{\dag})$ where $\Lambda_{\mathrm{sgi}}$
is any SGI map and $P$ is any permutation.
\end{proof}

\subsubsection{Many-copy transformations}

The most striking limitation of state manipulation under GI operations
is the generic impossibility of distillation and dilution. These are
a consequence of the obstruction to any form of activation proven
in Lemma \ref{noactivation}. It would be very interesting to see
whether, although deterministic one-copy transformations are rather
limited as well for FI operations, distillation and dilution protocols
are possible in this case. Interestingly, we show in the following
that Lemma \ref{noactivation} is no longer true for FI operations.
This leaves open the possibility that many-copy transformations in
this case might have a rich structure.

To prove our claim, it suffices to consider a counterexample. Take
the obvious extension to $\mathbb{C}^{4}$ of the protocol implementing
$|+_{3}\rangle\to\sqrt{2/3}|0\rangle+\sqrt{1/3}|1\rangle$ to see
that $|+_{4}\rangle\to(\sqrt{2}|0\rangle+|1\rangle+|3\rangle)/2$
is possible by FI. For this, a map with Kraus operators given by
\begin{align}
K_{1} & =\left(\begin{array}{cccc}
i/\sqrt{2} & 0 & 1/\sqrt{2} & 0\\
0 & 1/\sqrt{2} & 0 & 0\\
0 & 0 & 0 & 0\\
0 & 0 & 0 & 1/\sqrt{2}
\end{array}\right),\nonumber \\
K_{2} & =\left(\begin{array}{cccc}
1/\sqrt{2} & 0 & i/\sqrt{2} & 0\\
0 & 1/\sqrt{2} & 0 & 0\\
0 & 0 & 0 & 0\\
0 & 0 & 0 & 1/\sqrt{2}
\end{array}\right),
\end{align}
together with an FI unitary transformation does the job. This shows
that $|+_{2}\rangle^{\otimes2}\to(\sqrt{2}|00\rangle+|01\rangle+|11\rangle)/2$
can be done with two copies. If we now trace out the second qubit
we see that this activates the transformation $|+_{2}\rangle\to\rho$
where
\begin{equation}
\rho=\left(\begin{array}{cc}
3/4 & 1/4\\
1/4 & 1/4
\end{array}\right).
\end{equation}
This transformation is clearly impossible with one copy because of
similar arguments used before. Since $\rho$ is full-rank, the transformation
$|+_{2}\rangle\to\rho$ could only be implemented by an FI operation
with full-rank Kraus operators. This means that we can only use compositions
of permutations and GI maps but Corollary \ref{puretomixedGI} tells
us that this would require the diagonal entries of $|+_{2}\rangle\langle+_{2}|$
and $\rho$ to be equal (up to permutations), which is not the case.

\section{Relation to other concepts of quantum coherence}

\label{secrelation}

In this section we will discuss the relation of genuinely incoherent
operations and fully incoherent operations to other concepts of coherence.
We start with a list of alternative incoherent operations recently
proposed in the literature:
\begin{itemize}
\item Maximally incoherent operations (MIO) \cite{Aberg2006}: operations
which preserve the set of incoherent states, i.e., $\Lambda(\rho)\in\mathcal{I}$
for all $\rho\in\mathcal{I}$.
\item Dephasing-Covariant incoherent operations (DIO) \cite{Chitambar2016,Marvian2016}:
operations which commute with dephasing, i.e., $\Delta[\Lambda(\rho)]=\Lambda(\Delta[\rho])$.
\item Translationally invariant operations (TIO) \cite{Gour2009,Marvian2014a,Marvian2015}:
operations which commute with the unitary translation $e^{-itH}$
for some nondegenerate Hermitian operator $H$ diagonal in the incoherent
basis, i.e., $\Lambda[e^{-itH}\rho e^{itH}]=e^{-itH}\Lambda[\rho]e^{itH}$
\footnote{Some references (cf.\ Refs.\ \cite{Chitambar2016,Marvian2016})
allow for degenerate operators $H$. This is a legitimate choice which
is relevant in certain scenarios. However, this induces a different
set of free states and leads to a different resource theory in which
coherence is measured relative to the eigenspaces of $H$. This is
why we do not consider here this possibility.}.
\item Strictly incoherent operations (SIO) \cite{Winter2015,Yadin2015a}:
operations with a Kraus decomposition $\{K_{i}\}$ such that each
Kraus operator commutes with dephasing, i.e., $\Delta[K_{i}\rho K_{i}^{\dagger}]=K_{i}\Delta[\rho]K_{i}^{\dagger}$.
\item Physical incoherent operations (PIO) \cite{Chitambar2016}: maps with
Kraus operators of the form $K_{j}=\sum_{x}e^{i\theta_{x}}|\pi_{j}(x)\rangle\bra{x}P_{j}$
and their convex combinations. Here, $\pi_{j}$ are permutations and
$P_{j}$ is an orthogonal and complete set of incoherent projectors.
\end{itemize}
The following inclusions have been proven in the aforementioned references
(see e.g. \cite{Chitambar2016})
\begin{align}
\mathrm{PIO} & \subset\mathrm{SIO}\subset\mathrm{DIO}\subset\mathrm{MIO,}\\
\mathrm{PIO} & \subset\mathrm{SIO}\subset\mathrm{IO}\subset\mathrm{MIO}.
\end{align}

In the following we will show that FIO and SIO are not subsets of
each other, i.e.,
\begin{equation}
\mathrm{FIO}\nsubset\mathrm{SIO},\,\,\,\,\,\,\,\,\,\,\,\,\,\,\,\,\,\mathrm{SIO}\nsubset\mathrm{FIO.}
\end{equation}
For proving $\mathrm{FIO}\nsubset\mathrm{SIO}$, note that SI operations
with Kraus operators $\{K_{i}\}$ are characterized by the operators
$K_{i}^{\dagger}$ being also incoherent $\forall i$ \cite{Winter2015}.
This amounts to the fact that the Kraus operators not only have at
most one non-zero entry per column but also at most one non-zero entry
per row. FI operations with non-diagonal rank-deficient Kraus operators
clearly fail to fulfill this requirement. Notice that this also implies
that $\mathrm{FIO}\nsubset\mathrm{PIO}$. Finally, it is easy to see
that $\mathrm{SIO}\nsubset\mathrm{FIO}$, since the Kraus operators
of $\mathrm{SIO}$ need not have the same form.

We proceed to show that FI operations are a subset of DI operations,
i.e.,
\begin{equation}
\mathrm{FIO}\subset\mathrm{DIO}.
\end{equation}
Lemma 17 of \cite{Chitambar2016} states that $\Lambda$ is DIO if
and only if $\Lambda(|x\rangle\langle x|)$ is incoherent and $\Delta(\Lambda(|x\rangle\langle x'|)=0$
$\forall x\neq x'$. The first condition is clearly fulfilled by FI
maps. To see that the second condition also holds, notice that the
$(j,j)$ entry of $\Lambda(|x\rangle\langle x'|)$ is given by $\sum_{i}K_{i}(j,x)K_{i}^{\ast}(j,x')$
if $\Lambda$ has Kraus operators $\{K_{i}\}$. For FI maps, either
$K_{i}(j,x)K_{i}^{\ast}(j,x')=0$ $\forall i$ (and the second condition
is then trivially fulfilled for the $(j,j)$ entry) or $K_{i}(m,x),K_{i}(m,x')=0$
$\forall i$ and $\forall m\neq j$. In this last case the $(x,x')$
entry of the normalization condition $\sum_{i}K_{i}^{\dag}K_{i}=\openone$
reads then precisely $\sum_{i}K_{i}^{\ast}(j,x)K_{i}(j,x')=0$, as
we wanted to prove.

Moreover, it is easy to see the following relations:
\begin{align}
\mathrm{FIO} & \subset\mathrm{IO},\\
\mathrm{GIO} & \subset\mathrm{SIO},\\
\mathrm{PIO} & \nsubset\mathrm{GIO}.
\end{align}
While $\mathrm{FIO}\subset\mathrm{IO}$ is obvious, $\mathrm{GIO}\subset\mathrm{SIO}$
follows from the fact that genuinely incoherent Kraus operators are
all diagonal. Finally, $\mathrm{PIO}\nsubset\mathrm{GIO}$ holds true
because the Kraus operators of PIO need not be diagonal, and that
PI operations allow for distillation \cite{Chitambar2016}. For qubit
and qutrit maps it further holds that $\mathrm{GIO}\subset\mathrm{PIO}$,
because in this situation GI operations are mixed-unitary, see Theorem
\ref{unitalGIOthm}. However, this inclusion breaks down in higher
dimensions and, in general, we have that $\mathrm{GIO}\nsubset\mathrm{PIO}$
as well. To see this, consider the GI map with Kraus operators given
by
\begin{equation}
K_{1}=diag(1,0,\cos\theta,\cos\theta),\quad K_{2}=diag(0,1,sin\theta,i\sin\theta)
\end{equation}
for some value of $\theta\in(0,\pi/2)$. It has been shown in \cite{Chitambar2016}
that PIO correspond to convex combinations of maps with Kraus operators
of the form $K_{j}=U_{j}P_{j}$ $\forall j$, where the $U_{j}$ are
incoherent unitaries and the $P_{j}$ form an orthogonal and complete
set of incoherent projections. If a PI map is also GI without loss
of generality we can take the $U_{j}$ to be diagonal. Thus, according
to Theorem \ref{kraus}, if the above map was also in PIO, it is necessary
that there exists a linear combination of the Kraus operators $L=\alpha K_{1}+\beta K_{2}$
($\alpha,\beta\neq0$) such that for every $i\in\{1,2,3,4\}$ we either
have $L_{ii}=0$ or $|L_{ii}|=\sqrt{p}$ for some fixed $p\in(0,1]$.
Since
\begin{equation}
L=diag(\alpha,\beta,\alpha\cos\theta+\beta\sin\theta,\alpha\cos\theta+i\beta\sin\theta),
\end{equation}
the above condition can only hold for $i=1,2$ if $|\alpha|=|\beta|=\sqrt{p}$.
This furthermore imposes that both $|L_{33}|=\sqrt{p}$ and $|L_{44}|=\sqrt{p}$
since we cannot have then that these entries vanish. However, the
first condition then leads to $\textrm{Re}(\alpha^{*}\beta)=0$ and
the second to $\textrm{Im}(\alpha^{*}\beta)=0$. Hence, we would need
that $\alpha=\beta=0$, which cannot be. Thus, the given GI map is
not in PIO.

Last, it holds that FIO and TIO are not subsets of each other. To
see that $\mathrm{FIO}\nsubset\mathrm{TIO}$, it suffices to note
that permutations are not TIO since the only unitary operations in
TIO are diagonal \cite{Marvian2016}. In order to prove that $\mathrm{TIO}\nsubset\mathrm{FIO}$,
one can consider the fully depolarizing qubit map, which maps every
qubit state to $\openone/2$. This map is clearly in TIO and has Kraus
operators $K_{i}=\sigma_{i}/2$ with $i=0,1,2,3$ where $\sigma_{0}$
is the identity and the others are the standard Pauli matrices. Since
the Kraus operators do not have the same form the fully depolarizing
map cannot be an FI operation. On the other hand, we have that $\mathrm{GIO}\subset\mathrm{TIO}$,
where the inclusion is strict. This is because the Kraus operators
$\{K_{i}\}$ of a GI operation are all diagonal and, hence, $[K_{i},H]=0$
$\forall i$. This ensures that every GI operation is TI. The erasing
operation $\Lambda[\rho]=\ket{0}\bra{0}$ $\forall\rho$ is an example
of a TI operation, which is clearly not GI.

\section{Conclusions}

The physical setting in other resource theories such as entanglement
clearly defines the set of free operations. However, in the current
efforts to develop a resource theory of coherence, although it is
clear that free states should correspond to incoherent states, the
notion of free operations is not completely clear from the physical
context. Actually, it has been recently discussed \cite{Chitambar2016,Marvian2016}
that the incoherent operations considered in the standard resource
theory of coherence of Baumgratz \emph{et al}. \cite{Baumgratz2014},
although mathematically consistent, have not been provided with a
clear physical interpretation. On the other hand, it has also been
found that physically more plausible operations can lead to theories
with very limited protocols for resource manipulation \cite{Chitambar2016}.
In this article we have considered two alternative frameworks of coherence
that stem from very clear and simple principles and have thoroughly
analyzed the power of the resulting resource theories.

In particular, we introduced the concept of genuine quantum coherence.
This concept can be reduced to one simple requirement: that the genuinely
incoherent quantum operation preserves all incoherent states. The
concept obtained in this way captures coherence under additional constrains
such as energy preservation, and falls into the class of unspeakable
coherence \cite{Marvian2016}. We provide a general characterization
of genuinely incoherent operations via Schur maps, and use this result
to prove strong limitations of this framework. In particular, genuinely
incoherent operations do not have a golden unit, and also do not allow for asymptotic state transformations in the form of distillation or dilution. On the single-copy
level, genuinely incoherent operations only allow for transformations
to states with the same diagonal elements. Nevertheless, more general
transformations are possible via stochastic GI operations. One of
our main results is the optimal probability for this procedure in
Theorem \ref{thm:sgitrans}.

We also show that genuinely incoherent operations are incoherent for
any Kraus decomposition, i.e., they cannot produce coherence regardless
of their particular experimental implementation. Starting from this
result, we introduced and studied the class of fully incoherent operations:
these are all operations which are incoherent in any Kraus decomposition.
We provide a complete characterization of this set of operations,
and show that it is strictly larger than the set of genuinely incoherent
operations. On the other hand, we show that state transformations
are very limited also in this more general framework: there is still
no ``maximally coherent state'' which would allow for transformations
to all other states, and deterministic pure state transformations
are only possible between very restricted families of states. It remains
open, however, whether distillation and dilution procedures are generally
possible in this framework. Since pure incoherent states can be transformed
into each other via fully incoherent operations, this concept characterizes
speakable coherence \cite{Marvian2016}.

We also analyzed in detail the relation between GI and FI operations
with other operations that have been previously considered. Since
our operations represent quite restrictive physical conditions, they
are usually also implementable in other frameworks. Thus, we hope
that the results obtained here regarding coherence manipulation under
GI and FI operations could also be used as building blocks for protocols
in less-restrictive frameworks.

The results presented in this work lead to significant insights about
the limits of any resource theory of quantum coherence. If we demand
that a resource theory should have a golden unit, i.e., a unique state
from which all other states can be obtained via the free set of operations,
it turns out that any such resource theory must contain free operations
that can create coherence in some experimental realization. At this
point, it is interesting to mention that the set of physically incoherent
operations also does not have a golden unit \cite{Chitambar2016}.
It is an interesting question if these concepts in combination with
the framework proposed in our paper will lead to a resource theory
with a golden unit. We leave this question open for future research.
\begin{acknowledgments}
The authors are very grateful to Eric Chitambar for suggesting the
example that shows that $\mathrm{GIO}\nsubset\mathrm{PIO}$. We also
acknowledge discussions with Gerardo Adesso, Manabendra Nath Bera,
Gilad Gour, Iman Marvian, Swapan Rana, and Andreas Winter. AS acknowledges
funding by the Alexander von Humboldt-Foundation. The research of
JIdV was funded by the Spanish MINECO through grants MTM2014-54692-P
and MTM2014-54240-P and by the Comunidad de Madrid through grant QUITEMAD+CM
S2013/ICE-2801.
\end{acknowledgments}

\appendix

\section{\label{sec:MathematicalStructure}Mathematical structure of the set
of GI maps}

A genuinely incoherent operation $\Lambda_{\mathrm{gi}}$ will be
called \emph{extremal in the set of GI operations} if it cannot be
written as
\begin{equation}
\Lambda_{\mathrm{gi}}[\rho]=p\Lambda_{\mathrm{gi}}'[\rho]+(1-p)\Lambda_{\mathrm{gi}}''[\rho],
\end{equation}
with GI operations $\Lambda'_{\mathrm{gi}}\neq\Lambda''_{\mathrm{gi}}$
and probability $0<p<1$. It is straightforward to see that the set
of genuinely incoherent operations is compact and convex, which ensures
that any GI operation can be written as a convex combination of extremal
GI operations \cite{Rockafellar1970}.

A unital operation $\Lambda_{\mathrm{u}}$ is an operation which preserves
the maximally mixed state: $\Lambda_{\mathrm{u}}(\openone/d)=\openone/d$.
It is easy to see that any GI operation is unital, but there exist
unital operations which are not GI. We will call a GI operation \emph{extremal
in the set of unital operations} if it cannot be written as
\begin{equation}
\Lambda_{\mathrm{gi}}[\rho]=p\Lambda_{\mathrm{u}}'[\rho]+(1-p)\Lambda_{\mathrm{u}}''[\rho],
\end{equation}
with some unital operations $\Lambda_{\mathrm{u}}'\neq\Lambda_{\mathrm{u}}''$
and probability $0<p<1$.

Clearly, a GI operation which is extremal in the set of unital operations
must also be extremal in the set of GI operations. As we will see
in the following theorem, the inverse of this statement is true as
well.
\begin{lem}
\label{thm:extremal}A GI map is extremal in the set of GI maps if
and only if it is extremal in the set of unital maps.\end{lem}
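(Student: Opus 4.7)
The plan is to prove the two directions separately, noting that one is essentially immediate from the inclusion of sets, while the other relies on a single extremality observation for pure states.

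For the easy direction, suppose that $\Lambda_{\mathrm{gi}}$ is extremal in the set of unital maps. Any convex decomposition $\Lambda_{\mathrm{gi}}=p\Lambda'_{\mathrm{gi}}+(1-p)\Lambda''_{\mathrm{gi}}$ into GI maps is a fortiori a convex decomposition into unital maps (since every GI map is unital, as noted in Sec.~\ref{secGIchar}). Extremality in the larger class then forces $\Lambda'_{\mathrm{gi}}=\Lambda''_{\mathrm{gi}}=\Lambda_{\mathrm{gi}}$, so $\Lambda_{\mathrm{gi}}$ is also extremal in the set of GI maps.

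The nontrivial direction is to show that if $\Lambda_{\mathrm{gi}}$ is extremal in the set of GI maps, then any decomposition of it into unital maps is trivial. The key idea is that any unital decomposition of a GI map is automatically a GI decomposition. Concretely, given $\Lambda_{\mathrm{gi}}=p\Lambda'_{\mathrm{u}}+(1-p)\Lambda''_{\mathrm{u}}$ with unital maps and $0<p<1$, I would evaluate both sides on each incoherent basis state $|i\rangle\langle i|$. Using the defining property $\Lambda_{\mathrm{gi}}[|i\rangle\langle i|]=|i\rangle\langle i|$, this yields
\begin{equation}
|i\rangle\langle i|=p\,\Lambda'_{\mathrm{u}}[|i\rangle\langle i|]+(1-p)\,\Lambda''_{\mathrm{u}}[|i\rangle\langle i|].
\end{equation}
Since $|i\rangle\langle i|$ is a pure state and hence an extreme point of the convex set of density matrices, both $\Lambda'_{\mathrm{u}}[|i\rangle\langle i|]$ and $\Lambda''_{\mathrm{u}}[|i\rangle\langle i|]$ must equal $|i\rangle\langle i|$. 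This holds for every $i$, so both $\Lambda'_{\mathrm{u}}$ and $\Lambda''_{\mathrm{u}}$ preserve all pure incoherent states. By linearity they then preserve all incoherent states, and therefore by definition $\Lambda'_{\mathrm{u}}$ and $\Lambda''_{\mathrm{u}}$ are themselves GI.

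Once this is established, extremality of $\Lambda_{\mathrm{gi}}$ within the GI class immediately forces $\Lambda'_{\mathrm{u}}=\Lambda''_{\mathrm{u}}=\Lambda_{\mathrm{gi}}$, completing the proof. The only conceptual step is the observation that the pure-state extremality in the density-matrix simplex combined with the invariance property \eqref{eq:GIO} promotes a unital decomposition to a GI decomposition; no explicit use of the Schur-map characterization of Theorem~\ref{thm:Schur} or of Kraus-operator manipulations is required, so I do not anticipate a serious obstacle.
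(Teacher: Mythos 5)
Your proof is correct, but it takes a different route from the paper's. The paper proves the nontrivial direction by going through Kraus representations: if $\Lambda_{\mathrm{gi}}=p\Phi+(1-p)\Phi'$ with unital $\Phi,\Phi'$, then the union $\{\sqrt{p}K_i,\sqrt{1-p}K'_j\}$ of their Kraus operators is a Kraus representation of $\Lambda_{\mathrm{gi}}$, and since by Theorem~\ref{thm:Schur} \emph{every} Kraus representation of a GI map consists of diagonal operators, $\Phi$ and $\Phi'$ are forced to be GI. You instead apply the decomposition to each basis state $|i\rangle\langle i|$ and invoke the fact that pure states are extreme points of the set of density matrices, which forces both components to fix every $|i\rangle\langle i|$ and hence, by linearity, every incoherent state. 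Your argument is more elementary in that it uses only the defining fixed-point property \eqref{eq:GIO} and no structural input about Kraus decompositions; it also proves something slightly stronger, namely that a GI map extremal in the GI set is extremal in the set of \emph{all} quantum channels (your step never uses unitality of $\Lambda'_{\mathrm{u}}$ and $\Lambda''_{\mathrm{u}}$), from which extremality in the unital subset follows a fortiori. The paper's route has the advantage of fitting naturally into its broader Schur-map machinery, which it reuses immediately afterwards to classify extremal GI maps via the extremality criterion for unital maps, but as a self-contained proof of this lemma your version is leaner.
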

\begin{proof}
As mentioned above the lemma, the implication from right to left is
clear. To see the other direction, we show that every map which is
not extremal in the set of unital maps is also not extremal in the
set of GI maps. Assume that $\Lambda$ is a GI map which is not extremal
in the set of unital maps, i.\ e.\ there exist unital maps $\Phi$
and $\Phi'$ such that $\Lambda=p\Phi+(1-p)\Phi'$ for some $p\in(0,1)$.
This means that if $\{K_{i}\}$ ($\{K'_{j}\}$) is a Kraus representation
of $\Phi$ ($\Phi'$), then $\{\sqrt{p}K_{i},\sqrt{1-p}K'_{j}\}$
is a Kraus representation of $\Lambda$. Since every Kraus representation
of a GI map is diagonal, the operators $\{K_{i}\}$ and $\{K'_{j}\}$
have to be then all diagonal. Thus, $\Phi$ and $\Phi'$ must be GI
maps as well and $\Lambda$ is therefore not extremal in the set of
GI maps.
\end{proof}
This lemma reveals a close connection between GI operations and unital
operations, and will be used to prove several important statements
on the structure of these maps in the following.

It has been noted in Sec.\ \ref{secGIchar} that mixed-unitary maps,
i.\ e.
\begin{equation}
\Lambda(\rho)=\sum_{i}p_{i}U_{i}\rho U_{i}^{\dagger}\label{eq:MU}
\end{equation}
with $\{p_{i}\}$ probabilities and $\{U_{i}\}$ unitary matrices,
are GI maps when the matrices $U_{i}$ are all diagonal. It is natural
to ask whether the converse is true. Here we show that the answer
depends on the dimensions. Every GI operation on $\mathcal{L}(\mathbb{C}^{2})$
and $\mathcal{L}(\mathbb{C}^{3})$ turns out to be a mixed-unitary
map but not otherwise.
\begin{thm}
\label{thm:qutrit}Every GI operation on $\mathcal{L}(\mathbb{C}^{2})$
and $\mathcal{L}(\mathbb{C}^{3})$ is mixed-unitary, i.e., can be
written as in Eq.~(\ref{eq:MU}). \end{thm}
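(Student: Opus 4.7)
The plan is to work through the Schur-matrix representation of Theorem~\ref{thm:Schur}. A GI map $\Lambda$ corresponds to a PSD matrix $A$ with $A_{ii}=1$ (a \emph{correlation matrix}), and a map of the form \eqref{eq:unital-1} has Schur matrix
\begin{equation}
A_{ij} = \sum_k p_k\, e^{i(\phi_{ik}-\phi_{jk})} = \sum_k p_k\, (v_k v_k^\dagger)_{ij},
\end{equation}
where $v_k$ is the vector with entries $e^{i\phi_{lk}}$. Hence the theorem is equivalent to the purely matrix-theoretic statement that for $d=2,3$, every $d\times d$ correlation matrix lies in the convex hull of the rank-one correlation matrices $\{vv^\dagger : |v_l|=1\ \forall l\}$, while for $d\geq 4$ this fails.

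For $d=2$ one writes $A=\bigl(\begin{smallmatrix}1 & z\\ \bar z & 1\end{smallmatrix}\bigr)$ with $|z|\le 1$ and gives an explicit two-term decomposition: if $z\neq 0$, take $A=\tfrac{1+|z|}{2} v_+v_+^\dagger + \tfrac{1-|z|}{2} v_-v_-^\dagger$ with $v_\pm=(1,\pm z/|z|)^T$; the case $z=0$ is handled by $v_\pm=(1,\pm 1)^T$ with weights $1/2$. For $d=3$ the plan is to show that the extreme points of the (compact, convex) set of $3\times 3$ correlation matrices are exactly the rank-one ones; then Carathéodory's theorem finishes the argument. The key input is the classical Li--Tam / Grone--Pierce rank bound: an $n\times n$ complex correlation matrix that is extreme in this convex set has rank $r$ with $r^2\le n$. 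For $n=3$ this forces $r=1$, so every extreme $3\times 3$ correlation matrix is of the form $vv^\dagger$ with $|v_l|=1$, which is the Schur matrix of the unitary conjugation by $\mathrm{diag}(v_1,v_2,v_3)$. Writing an arbitrary GI qutrit map as a convex combination of such extremes gives the required mixed-unitary form.

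To prove the negative part for $d\geq 4$, I would exhibit a concrete extreme correlation matrix of rank $2$ in dimension $4$ and verify (i) it is PSD with unit diagonal, and (ii) it cannot be expressed as a nontrivial convex combination of correlation matrices --- equivalently, using Lemma~\ref{thm:extremal}, that the associated GI Schur map is extreme in the unital set. A natural candidate is the Gram matrix of four unit vectors in $\mathbb{C}^2$ arranged in generic position; extremality follows because rank-one perturbations preserving PSDness and the unit diagonal are overdetermined once $r^2 = n$.

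The main obstacle is the rank bound for extreme correlation matrices in $d=3$. If the external reference to Li--Tam / Grone--Pierce is to be avoided, I would prove it by a direct perturbation argument: given a correlation matrix $A$ of rank $r\ge 2$, parametrize Hermitian perturbations $A\pm tH$ that preserve PSDness and vanish on the diagonal; the space of such $H$ has real dimension $r^2 - r$ (from Hermitian matrices on the range of $A$ with zero diagonal in the standard basis), which is strictly positive for $r\ge 2$, so $A$ is not extreme. Pinning down this dimension count carefully, and its compatibility with the unit-diagonal constraint expressed in the eigenbasis of $A$, is the step that requires the most care; everything else is linear algebra plus an application of Theorem~\ref{thm:Schur} and Lemma~\ref{thm:extremal}.
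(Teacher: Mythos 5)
Your main route is correct and genuinely different in framing from the paper's, although the two ultimately rest on the same dimension count. The paper works at the channel level: it invokes Watrous's criterion that a unital map with linearly independent Kraus operators $\{K_i\}$ is extremal in the unital set iff the matrices $K_i^{\dagger}K_j\oplus K_jK_i^{\dagger}$ are linearly independent, notes that for diagonal Kraus operators these all live in the $3$-dimensional space of diagonal matrices (doubled), so any GI qutrit map with two or more Kraus operators is non-extremal; Lemma \ref{thm:extremal} transfers this to the GI set and compactness/convexity finishes. You work instead on the Schur side: the affine isomorphism $\Lambda\leftrightarrow A$ of Theorem \ref{thm:Schur} identifies GI maps with the complex elliptope, diagonal-unitary conjugations with the rank-one correlation matrices $vv^{\dagger}$ with $|v_l|=1$, and the theorem with the statement that every $3\times3$ correlation matrix lies in the convex hull of the rank-one ones; the Li--Tam/Grone--Pierce bound $r^2\le n$ for extreme points then forces $r=1$ for $n\le 3$. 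Note that for a minimal diagonal Kraus representation one has $A=\sum_i k_ik_i^{\dagger}$ with $k_i$ the diagonal of $K_i$, so $\mathrm{rank}(A)$ equals the number of Kraus operators and linear independence of the $K_i^{\dagger}K_j$ is exactly linear independence of the $r^2$ vectors $\bar k_i\odot k_j$ in $\mathbb{C}^n$; your cited rank bound is thus the paper's counting argument in convex-geometry clothing. What your route buys is a fully explicit, elementary $d=2$ case (the paper instead quotes the fact that every unital qubit channel is mixed-unitary) and a direct link to the classical correlation-matrix literature, which also explains the failure at $d\ge4$.

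Two repairs are needed. In the $d=2$ decomposition, $(v_+v_+^{\dagger})_{12}=v_1\bar v_2=\bar z/|z|$, so your weights reproduce $\bar z$ rather than $z$ in the off-diagonal; take $v_{\pm}=(1,\pm\bar z/|z|)^{T}$ (or flip the convention for $vv^{\dagger}$). More importantly, the dimension count in your self-contained fallback is wrong: writing $A=V^{\dagger}V$ with unit columns $u_1,\dots,u_n\in\mathbb{C}^{r}$, the admissible perturbations are $V^{\dagger}HV$ with $H$ Hermitian on $\mathbb{C}^{r}$ subject to the $n$ real constraints $u_i^{\dagger}Hu_i=0$, so the relevant dimension is $r^2-n$, not $r^2-r$. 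Your count $r^2-r>0$ for all $r\ge2$ would rule out extreme correlation matrices of rank $2$ in every dimension, contradicting the rank-two extreme points you yourself need for the $d\ge4$ converse. With $r^2-n$ the argument still closes for $n=3$ (since $4-3=1>0$), so the fallback is salvageable, but as stated it proves too much.
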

\begin{proof}
The case $d=2$ follows from the well-known fact that every unital
operation on $\mathcal{L}(\mathbb{C}^{2})$ is mixed-unitary \cite{Nielsen10}.
Hence, it must be in particular true for GI maps.

The case $d=3$ requires more work. However, this happens to be equivalent
to exercise 4.1 in \cite{Watrous2011}. For the sake of completeness
we provide a full proof here. We recall from above that every GI operation
can be written as a convex combination of extremal GI operations.
Clearly, diagonal unitaries are extremal GI operations. Thus, the
proof is complete if we show that diagonal unitaries are the only
extremal GI operations for qutrits.

For this, let $\Lambda(\cdot)=\sum_{i}K_{i}\cdot K_{i}^{\dag}$ be
a Kraus decomposition of an arbitrary GI map, where without loss of
generality the diagonal $\{K_{i}\}$ are taken to be linearly independent
(i. e. a so-called minimal representation \cite{Holevo2012}). Since
a map is then unitary if and only if $|i|=1$, we have to show that
$\Lambda$ is not extremal whenever $|i|\geq2$. It is shown in Theorem
4.21 in \cite{Watrous2011} that a unital map on $\mathcal{L}(\mathbb{C}^{d})$
given by a set of linearly independent Kraus operators $\{K_{i}\}$
is extremal in the set of unital maps if and only if the collection
of $|i|^{2}$ $d^{2}\times d^{2}$ matrices
\begin{equation}
\mathcal{K}_{ij}=\left(\begin{array}{cc}
K_{i}^{\dag}K_{j} & 0\\
0 & K_{j}K_{i}^{\dag}
\end{array}\right)\label{eq:K}
\end{equation}
is linearly independent. Notice that in the case of GI operations
it holds that $[K_{i}^{\dag},K_{j}]=0$ and, therefore, all matrices
$\mathcal{K}_{ij}$ are of the form $D_{ij}\oplus D_{ij}$ for diagonal
matrices $D_{ij}=K_{i}^{\dagger}K_{j}$. In our case $d=3$, the matrices
$\mathcal{K}_{ij}$ span then at most a 3-dimensional subspace. However,
if $|i|\geq2$ we have at least four matrices $\mathcal{K}_{ij}$,
and it is hence impossible that all of them are linearly independent.
Thus, $\Lambda$ is not extremal in the set of unital maps whenever
$|i|\geq2$. Lemma \ref{thm:extremal} ensures that a GI map which
is not extremal in the set of unital maps is also not extremal in
the subset of GI maps. \end{proof}
\begin{thm}
For every $d\geq4$ there exists a GI map which is not mixed-unitary.\end{thm}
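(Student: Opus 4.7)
The plan is to exploit the characterization of extremal unital maps already invoked in the proof of Theorem \ref{thm:qutrit}. First I would observe that, since every genuinely incoherent unitary is diagonal (Theorem \ref{thm:Schur}), a mixed-unitary GI map is precisely a convex combination of diagonal unitaries. Hence a GI map that is extremal in the convex set of GI maps and is not itself a diagonal unitary cannot be mixed-unitary. By Lemma \ref{thm:extremal} it suffices to produce a GI map that is extremal in the set of unital maps and has more than one (linearly independent) Kraus operator.

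Next I would recall the extremality test from Watrous's Theorem 4.21 used in the proof of Theorem \ref{thm:qutrit}: a unital map with $n$ linearly independent Kraus operators $\{K_i\}_{i=1}^{n}$ is extremal iff the $n^{2}$ matrices $\mathcal{K}_{ij}=(K_i^{\dagger}K_j)\oplus(K_j K_i^{\dagger})$ are linearly independent. For a GI map all $K_i^{\dagger}K_j$ are diagonal, so the $\mathcal{K}_{ij}$ live in a space of dimension at most $d$. Thus linear independence of $n^{2}$ such matrices forces $n^{2}\leq d$, and excluding the unitary case ($n=1$) demands $d\geq 4$. This exactly matches the claimed regime and suggests trying $n=2$ at $d=4$.

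For the explicit witness in $d=4$ I would simply reuse the GI map already introduced in the excerpt to separate $\mathrm{GIO}$ from $\mathrm{PIO}$, namely
\begin{equation}
K_{1}=\mathrm{diag}(1,0,\cos\theta,\cos\theta),\quad K_{2}=\mathrm{diag}(0,1,\sin\theta,i\sin\theta),
\end{equation}
with $\theta\in(0,\pi/2)$. Trace preservation $K_1^{\dagger}K_1+K_2^{\dagger}K_2=\openone$ is immediate. The four diagonal vectors encoding $\{K_i^{\dagger}K_j\}_{i,j=1}^{2}$ read
\begin{equation}
(1,0,c^{2},c^{2}),\ (0,1,s^{2},s^{2}),\ (0,0,cs,ics),\ (0,0,cs,-ics),
\end{equation}
with $c=\cos\theta$, $s=\sin\theta$; a short check (using that the first two components already separate the first two from the last two, and that $(1,i)$, $(1,-i)$ are independent in $\mathbb{C}^{2}$) shows they are linearly independent in $\mathbb{C}^{4}$. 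Hence the map is extremal in the unital (and thus GI) set, yet is not a unitary, so it cannot be mixed-unitary. For $d>4$ I would pad by setting $K_1(i,i)=1$, $K_2(i,i)=0$ for $5\leq i\leq d$; trace preservation continues to hold and linear independence of the $\mathcal{K}_{ij}$ is inherited from the first four coordinates.

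There is no serious obstacle beyond verifying the example: the conceptual work was already done in Theorem \ref{thm:qutrit}, where the dimension count $n^{2}\leq d$ was derived. The only thing to check carefully is that at $d=4$ the bound can actually be saturated with a legitimate trace-preserving pair, which is what the computation above confirms.
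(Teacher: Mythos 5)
Your proposal is correct and follows essentially the same route as the paper: reduce to extremality in the set of unital maps via Lemma \ref{thm:extremal}, apply Watrous's linear-independence criterion for the matrices $\mathcal{K}_{ij}$ (which for diagonal Kraus operators collapses to independence of the four diagonal vectors $K_i^{\dagger}K_j$), exhibit an explicit two-Kraus-operator example in $d=4$, and pad with $(1,0)$ entries for $d>4$. The only difference is the choice of witness (the paper uses $a_k=1/k$, $b_k=i^k\sqrt{1-a_k^2}$ rather than the $\cos\theta/\sin\theta$ map), which is immaterial.
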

\begin{proof}
We will first show this for $d=4$, and extend it to all dimensions
$d\geq4$ below. We will prove the statement by presenting a GI map
which is extremal but not unitary, and thus cannot be written as a
mixture of diagonal unitaries. According to Lemma \ref{thm:extremal},
it suffices to check extremality in the set of unital maps.

For this, we define two linearly independent Kraus operators $K_{1}=\mathrm{diag}(a_{1},a_{2},a_{3},a_{4})$
and $K_{2}=\mathrm{diag}(b_{1},b_{2},b_{3},b_{4})$ with $a_{k}=1/k$
and $b_{k}=i^{k}\sqrt{1-a_{k}^{2}}$ for $k=1,2,3,4$. It is easy
to check that these two Kraus operators define a genuinely incoherent
quantum operation. We will now show that this operation is extremal,
and thus cannot be written as a mixture of unitaries. For this, we
will use similar arguments as in the proof of Theorem \ref{thm:qutrit}.
In particular, we now have four $16\times16$ matrices $\mathcal{K}_{ij}$
given by the Kraus operators $K_{1}$ and $K_{2}$ via Eq.~(\ref{eq:K}).
The proof is complete by proving that these four matrices are linearly
independent.

For proving this we note that these matrices are directly related
to the following four vectors:
\begin{equation}
t_{i}=|a_{i}|^{2},\quad u_{i}=|b_{i}|^{2},\quad v_{i}=\overline{a_{i}}b_{i},\quad w_{i}=a_{i}\overline{b_{i}}.
\end{equation}
In particular, $\mathcal{K}_{11}=\mathrm{diag}(\boldsymbol{t})\oplus\mathrm{diag}(\boldsymbol{t})$,
$\mathcal{K}_{22}=\mathrm{diag}(\boldsymbol{u})\oplus\mathrm{diag}(\boldsymbol{u})$,
$\mathcal{K}_{12}=\mathrm{diag}(\boldsymbol{v})\oplus\mathrm{diag}(\boldsymbol{v})$,
and $\mathcal{K}_{21}=\mathrm{diag}(\boldsymbol{w})\oplus\mathrm{diag}(\boldsymbol{w})$.
For completing the proof, it is thus enough to show that these four
vectors are linearly independent. This can be done by verifying that
the determinant of the matrix $(\boldsymbol{t},\boldsymbol{u},\boldsymbol{v},\boldsymbol{w})$
is nonzero.

We will now show how the above arguments can also be used for dimensions
$d\geq4$. In this case we define two Kraus operators $K_{1}=\mathrm{diag}(a_{1},\ldots,a_{d})$
and $K_{2}=\mathrm{diag}(b_{1},\ldots,b_{d})$, where $a_{i}$ and
$b_{i}$ are defined in the same way as above for $i\leq4$. For $i>4$
we define $a_{i}=1$ and $b_{i}=0$ \footnote{We chose $a_{i}=1$ and $b_{i}=0$ for simplicity, but any other choice
of parameters $a_{i}$ and $b_{i}$ satisfying $|a_{i}|^{2}+|b_{i}|^{2}=1$
for $i>4$ will also lead to the desired result.}. It can be verified by inspection that the previous arguments also
apply in this situation, thus leading to a genuinely incoherent operation
which is extremal but not unitary. This completes the proof for all
dimensions $d\geq4$ .
\end{proof}
These considerations complete our investigation on the structure of
GI operations.

\section{Wigner-Yanase skew information \protect \protect \protect \protect \\
 is a convex measure of genuine coherence}

\label{WYapp}

Here we will prove that the Wigner-Yanase skew information
\begin{equation}
S_{H}(\rho)=-\frac{1}{2}\mathrm{Tr}\left(\left[H,\sqrt{\rho}\right]^{2}\right)
\end{equation}
is a convex measure of genuine coherence, i.e., it satisfies G1, G2,
and G3 if the Hermitian operator $H$ is nondegenerate and diagonal
in the incoherent basis.

The Wigner-Yanase skew information fulfills G1 since it is nonnegative
and zero if and only if the commutator $[H,\sqrt{\rho}]$ vanishes.
For a nondegenerate Hermitian operator $H$ the commutator $[H,\sqrt{\rho}]$
vanishes if and only if $\rho$ and $H$ are diagonal in the same
basis, which proves condition G1. Condition G2 follows from the facts
that the Wigner-Yanase skew information does not increase under translationally
invariant operations \cite{Marvian2015} and that any GI operation
is translationally invariant with respect to a Hamiltonian $H$ diagonal
in the incoherent basis, as we have observed in Sec.\ \ref{secrelation}.
Since the Wigner-Yanase skew information is convex \cite{Wigner1963},
G3 is also fulfilled.

\section{\label{sec:appendixB}FI maps do not allow to prepare arbitrary mixed
incoherent states from arbitrary input states}

As argued in Sec.\ \ref{SecFI}, pure incoherent states can always
be obtained from any state by some FI operation. In this section we
also mentioned that for mixed incoherent states this is no longer
the case even though they are free states as well. To see a particular
example, take any $d$-dimensional state $\rho$ such that it does
not hold that $\rho_{ii}=1/d$ $\forall i$. It turns out that any
such state cannot be mapped by FI operations to the $d$-dimensional
maximally mixed state $\openone_{d}/d$, which is incoherent. The
reason is the following. Clearly, $\rho$ cannot be mapped by GI operations
to $\openone_{d}/d$ (Corollary \ref{puretomixedGI}). As in the proof
of Theorem \ref{thm:FIconv}, this also excludes any FI map with Kraus
operators whose form is any row permutation of a diagonal matrix.
This is because these maps can be written as $P\Lambda_{\mathrm{gi}}[\rho]P^{\dagger}$
for an arbitrary permutation $P$ and GI map $\Lambda_{gi}$. However,
since the output of the map, $\openone_{d}/d$, is left invariant
under permutations, we would need that $\Lambda_{\mathrm{gi}}[\rho]=\openone_{d}/d$,
which is impossible as argued above. Thus, the remaining possible
FI maps must be such that their Kraus operators are rank-deficient.
However, since Kraus operators of FI maps have all the same form,
it cannot be that the outputs of such maps are supported in the full
$d$-dimensional space. This shows that the transformation $\rho\to\openone_{d}/d$
with $\rho$ as defined above cannot be implemented by FI operations.

 \bibliographystyle{apsrev4-1}
\bibliography{literature}

\end{document}